\newif\ifnotes\notestrue
\definecolor{mygrey}{gray}{0.50}
\newcommand{\notename}[2]{{\textcolor{mygrey}{\footnotesize{\bf (#1:} {#2}{\bf ) }}}}
\newcommand{\pnote}[1]{{\endnote{#1}}}
\newcommand{\notename}[2]{{}}
\newcommand{\pnote}[1]{}
\newcommand{\mnote}[1]{\textcolor{blue}{\small {\textbf{(Mehdi:} #1\textbf{)}}}}
\newcommand{\anote}[1]{\textcolor{red}{\small {\textbf{(Anurag:} #1\textbf{) }}}}
\newcommand{\snote}[1]{\textcolor{magenta}{\small {\textbf{(Srini:} #1\textbf{) }}}}
\definecolor{cobalt}{rgb}{0.0, 0.28, 0.67}
\newcommand{\Ene}{\mathcal{E}}
\newcommand{\vL}{v_{\operatorname{LR}}}
\begin{document}

\title{Sample-efficient learning of quantum many-body systems}
 \author{Anurag\\ Anshu\thanks{Institute for Quantum Computing and Department of Combinatorics and Optimization, University of Waterloo, Canada and Perimeter Institute for Theoretical Physics, Canada. \href{mailto:aanshu@uwaterloo.ca}{aanshu@uwaterloo.ca}} \and 
 Srinivasan \\Arunachalam\thanks{IBM Research. \href{mailto:Srinivasan.Arunachalam@ibm.com}{Srinivasan.Arunachalam@ibm.com}} 
 \and 
 Tomotaka\\ Kuwahara\thanks{Mathematical Science Team, RIKEN Center for Advanced Intelligence Project (AIP), Japan and Interdisciplinary Theoretical \& Mathematical Sciences Program (iTHEMS) RIKEN, Japan. \href{mailto:tomotaka.kuwahara@riken.jp}{tomotaka.kuwahara@riken.jp}}
 \and 
 Mehdi\\ Soleimanifar\thanks{Center for Theoretical Physics, MIT. \href{mailto:mehdis@mit.edu}{mehdis@mit.edu}} }

\date{\today}
 
\clearpage\maketitle
\thispagestyle{empty}
\begin{abstract}
We study the problem of learning the Hamiltonian of a quantum many-body system given samples from its Gibbs (thermal) state. The classical analog of this problem, known as learning graphical models or Boltzmann machines, is a well-studied question in machine learning and statistics. In this work, we give the first sample-efficient algorithm for the quantum Hamiltonian learning problem. In particular, we prove that polynomially many samples in the number of particles (qudits) are necessary and sufficient for learning the parameters of a spatially local Hamiltonian in $\ell_2$-norm. 

Our main contribution is in establishing the strong convexity of the log-partition function of quantum many-body systems, which along with the maximum entropy estimation yields our sample-efficient algorithm. Classically, the strong convexity for partition functions follows from the Markov property of Gibbs distributions. This is, however, known to be violated in its exact form in the quantum case. We introduce several new ideas to obtain an unconditional result that avoids relying on the Markov property of quantum systems, at the cost of a slightly weaker bound. In particular, we prove a lower bound on the variance of quasi-local operators with respect to the Gibbs state, which might be of independent interest. Our work paves the way toward a more rigorous application of machine learning techniques to quantum many-body problems.
\end{abstract}
\newpage
{
  \addtocontents{toc}{\protect\enlargethispage{2\baselineskip}}
  \hypersetup{linkcolor=black,linktoc=all}
   \setcounter{tocdepth}{2} 
  \tableofcontents
}
 
\pagenumbering{gobble}
\clearpage
\pagenumbering{arabic}
\setcounter{page}{1}

\section{Introduction}
The success of machine learning algorithms in analyzing high-dimensional data, has resulted in a surge of interest in applying these algorithms to study quantum many-body systems whose description requires dealing with an exponentially large state space. One important problem in this direction is the \emph{quantum Hamiltonian learning} problem, which has been the focus of many recent theoretical and experimental works \cite{Aradlearning,Arad_learning_open_dynamic,Wiebe2014hamiltonian,wiebe2014LearningImperfect,Flammia2019BaysianHamLearning,experimental_Hamiltonian_learning_nature}. Here, one would like to learn the underlying Hamiltonian of a quantum system given multiple identical copies of its Gibbs (thermal) state. The classical analog of this problem is a central problem in machine learning and modern statistical inference, known as \emph{learning graphical models} or \emph{Boltzmann machines} (aka Ising models). Classically, understanding the learnability of Boltzmann machines was initiated by the works of Hinton and others in the 80s~\cite{ackley1985learning,hinton1986learning}. In the past few years, there has been renewed interest in this subject and has seen significant progress resulting in \emph{efficient provable} learning algorithms for graphical models with optimal sample and time complexity especially for sparse and bounded-degree graphs~\cite{Bresler_learning,Klivans_learning,Interaction_screening,Hamilton2017graphical_models,Wright_graphical_model_regularized}. Thus far, a rigorous analysis of the \emph{quantum} Hamiltonian learning problem with guaranteed sample complexity has been lacking. The main contribution of this work is to provide the first  \emph{sample-efficient} algorithm for this~task.

We now introduce the quantum Hamiltonian learning problem. Consider a $\k$-local Hamiltonian~$H$ acting on $n$ qudits. In general, we can parameterize $H$ by
\begin{align}
    H(\m)=\sum_{\ell=1}^m \m_{\ell} E_{\ell}\nn
\end{align}
where $\m_{\ell}\in\bbR$ and the operators $E_{\ell}$ are Hermitian and $\{E_\ell\}$ forms an orthogonal basis for the space of operators. For instance in the case of qubits, $E_{\ell}$ are tensor product of at most $\k$ Pauli operators that act non-trivially only on spatially contiguous qubits. We let the vector $\m=(\m_1,\dots,\m_m)^\top$ be the vector of \emph{interaction coefficients}. In our setup, without loss of generality we assume the Hamiltonian is traceless, i.e., for the identity operator $E_{\ell}=\iden$, the coefficient $\mu_{\ell}=0$. At a \emph{inverse-temperature} $\b$, the qudits are in the \emph{Gibbs state}  defined as
\begin{align}
    \r_{\b}(\m)=\frac{e^{-\b H(\mu)}}{\Tr[e^{-\b H(\mu)}]}.\nn
\end{align}
In the learning problem, we are given multiple copies of $\r_{\b}(\m)$ and can perform arbitrary \emph{local measurements} on them. In particular, we can obtain all the $\k$-local \emph{marginals} of $\r_{\b}(\m)$ denoted by 
\ba
e_{\ell}=\Tr[\r_{\b}(\m) E_{\ell}]\quad \textit{for }\ell\in[m].\nn
\ea
The goal is to learn the coefficients $\m_{\ell}$ of the Hamiltonian $H$ using the result of these measurements. We call this the $\HLP$. Before stating our main results, we provide further motivations for looking at this problem.

\paragraph{Physics perspective.}
Quantum many-body systems consist of many quantum particles (qudits) that \emph{locally} interact with each other. The interactions between these particles are described by the Hamiltonian of the system. Even though the interactions in the Hamiltonian are local, the state of the whole system can be highly \emph{entangled}. This is not only true at low temperatures when the system is in the lowest energy eigenstate of its Hamiltonian (the ground state), but remains true even at finite temperatures when the state is a mixture of different eigenstates of the Hamiltonian known as the Gibbs or thermal state.

While the underlying fundamental interactions in these systems are long known to be given by Coulomb forces between electrons and nuclei, they are too complicated to be grasped in entirety. Physicists are primarily interested in ``effective interactions" that, if accurately chalked out, can be used to describe a variety of properties of the system. How can such effective interactions be learned in a system as complicated as, for example, the high temperature superconductor? Algorithms for Hamiltonian learning can directly address this problem and provide a suitable approximation to the effective interactions. 

\paragraph{Verification of quantum devices.}
The size of the available quantum computers is increasing and they are becoming capable of running more intricate quantum algorithms or preparing highly entangled states over larger number of qubits.   Due to the noise in these devices, a major challenge that accompanies the scalable development of quantum devices is to efficiently \emph{certify} their~functionality. 
In recent times, one widely used subroutine in quantum algorithms is \emph{quantum Gibbs sampling}. Preparing and measuring the Gibbs state of a given Hamiltonian is used in quantum algorithms for solving semi-definite programs~\cite{Brandao_sdp,Gilyen_sdp,Brandao_sdp2,Gilyen_sdp_2,brandao_sdp_quadratic}, quantum simulated annealing \cite{montanaro2015speading_up_MCMC,harrow2020MCMC}, metropolis sampling~\cite{Temme_metropolis}, quantum machine learning~\cite{wiebe2014quantum}, or quantum simulations at finite temperature \cite{GarnetChen2020Gibbs}. Given near term quantum devices will be noisy, an important problem when implementing these quantum subroutines  is to certify the performance of the quantum Gibbs samplers and to calibrate them. More specifically, it would be ideal to have a \emph{classical} algorithm that given samples from the output of a Gibbs sampler determines if the correct Hamiltonian has been implemented. 

\paragraph{Quantum machine learning for quantum data.}
A popular family of models for describing classical distributions are \emph{graphical models} or \emph{Markov random fields}. These models naturally encode the causal structure between random variables and have found widespread applications in various areas such as social networks, computer vision, signal processing, and statistics (see~\cite{Wright_graphical_model_regularized} for a survey). A simple and extremely well-studied example of such a family is the classical \emph{Ising model} (also known as the \emph{Boltzmann machine}) defined over a graph whose vertices correspond to the random variables $x_i$. A natural distribution that one can associate to this model~is
\ba
\pr[X=x]=\frac{1}{Z}\exp\Big(\sum_{i\sim j} J_{ij} x_i x_j+ \sum_i h_i x_i\Big)\label{eq:s1}
\ea
where $J_{ij},h_i \in \bbR$ are real coefficients and the normalization factor $Z$ is called the \emph{partition function}. This distribution in Eq.~\eqref{eq:s1} is also known as the \emph{Gibbs distribution}. There is a rich body of work on learnability of Ising models given samples from the Gibbs distribution. Remarkably, a sequence of works concluded in showing a classical \emph{efficient} algorithm with a running time quadratic in the number of vertices that outputs estimates of the coefficients $J_{ij}$ and $h_i$ \cite{Bresler_learning,Klivans_learning,Interaction_screening}. Similar results have been also proved for more general graphical models.

Considering these achievements in learning theory and the broad practical application of machine learning algorithms, there has been a rising interest in connecting these techniques to problems in quantum computing and many-body physics. This along with other related problems is loosely referred to as \emph{quantum machine learning}. Is there a natural problem that we can rigorously establish such a connection for it? Thus far, almost-all the proposals we are aware of in this direction are mostly based on  heuristic grounds. One proposal that stands out due to its similarity to the classical case is the problem of learning quantum Ising model (aka quantum Boltzmann machine) or more generally the $\HLP$. 

In this paper, we rigorously show that by applying tools from statistics and machine learning such as maximum entropy estimation, one can get a sample complexity for the $\HLP$ that is \emph{polynomial} in the number of qudits. To the best of our knowledge, this is the first such result that unconditionally obtains a non-trivial sample complexity. We believe our work opens the doors to further study of this problem using insight from machine learning and optimization theory.

\section{Main result}
Motivated by these applications, we now formally define the Hamiltonian learning problem. 
\begin{problem}[Hamiltonian learning problem]
\label{prob:hamiltonianlearning}
 
Consider a $\k$-local Hamiltonian $H(\m)=\sum_{\ell=1}^m \m_\ell E_\ell$ that acts on $n$ qudits and consists of $m$ local terms such that  $\max_{\ell\in[m]}|\m_\ell| \leq 1$. In the $\HLP$, we are given $N$ copies of the Gibbs state of this Hamiltonian
 $$
 \r_{\b}(\m)=\frac{e^{-\beta H(\mu)}}{\tr[e^{-\b H(\mu)}]}
 $$ 
at a fixed inverse-temperature $\b$. Our goal is to obtain an estimate $\hat{\m}=(\hat{\m}_1,\dots,\hat{\m}_m)$ of the coefficients $\mu_k$ such that with probability at least $1-\delta$,
 \begin{align}
    \norm{\m- \hat{\m}}_2 \leq \e,\nn
\end{align}
where $\norm{ \m- \hat{\m}}_2=\left(\sum_{\ell=1}^m |\mu_\ell-\hat{\mu}_\ell|^2\right)^{\frac{1}{2}}$ is the $\ell_2$-norm of the difference of $\mu$ and $\hat{\mu}$.
\end{problem}
Our main result is a sample-efficient algorithm for the $\HLP$.  
\begin{thm}[Sample-efficient Hamiltonian learning]
\label{thm:main}
The $\HLP$~\ref{prob:hamiltonianlearning} can be solved using  
\ba
N=\mathcal{O}\left(\frac{ e^{\orderof{\b^c}}}{\b^{\tilde{c}}\e^2} \cdot m^3\cdot \log\Big(\frac{m}{\d}\Big)\right) \label{eq:s22intro}
\ea
copies of the Gibbs state $\r_{\b}(\m)=e^{-\beta H(\m)}/\tr[e^{-\b H(\m)}]$, where $c,\tilde{c}\geq 1$ are constants that depend on the geometry of the Hamiltonian.
\end{thm} 
As far as we are aware, our work is the first to establish unconditional and rigorous upper bounds on the sample complexity of the $\HLP$. For spatially local Hamiltonians the number of interaction terms $m$ scales as $O(n)$. Hence, our result in \thmref{main} implies a sample complexity polynomial in the number of qudits. 

The number of samples in \eqref{eq:s22intro} increases as $\b \rightarrow \infty$ or $\b \rightarrow~0$. As the temperature increases ($\b \rightarrow~0$), the Gibbs state approaches the maximally mixed state independent of the choice of parameters $\m$. At low temperatures ($\b\rightarrow \infty$), the Gibbs state is in the vicinity of the ground space, which for instance, could be a product state $\ket{0}^{\ot n}$ for the various choices of~$\m$. In either cases, more sample are required to distinguish the parameters $\m$.

To complement our upper bound, we  also obtain a $\Omega(\sqrt{m})$ lower bound for the $\HLP$ with $\ell_2$ norm using a simple reduction to the state discrimination problem. The proof appears in Appendix \ref{learning_lower_bound}. Hence, our upper bound in Theorem~\ref{thm:main} is tight up to polynomial factors.
\begin{thm}
\label{thm:lower_bound}
The number of copies $N$ of the Gibbs state needed to solve the $\HLP$ and outputs a $\hat{\mu}$ satisfying $\|\hat{\mu}-\mu\|_2\leq \varepsilon$ with probability $1-\delta$ is lower bounded by
$$
 N\geq \Omega\Big(\frac{\sqrt{m}+\log(1-\delta)}{\beta\varepsilon}\Big).
$$
\end{thm}
\section{Proof overview}
In order to prove our main result, we introduce several new ideas. In this section, we provide a sketch of the main ingredients in our proof.

\subsection{Maximum entropy estimation and sufficient statistics}\label{sec:Maximum entropy estimation and sufficient statistics}
In statistical learning theory, a conventional method for obtaining the parameters of a probability distribution from data relies on the concepts of \emph{sufficient statistics} and the \emph{maximum entropy estimation}. Suppose $p(x;\m)$ is a family of probability distributions parameterized by  $\m$ that we want to learn. This family could for instance be various normal distributions with different mean or variance. Let $X_1,\dots,X_m\sim p(x;\m)$ be $m$ samples from a distribution in this family. A sufficient \emph{statistic} is a function $T$ of these samples $T(X_1,\dots,X_m)$ such that conditioned on that, the original date set $X_1,\dots,X_m$ does not depend on the parameter $\m$. For example, the sample mean and variance are well known sufficient statistic functions.

After obtaining the sufficient statistic of a given data set given classical samples, there is a natural algorithm for estimating the parameter $\m$: among all the distributions that match the observed statistic $T(X)$ find the one that maximizes the Shannon entropy. Intuitively, this provides us with the least biased estimate given the current samples \cite{jaynes1957informationClassical,Jaynes1982Max_entropy}. This algorithm, which is closely related to the maximum likelihood estimation, is commonly used for analyzing the sample complexity of classical statistical~problems. 

Our first observation when addressing the $\HLP$ is that this method can be naturally extended to the quantum problem \cite{jaynes1957informationQuantum}. Indeed, the maximum entropy principle has already appeared in other quantum algorithms such as \cite{Brandao_sdp2}. More formally, we first show that the marginals $\tr[E_\ell \r]$ for $\ell\in[m]$ form a sufficient statistic for the $\HLP$. 

\begin{prop}[Matching local marginals implies global equivalence]\label{prop:matching marginals}
Consider the following two Gibbs states
\begin{align}
    \r_{\b}(\m)=\frac{e^{-\b\sum_\ell \m_\ell E_\ell}}{\Tr[e^{-\b\sum_\ell \m_\ell E_\ell}]},\quad  \r_{\b}(\l)=\frac{e^{-\b\sum_\ell \l_\ell E_\ell}}{\Tr[e^{-\b\sum_\ell \l_\ell E_\ell}]} \label{eq:b2intro}
\end{align}
such that $\Tr[\r_{\b}(\l) E_\ell]=\Tr[\r_{\b}(\m) E_\ell]$ for all $\ell\in [m]$, i.e. all the $\k$-local marginals of $\r_{\b}(\l)$ match that of~$\r_{\b}(\m)$. Then, we have $\r_{\b}(\l)=\r_{\b}(\m)$, which in turns implies $\l_\ell=\m_\ell$ for $\ell\in [m]$.
\end{prop}
Similar to the classical case discussed above, one implication of \propref{matching marginals} is a method for learning the Hamiltonian $H$: first measure all the $\k$-local marginals of the Gibbs state $e_\ell$, then among all the states of the form \eqref{eq:b2intro}, find the one that matches those marginals. Finding such a state can be naturally formulated in terms of an optimization problem known as the \emph{maximum entropy problem}:
\begin{equation}
\begin{aligned}
\max_{\s} \quad & S(\s)\\
\textrm{s.t.} \quad & \Tr[\s E_\ell]=e_\ell, \quad \forall \ell\in[m]\\
& \s>0, \quad \Tr[\s]=1.\label{eq:r2}
\end{aligned}
\end{equation}
 where $S(\s)=-\tr[\s\log \s]$ is the \emph{von Neumann entropy} of the state $\s$. The optimal solution of this program is a quantum state with a familiar structure \cite{jaynes1957informationQuantum}. Namely, it is a Gibbs state $\r(\l)$ for some set of coefficients $\l=(\l_1,\dots,\l_m)$. The coefficients $\l$ are the \emph{Lagrange multipliers} corresponding to the dual of this program. Indeed, we can write the dual program of Eq.~\eqref{eq:r2} as follows:
\begin{equation}
\begin{aligned}
\mu=\argmin_{\l=(\l_1,\dots,\l_m)} \log Z_{\beta}(\l)+\b\cdot\sum_{\ell=1}^m \l_\ell e_\ell,
\quad & \label{eq:4}
\end{aligned}
\end{equation}
where $Z_{\beta}(\lambda)=\tr\big(e^{-\beta\cdot\sum_\ell \l_\ell E_\ell}\big)$ is the \emph{partition function} at inverse-temperature $\b$. In principle, according to the result of \propref{matching marginals}, we could solve the $\HLP$ by finding the optimal solution of the dual program in~\eqref{eq:4}. Of course, the issue with this approach is that since we have access to limited number of samples of the original Gibbs state $\r_\b(\m)$, instead of the {exact} marginals $e_\ell$, we can only \emph{approximately} estimate the $e_\ell$s. We denote these estimates by $\hat{e}_\ell$. This means instead of solving the dual program \eqref{eq:4}, we solve its \emph{empirical}~version
\begin{equation}
\begin{aligned}
\hat{\mu}=\argmin_{\l=(\l_1,\dots,\l_m)} \quad &\log Z_{\b}(\l)+\b \cdot \sum_{\ell=1}^m \l_\ell \hat{e}_\ell. \label{eq:9}
\end{aligned}
\end{equation}

The main technical problem that we address in this work is analyzing the robustness of the programs \eqref{eq:r2} and \eqref{eq:4} to the statistical error in the marginals as appears in \eqref{eq:9}. This is an instance of a \emph{stochastic optimization} which is a well-studied problem in optimization. In the next section, we review the ingredients from convex optimization that we need in our analysis. 

\subsection{Strong convexity}
One approach to incorporate the effect of the statistical errors in the marginals $e_\ell$ into the estimates for $\m_\ell$ is to use \propref{matching marginals}. It is not hard to extend this proposition to show that if a Gibbs states $\r_{\b}(\l)$ \emph{approximately} matches the marginals of $\r_{\b}(\m)$ up to some error~$\e$, then $\norm{\r_{\b}(\m)-\r_{\b}(\l)}^2_1\leq \mathcal{O}(m \e)$ (see \secref{Local Hamiltonians and quantum Gibbs states} for more details). This bound, however, is not strong enough for our purposes. This is because if we try to turn this bound to a one on the coefficients $\mu_\ell$ of the Hamiltonian, we need to bound $\norm{\log \r_{\b}(\m)-\log \r_{\b}(\l)}$. Unfortunately, the function $\log(x)$ does not have a bounded gradient (i.e., it is not Lipschitz) over its domain and in general $\norm{\log \r_{\b}(\m)-\log \r_{\b}(\l)}$ can be exponentially worse than $\norm{\r_{\b}(\m)-\r_{\b}(\l)}_1$. In order to overcome the non-Lipschitz nature of the logarithmic function and bound $\norm{\log \r_{\b}(\m)-\log \r_{\b}(\l)}$, we prove a property of the dual objective function  \eqref{eq:4} known as the \emph{strong convexity}, which we define now.

\begin{definition}\label{def:strong convexity}
Consider a convex function $f:\bbR^{m}\mapsto \bbR$ with gradient $\nabla f(x)$ and Hessian $\nabla^2 f(x)$ at a point $x$.\footnote{Recall that the entries of the Hessian matrix $\nabla^2 f(x)$ are given by $\frac{\partial^2}{\partial x_i \partial x_j} f(x)$} This function $f$ is said to be $\a$-strongly convex in its domain if it is differentiable and for all $x,y$,
\begin{align}
    f(y)\geq f(x)+\nabla f(x)^\top (y-x) +\frac{1}{2}\a \norm{y-x}^2_2,\nn
\end{align}
or equivalently if its Hessian satisfies
\begin{align}
    \nabla^2 f(x)\succeq \a \iden.\footnotemark\label{eq:6_informal}
\end{align}
\footnotetext{By $A\succeq B$ we mean $A-B$ is positive semidefinite.}
In other words, for any vector $v\in \bbR^m$, it holds that $\sum_{i,j} v_i v_j \frac{\partial^2}{\partial x_i \partial x_j}f(x)\geq \a \norm{v}^2_2$.
\end{definition}

Roughly speaking, strong convexity puts a limit on how \emph{slow} a convex function $f(x)$ changes.\footnote{This should not be confused with a related property called the smoothness which limits how fast the function~grows.} This is particularly useful because given two points $x,y$ and an upper bound on $|f(y)-f(x)|$ and $\nabla f(x)^\top (y-x)$, it allows us to infer an upper bound on $\norm{y-x}_2$. 

For our application, we think of $f$ as being $\log Z_{\b}(\cdot)$. Then the difference $|f(y)-f(x)|$ is the difference between the optimal solution of the original program in Eq.~\eqref{eq:4} and that of its empirical version in Eq.~\eqref{eq:9} which includes the statistical error. We apply this framework to our optimization \eqref{eq:9} in two steps: 
\begin{itemize}[leftmargin=*]
\item[1)] Proving the strong convexity of the objective function: This is equivalent to showing that the log-partition function (aka the free energy) is strongly convex, i.e., $\nabla^2 \log Z_{\b}(\l) \succeq \a \iden$ for some positive coefficient $\alpha$. In particular, this means that the optimization \eqref{eq:9} is a convex program. This result is the main technical contribution of our work and is stated in the following theorem:
\begin{thm}[Informal: strong convexity of log-partition function]\label{thm:strong convexity of log-partition function_informal} Let $H=\sum_{\ell=1}^m \m_\ell E_\ell$ be a $\k$-local Hamiltonian over a finite dimensional lattice with $\norm{\m}\leq 1$. For a given inverse-temperature $\b$, there are constants $c, c'>3$ depending on the geometric properties of the lattice  such that 
\ba
\nabla^2 \log Z_{\b}(\m) \succeq  e^{-\mathcal{O}(\b^c)} \frac{\b^{c'}}{m} \cdot \iden, \label{eq:convexity_bound}
\ea
i.e., for every vector $v\in \bbR^m$ we have $v^T \cdot \nabla^2 \log Z_{\b}(\m) \cdot v \geq e^{-\mathcal{O}(\b^c)}\frac{\b^{c'}}{m} \cdot \norm{v}^2_2$.
\end{thm}
\item[2)] 
Bounding the error in estimating $\m$ in terms of the error in estimating the marginals $e_\ell$:  In this step we show that as long as the statistical error of the marginals is small, using the strong convexity property from step (1), we can still prove an upper bound on the difference between the solutions  of the convex programs~\eqref{eq:4},~\eqref{eq:9}.

We discuss this in more details later in \secref{Stochastic convex optimization_prelim}. The result can be stated as follows:
\begin{thm}[Error bound from strong convexity]\label{thm:Error bound from strong convexity informal}
Let $\delta,\alpha > 0$. Suppose the marginals $e_\ell$ are determined up to error $\d$, i.e., $|e_\ell-\hat{e}_\ell|\leq \d$ for all $\ell\in[m]$. Additionally assume $\nabla^2 \log Z_{\b}(\l) \succeq \a \iden$ and $\norm{\l}\leq 1$. Then the optimal solution to the program \eqref{eq:9} satisfies 
$$
\norm{\m-\hat{\m}}_2\leq \frac{2\b \sqrt{m}\d}{\alpha}
$$
\end{thm}
\end{itemize}
Combining \thmref{strong convexity of log-partition function_informal} and \thmref{Error bound from strong convexity informal}, we obtain the main result of our paper. We now proceed to sketch the proof of~\thmref{strong convexity of log-partition function_informal}.

\subsection{Strong convexity of log-partition function: Review of the classical case}\label{sec:Strong convexity of of log z classical}
In order to better understand the motivation behind our quantum proof, it is insightful to start with the \emph{classical}  Hamiltonian learning problem. This helps us better describe various  subtleties and what goes wrong when trying to adapt the classical techniques to the quantum case. We continue using the quantum notation here, but the reader can replace the Hamiltonian $H$, for instance, with the classical Ising model $H=\sum_{i\sim j} J_{ij} x_i x_j$ (where $x_i\in \{-1,1\}$ and $J_{ij}\in \bbR$).

The entries of the Hessian $\nabla^2 \log Z_{\b}(\m)$ for classical Hamiltonians are given by 
\ba
\frac{\partial^2}{\partial \m_i \partial \m_j} \Big[\log Z_{\b}(\m)\Big]=\Cov[E_i,E_j]\label{eq:s12}
\ea
where $\Cov$ is the covariance function which is defined as $\Cov[E_i,E_j]=\langle E_i E_j\rangle-\langle E_i\rangle \langle E_j\rangle$ with the expectation taken with respect to the Gibbs distribution (i.e., $\langle E\rangle=\Tr[E\cdot \rho_\beta(\m)]$). To prove the strong convexity of the log-partition function at a constant $\b$, using \eqref{eq:s12} it suffices to show that for every vector $v$, we have 
\ba
\sum_{i,j} v_i v_j \frac{\partial^2}{\partial \m_i \partial \m_j} \log Z_{\b}(\m)=\Var\left[\sum_{\ell=1}^m v_\ell E_\ell\right]\geq \Omega(1) \cdot \sum_{\ell=1}^m v_\ell^2.\label{eq:s11}
\ea
Although the operator $\sum_\ell v_\ell E_\ell$ is a local Hamiltonian, note the mismatch between this operator and the original Hamiltonian in the Gibbs state $\sum_{\ell=1}^m \m_\ell E_\ell$. Also note that compared to the inequality \eqref{eq:convexity_bound}, the inequality \eqref{eq:s11} claims a stronger lower bound of $\Omega(1)$.

Before proving Eq.~\eqref{eq:s11}, we remark that an \emph{upper bound} of $\Var[\sum_{\ell=1}^m v_\ell E_\ell]\leq \mathcal{O}(1) \norm{v}_2^2$ is known in literature, under various conditions like the decay of correlations both in classical and quantum settings \cite{Araki1969,Gross1979,Park1995,ueltschi2004cluster,PhysRevX.4.031019,frohlich2015some}. This upper bound intuitively makes sense because the variance of the thermal state of a Hamiltonian and other local observables are expected to be \emph{extensive}, i.e., they scale with the number of particles (spins) or norm of the Hamiltonian, which is replaced by $\norm{v}_2^2$ in our setup. However, in the classical Hamiltonian learning problem, we are interested in obtaining a \emph{lower bound} on the variance. To this end, a crucial property of the (classical) Gibbs distributions that allows us to prove the inequality \eqref{eq:s11} is the conditional independence or the \emph{Markov property} of classical systems.
\begin{definition}[Markov property]\label{def:Markov property}
Suppose the interaction graph is decomposed into three disjoint regions $A$, $B$, and $C$ such that region $B$ ``shields" $A$ from $C$, i.e., the vertices in region $A$ are not connected to those in $C$. Then, conditioned on the sites in region $B$, the distribution of sites in $A$ is independent of those in $C$. This is often conveniently expressed in terms of the conditional mutual information by $I(A:C|B)=0$.
\end{definition}
It is known by the virtue of the Hammersley-Clifford theorem \cite{Hammersley_Clifford} that the family of distributions with the Markov property coincides with the Gibbs distributions. Using this property, we can lower bound $\Var\left[\sum_{\ell=1}^m v_\ell E_\ell\right]$ in terms of variance of local terms $E_\ell$ by \emph{conditioning} on a subset of sites. To this end, we consider a partition of the interaction graph into two sets $A$ and~$B$. The set~$B$ is chosen, suggestively, such that the vertices in $A$ are not connected (via any edges) to each other. We denote the spin configuration of sites in $B$ collectively by $s_B$. Then using the concavity of the variance and the Markov property, we have

\ba
\Var\left[\sum_{\ell=1}^m v_\ell E_\ell\right]&\overset{(1)}\geq \bbE_{s_B}\left[\Var\left[\sum_{\ell=1}^m v_\ell E_\ell \Bigm\vert s_B\right]\right]\nn\\
&\overset{(2)}= \sum_{x\in A} \bbE_{s_B}\left[\Var\left[\sum_{\ell:E_{\ell} \textnormal{ acts on } x} v_{\ell} E_{\ell}\Bigm\vert s_B \right]\right]\nn\\
&\overset{(3)}\geq \Omega(1)\sum_{\ell=1}^m v_\ell^2,\label{eq:s13}
\ea
where inequality $(1)$ follows from the law of total variance, equality $(2)$ can be justified as follows: by construction, the local terms $E_\ell$ either completely lie inside region $B$ or intersect with only one of the sites in region $A$. In the former, the local conditional variance $\Var\left[E_\ell\left\vert s_B\right.\right]$ vanishes, whereas in the latter, the interaction terms $E_{\ell}$ that act on different sites $x\in A$ become uncorrelated and the global variance decomposes into a sum of local variance. Finally, inequality $(3)$ is derived by noticing that at any constant inverse-temperature $\b$, the local variance is lower bounded by a constant that scales as $e^{-\mathcal{O}(\b)}$. By carefully choosing the partitions $A$ and $B$ such that $|A|=\mathcal{O}(n)$, we can make sure that the variance in inequality $(2)$ is a constant fraction of the $\sum_{\ell=1}^m v_\ell^2$ as in \eqref{eq:s13} (see \cite{Montanari_sufficient_statistics,Interaction_screening} for details). This lower bound on variance results in a sample complexity $\mathcal{O}\big( e^{\mathcal{O}(\b)} m(\log m)\e^{-2}\big)$, which compared to our result in \thmref{main} is more efficient (by only a polynomial factor in $m$).

\subsection{Strong convexity of log-partition function: Proof of the quantum case}\label{sec:Strong convexity of of log z quantum}
If we try to directly quantize the proof strategy of the classical case in the previous section, we immediately face several issues. We now describe the challenges in obtaining a quantum proof along with our techniques to overcome them.

\subsubsection{Relating the Hessian to a variance}
The first problem is that we cannot simply express the entries of the Hessian matrix $\nabla^2 \log Z_{\b}(\m)$ in terms of $\Cov[E_i,E_j]$ as in \eqref{eq:s12}. This expression in~\eqref{eq:s12} only holds for Hamiltonians with \emph{commuting} terms, i.e., $[E_i,E_j]=0$ for all $i,j\in[m]$. The Hessian for the non-commuting Hamiltonians takes a complicated form (see \lemref{expression for Hessia} for the full expression) that makes its analysis difficult. Our first contribution is to recover a similar result to \eqref{eq:s11} in the quantum case by showing that, for every $v$, we can still \emph{lower bound} $v^{\top}\cdot \nabla^2 \log Z_{\b}(\m)\cdot v$ by the variance of a suitably defined \emph{quasi-local} operator. We later define what we mean by ``quasi-local'' more formally (see \defref{Quasi-local operators} in the body), but for now one can assume such an operator is, up to some small error, sum of local terms.

\begin{lem}[A lower bound on $\pmb{\nabla^2 \log Z_{\b}(\m)}$]
\label{lem:variance lower bound on Hessianintro}
For any vector $v \in \bbR^m$, we define a quasi-local operator $\quW=\sum_{\ell=1}^m v_\ell \widetilde{E}_\ell$, where the operators $\tilde{E}_\ell$ are defined by 
\begin{align} 
\widetilde{E}_\ell= \int_{-\infty}^\infty f_\beta(t)\ e^{-iHt}\ E_\ell\ e^{iHt} dt.
\end{align}
Here $f_\beta(t)=\frac{2}{\beta\pi}\log \frac{e^{\pi |t|/\beta}+1}{e^{\pi |t|/\beta}-1}$ is defined such that $f_\beta(t)$ scales as $\frac{1}{\b} e^{-\pi |t|/\beta}$ for large $t$ and $f_\beta(t) \propto \log(1/t)$ for $t\to +0$. We claim that 
\ba
\sum_{i,j} v_i v_j\frac{\partial^2}{\partial \m_i \partial \m_j}\log Z_{\b}(\m) \geq \beta^2\Var [\quW] \label{eq:r6_informal}
\ea
\end{lem}
\subsubsection{Lower bounding the variance}\label{sec:Lower bounding the variance}
As a result of \lemref{variance lower bound on Hessianintro}, we see that from here onwards, it suffices to lower bound the variance of the quasi-local operator $\quW=\sum_{\ell=1}^m v_\ell \tilde{E}_\ell$. One may expect the same strategy based on the Markov property in \eqref{eq:s13} yields the desired lower bound. Unfortunately, it is known that a natural extension of this property to the quantum case, expressed in terms of the \emph{quantum conditional mutual information} (qCMI), does not hold. In particular, example Hamiltonians are constructed in \cite{Poulin2008quantumGraphicalModels} such that for a tri-partition $A,B,C$ as in \defref{Markov property}, their Gibbs states have non-zero qCMI, i.e., $I(A:C|B)>0$. Nevertheless, it is \emph{conjectured} that an \emph{approximate} version of this property can be recovered i.e., $I(A:C|B)\leq e^{-\Omega(\dist(A,C))}$. In other words, the approximate property claims that qCMI is exponentially small in the \emph{width} of the shielding region $B$. Thus far, this conjecture has been proved only at sufficiently high temperatures~\cite{Kohtaro_cmi_cluster} and on 1D chains~\cite{Brandao_gibbs_cmi}. Even assuming this conjecture is true, we currently do not know how to recover the argument in \eqref{eq:s13}. We get back to this point in \secref{Discussion and open questions}. Given this issue we ask,
\begin{quote}
\centering
        \emph{Can we obtain an unconditional lower bound on the variance of a quasi-local observable at any inverse-temperature $\b$ without assuming quantum conditional independence?}
\end{quote}
Our next contribution is to give an affirmative answer to this question. To achieve this, we modify the classical strategy as explained below. 

\paragraph{From global to local variance.} One ingredient in the classical proof is to lower bound the global variance $\Var[\sum_\ell v_\ell E_\ell]$ by sum of local conditional variances $\Var[E_\ell|s_B]$ as in \eqref{eq:s13}. We prove a similar but slightly weaker result in the quantum regime. To simplify our discussion, let us ignore the fact that $\quW=\sum_\ell v_\ell \tilde{E}_\ell$ is a quasi-local operator and view it as (strictly) local. Consider a special case in which $v$ is such that the operator $\quW$ is supported on a small number of sites. For instance, it could be that $v_1>0$ while $v_2, \ldots ,v_m=0$. Then the variance $\Var[\quW]$ can be easily related to the local variance $\Var[E_1]$ and since $E_1^2=\iden$, $|\Tr[E_1\rho_{\beta}]|< 1$, we get
$$\Var[\quW]=v_1^2\cdot \br{\Tr[E_1^2\rho_{\beta}]-\Tr[E_1\rho_{\beta}]^2}\geq \Omega(1)\cdot v_1^2$$
We show that even in the general case, where $v_1,\dots,v_m$ are all non-zero, we can still relate $\Var[\quW]$ to the variance of a local operator supported on a constant region. Compared to the classical case in \eqref{eq:s13}, where the lower bound on $\Var[W]$ includes a sum of $\mathcal{O}(m)$ local terms, our reduction to a \emph{single} local variance costs ``an extra factor of $m$" in the strong convexity bound in \thmref{strong convexity of log-partition function_informal}.

Our reduction to local variance is based on the following observation. By applying Haar-random local unitaries, we can remove all the terms of the operator $\quW$ except those that act on an arbitrary qudit at site $i$. We denote the remainder terms by $\widetilde{W}_{(i)}$ defined via 
$$\quW_{(i)} =\quW-\bbE_{U_i\sim \mathrm{Haar}}[U_i^{\dag} \quW U_i].$$
By using triangle inequality this relation implies 
\ba
\Var[\quW] \geq \frac{1}{2}\tr[\quW_{(i)}^2 \r_{\b}]-\bbE_{U_i} \left[\tr[ \quW^2 \cdot U_i\r_{\b}U_i^{\dag}]\right].\label{eq:s20}
\ea
Hence, if we could carefully analyze the effect of the term $\bbE_{U_i}[ \tr[ \quW^2 \cdot U_i\r _{\b}U_i^{\dag}]]$, this will allow us to relate the global variance $\Var[\quW]$ to the local variance $\tr[\quW_{(i)}^2\r_{\b}]$. We discuss this next.
\paragraph{Bounding the effect of local unitaries.} While applying the above reduction helps us to go to an easier local problem, we need to deal with the changes in the spectrum of the Gibbs state due to applying the random local unitaries $U_i$. Could it be that the unitaries $U_i$ severely change the spectral relation between $\quW$ and $\rho_{\b}$? We show that this is not the case, relying on the facts: (1) local unitaries cannot mix up subspaces of $\quW$ and $H$ that are energetically far away and (2) the weight given by the Gibbs state $\rho_\beta$ to nearby subspaces of $H$ are very similar at small $\beta$. Thus, (1) allows us to focus the subspaces that are close in energy and (2) shows that similar weights of these subspaces do not change the variance by much. In summary, we prove:

\begin{prop}[Invariance under local unitaries, informal]\label{prop:Invariance under local unitaries, informal}
Let $U_X$ be a local unitary operator acting on region $X$ that has a constant size. There exists a constant $c\leq 1$ such that 
\begin{align}
&\Tr\left[\quW^2 \cdot U_X \r_{\b} U_X^{\dagger}\right]\leq \left(\Var[\quW]\right)^c. \label{eq:s22}
\end{align} 
\end{prop}
When combined with \eqref{eq:s20}, the inequality \eqref{eq:s22} implies the following loosely stated local lower bound on the global variance:
\begin{equation}
\label{eq:informallocallb}
\Var[\quW]\geq \br{\Tr\left[\quW_{(i)}^2\r_{\b}\right]}^{\frac{1}{c}}.\nn
\end{equation}
With this reduction, it remains to find a constant lower bound on $\Tr[\quW_{(i)}^2\r_{\b}]$. This can be done, again, by applying a local unitary $U$. Roughly speaking, we use this unitary to perform a ``change of basis'' that relates the local variance at finite temperature to its infinite-temperature version. The spectrum of $\r_{\b}$ majorizes the maximally mixed state $\eta$. Hence, by applying a local unitary, we can rearrange the eigenvalues of $\quW^2_{(i)}$ in the same order as that of $\r_{\b}$ such that when applied to both $\r_{\b}$ and $\eta$, we have $\tr[\quW^2_{(i)} U \r_{\b}U^{\dag}]\geq \tr[\quW^2_{(i)}\eta]$. Formally, we show that

\begin{prop}[Lower bound on the local variance, informal]\label{prop:Lower bound on the local variance, informal}
There exists a unitary $U$ supported on $\orderof{1}$ sites such that
$$\Tr\left[\quW_{(i)}^2U\r_{\b} U^{\dagger}\right] \geq \Tr\left[\quW_{(i)}^2\eta\right],$$
where $\eta$ is the maximally mixed state or the infinite temperature Gibbs state.
\end{prop}
In summary, starting from \eqref{eq:s20} and following \propref{Invariance under local unitaries, informal} and \propref{Lower bound on the local variance, informal}, the lower bound on the global variance takes the following local form:
$$\Tr\left[\quW^2\r_{\b}\right]\geq \br{\Tr\left[\quW_{(i)}^2\eta\right]}^{\orderof{1}}.$$
Lower bounding the quantity $\Tr\left[\quW_{(i)}^2\eta\right]$ by a constant is now an easier task, which we explain in more details later in Lemma \ref{lem:Wprimelowb} and Theorem \ref{claim_W_norm_W'_norm}. 

\section{Further discussions}
\subsection{Connection to previous work}\label{sec:Connection to previous work}

A similar problem to $\HLP$ known as the shadow tomography has been considered before \cite{Aaronson_shadow_tomography,Aaronson_shadow_tomography_2,Brandao_sdp2} where instead of the coefficients $\m_\ell$, we want to find a state~$\s$ that approximately matches $\tr[E_\ell\s]\approx_{\e} \tr[E_\ell \r]$ given multiple copies of an unknown state $\r$. It was shown  $\poly(\log m,\log d^n,1/\e)$ copies of $\r$ are sufficient for tomography. The $\HLP$ differs from the shadow tomography problem. Our goal is to estimate the Hamiltonian (i.e. the coefficients $\m_\ell$) within some given error bound. The shadow tomography protocol only concerns with estimating the marginals $\tr[E_\ell \r]$ up to a fixed error and by itself does not imply a bound on the Hamiltonian. Moreover, since the Hamiltonians we consider are spatially local, we only need to measure \emph{local} observables~$E_\ell$. This means we do not need to rely on the whole machinery of the shadow tomography which is applicable even when $E_\ell$ are non-local. We instead use a variant of this method introduced in \cite{Kueng_shadows} or other approaches such as those in \cite{cotler2020tomographyCommuting,bonet2019tomographyCommuting} to estimate $\tr[E_{\ell} \r_{\b}]$.

There have been a number of proposals for the $\HLP$ in the past. In \cite{Aradlearning,Flammia2019BaysianHamLearning,Qi2019learningFromGroundState} learning the Hamiltonian from local measurements is considered. Their approach is based on setting up a linear system of equations whose constraints (i.e., the matrix of coefficients) are determined from the measurement outcomes. The solution of these equations is the parameter $\m_k$ of the Hamiltonian. The sample complexity in this approach depends inverse polynomially on the ``spectral gap'' of the matrix of coefficients which thus far has not been rigorously bounded. Another line of work considers learning the Hamiltonian using a trusted quantum simulator \cite{Wiebe2014hamiltonian,wiebe2014LearningImperfect,Verdon2019HamLearning} which is analyzed using a combination of numerical evidence and heuristic arguments. Amin et al.~\cite{quantum_Boltzmann_machines} quantized  classical Boltzmann machines and proposed a method to train and learn quantum Boltzmann machines using gradient~descent. 

As mentioned earlier, there has been a fruitful series of works on the classical analog of the $\HLP$ (see e.g. \cite{Bresler_learning,Klivans_learning,Interaction_screening}). In our work, we assume it is a priori known that the interaction graph of the Hamiltonian is spatially local. We then estimate the parameters in $\ell_2$-norm using $\poly(n)$ samples which is polynomially tight even for classical Hamiltonians. If we instead consider estimation in $\ell_{\infty}$-norm, the classical algorithms can achieve a stronger result. That is, given $\mathcal{O}(\log n)$ samples, they succeed in efficiently learning the structure of the underlying graph and its parameters in $\ell_{\infty}$-norm. If we apply our current analysis to this setup, we cannot improve our $\poly(n)$ sample complexity to $\mathcal{O}(\log n)$. This is in part because the classical results devise a more efficient convex program that learns the parameters node-wise (this relies on the commutativity of the Hamiltonian terms), and partly because their required strong convexity assumptions is based on the Markov property, none of which are known to be quantizable. 

 \subsection{Open questions}\label{sec:Discussion and open questions}
In \secref{Maximum entropy estimation and sufficient statistics} we explained our approach to analyzing the $\HLP$ based on reducing data to its sufficient statistics and using maximum entropy estimation. An issue with this approach is the blowup in the computationally complexity. It is shown in  \cite{Montanari_sufficient_statistics} that this approach basically requires approximating the partition function which is $\NP$-hard. Ideally, one would like to have an algorithm for the $\HLP$ that requires small number of samples, but also has an efficient running time. Satisfying both these constraints for all inverse-temperatures $\b$ even in the classical learning problems is quite challenge. It was only recently that more efficient algorithms are devised for learning graphical models \cite{Klivans_learning,Interaction_screening}. In this work, we focus on the less demanding but still non-trivial question of bounding the sample complexity and leave obtaining an efficient running time for future work. Below we mention some of the open problems in this direction.

Our lower bound on the variance in \secref{Lower bounding the variance} is obtained for any constant inverse-temperature $\b$. It is an interesting open question to improve this bound, ideally to a constant independent of system size, assuming physically-motivated conditions such as the decay of correlations or the decay of conditional mutual information. Another approach might be to derive such a bound at high temperatures where powerful tools such as cluster expansions are available \cite{Kohtaro_cmi_cluster}. We also expect our bounds can be improved for commuting Hamiltonians. Indeed, using structural results such as \cite{bravyi2003commutative,Eldar_commuting}, one should be able to follow the same strategy as in \secref{Strong convexity of of log z classical} to find a constant lower bound on the variance of commuting Hamiltonians.

There are recent results on efficiently computing the partition function of quantum many-body systems under various assumptions \cite{Bravyi_ferro,SoleimanifarComplexZeros,Kohtaro_cmi_cluster}. We expect by combining these results with our maximum entropy estimation algorithm in \secref{Maximum entropy estimation and sufficient statistics}, one can obtain efficient classical algorithms for the $\HLP$. Another approach might be to use calibrated quantum computers (or Gibbs samplers) as in \cite{Brandao_gibbs_preparing,Brandao_sdp2} to solve the maximum entropy optimization using multiplicative weight update method and learn the parameters of another quantum~device.

Finally, an important future direction is to devise more refined objective functions for the $\HLP$ that matches the performance of the learning algorithms for the classical problem as discussed in \secref{Connection to previous work}. Given the non-commutative nature of quantum Hamiltonians, this seems to require substantially new ideas and advances in characterizing the information theoretic properties of the quantum Gibbs states. 

\paragraph{Acknowledgements.}
We thank Aram Harrow, Yichen Huang, Rolando La Placa, Sathyawageeswar Subramanian, John Wright, and Henry Yuen for helpful discussions. Part of this work was done when SA and TK were visiting Perimeter Institute. SA was supported in part by the Army Research Laboratory and the Army Research Office under grant number W1975117. AA is supported by the Canadian Institute for Advanced Research, through funding provided to the Institute for Quantum Computing by the Government of Canada and the Province of Ontario. Perimeter Institute is also supported in part by the Government of Canada and the Province of Ontario. TK was supported by the RIKEN Center for AIP and JSPS KAKENHI Grant No. 18K13475. MS was supported by NSF grant CCF-1729369 and a Samsung Advanced Institute of Technology Global Research Partnership.  
\section{Preliminaries}

\subsection{Some mathematical facts}
Here we summarize some of the basic mathematical facts used in the proof.  Let $A,B$ be arbitrary operator. The \emph{operator norm} of $A$ which is its largest singular value is denoted by $\|A\|$.  We also often use the  \emph{Frobenius norm} $\|A\|_F:= \sqrt{\Tr[A^{\dagger}A]}$ and more generally the Hilbert-Schmidt inner product between $A,B$ defined by $\tr[A^{\dag}B]$. Additionally using H\"older's inequality we have, 
\begin{align}
\label{fact:relatingfrobeniusAB}
\|A B\|_F= \sqrt{ \tr (B^\dagger A A^\dagger B) } \le \sqrt{\|B\|^2 \tr (A A^\dagger) }= \|B\| \cdot \|A\|_F.
\end{align}
We define the von Neumann entropy of a quantum state $\s$ by $S(\s)=-\tr[\s \log \s]$ and the relative entropy between two states $\s_1$ and $\s_2$ by $S(\s_1\| \s_2)=-\tr[\s_1\log \s_2]-S(\s_1)$. 

The gradient of a real function $f:\bbR^m\mapsto\bbR$ is denoted by $\nabla f(x)$ and its \emph{Hessian} (second derivative) matrix by $\nabla^2 f(x)$. The entries of the Hessian matrix are given by $\frac{\partial^2}{\partial x_i \partial x_j} f(x)$.

We write $A\succeq 0$ to represent a \emph{positive semi-definite} (PSD) operator $A$, one such example of a PSD operator is the  Hessian matrix $\nabla^2 f(x)$.

For convenience, we will also gather a collection of infinite sums over exponentials. For $t>0$,~let 
$$\Gamma(t):= \int_{0}^{\infty}x^{t-1}e^{-x}dx = \frac{1}{t}\int_{0}^{\infty}e^{-x}d\br{x^t}= \frac{1}{t}\int_{0}^{\infty}e^{-y^{\frac{1}{t}}}dy$$
be the \emph{gamma function}. It holds that $\Gamma(t)\leq t^t$. This can be used to simplify several summations that we encounter later. Finally, we collect a few useful summations that we use in our proofs in the following fact. The proof is postponed until \appref{proof of integral facts}.

\begin{fact}
\label{fact:integrals}
Let $a,c,p>0$ be reals and $b$ be a positive integer. Then
\begin{enumerate}
\item[1)] $\sum_{j=0}^{\infty} e^{-cj} \leq \frac{e^c}{c}$.
\item[2)] $\sum_{j=0}^{\infty} j^be^{-cj^p} \leq \frac{2}{p}\cdot\br{\frac{b+1}{cp}}^{\frac{b+1}{p}}$.
\item[3)] $\sum_{j=0}^{\infty} e^{-c(a+j)^p} \leq e^{-\frac{c}{2}a^p}\br{1+\frac{1}{p }\br{\frac{2}{cp}}^{\frac{1}{p}}}$.
\end{enumerate}
\end{fact}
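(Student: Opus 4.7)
The plan is to handle each of the three bounds by comparing the series to an appropriate integral and then evaluating the resulting integral in closed form via the substitution $u = cx^{p}$, with the $\Gamma(t) \leq t^{t}$ bound already recorded in the text doing the final work.

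For part (1), since $e^{-cx}$ is decreasing in $x$, a single integral comparison starting at $x = -1$ suffices:
\[
\sum_{j=0}^{\infty} e^{-cj} \;\leq\; \int_{-1}^{\infty} e^{-cx}\,dx \;=\; \frac{e^{c}}{c}.
\]
Equivalently one can use the geometric series identity $1/(1-e^{-c})$ together with $1-e^{-c} \geq ce^{-c}$; either is a one-liner.

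For part (2), the function $g(x) = x^{b}e^{-cx^{p}}$ is unimodal with peak at $x^{*} = (b/(cp))^{1/p}$, so I would split the sum at $\lfloor x^{*} \rfloor$ and apply integral-test comparisons to the two monotone pieces separately: on the increasing piece each $g(j)$ is bounded by $\int_{j}^{j+1} g$, on the decreasing piece each $g(j)$ is bounded by $\int_{j-1}^{j} g$, and adding these controls the whole sum by $2\int_{0}^{\infty} g(x)\,dx$ (the peak value gets absorbed into the factor $2$ since $g(x^{*})$ is dominated by an integral over a unit-width neighborhood of the mode for $b \geq 1$). The substitution $u = cx^{p}$ converts this integral to $\frac{1}{p\,c^{(b+1)/p}}\,\Gamma\!\bigl(\tfrac{b+1}{p}\bigr)$, and $\Gamma(t) \leq t^{t}$ applied at $t = (b+1)/p$ yields exactly $\tfrac{1}{p}\bigl(\tfrac{b+1}{cp}\bigr)^{(b+1)/p}$ per integral, hence the stated bound.

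For part (3), the key step is the elementary inequality $(a+j)^{p} \geq \max(a^{p}, j^{p}) \geq (a^{p} + j^{p})/2$, valid for all $a, j \geq 0$ and $p > 0$ with no case split on whether $p \geq 1$. Exponentiating and factoring,
\[
\sum_{j=0}^{\infty} e^{-c(a+j)^{p}} \;\leq\; e^{-c a^{p}/2} \sum_{j=0}^{\infty} e^{-(c/2)j^{p}}.
\]
I would split the remaining sum as $1 + \sum_{j \geq 1} e^{-(c/2)j^{p}}$; since $e^{-(c/2)x^{p}}$ is strictly decreasing on $[0,\infty)$, the tail is bounded by $\int_{0}^{\infty} e^{-(c/2)x^{p}}\,dx$. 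The same substitution as in part (2) evaluates this integral to $\tfrac{1}{p}(2/c)^{1/p}\,\Gamma(1/p)$, and $\Gamma(1/p) \leq (1/p)^{1/p}$ delivers the promised $\tfrac{1}{p}(2/(cp))^{1/p}$, completing the bound. No step is delicate; the two points to get right are the handling of the mode in part (2), which is responsible for the factor $2$, and the choice of the symmetric inequality $(a+j)^{p} \geq (a^{p}+j^{p})/2$ in part (3) rather than the $p$-dependent $(a+j)^{p} \geq a^{p} + j^{p}$ that only holds for $p \geq 1$.
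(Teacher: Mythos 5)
Parts (1) and (3) are fine. For (1) your integral comparison $\sum_{j\ge 0}e^{-cj}\le\int_{-1}^\infty e^{-cx}dx$ is a valid alternative to the paper's geometric-series identity; both are one-liners. For (3) your route is essentially the paper's (factor out $e^{-ca^p/2}$, split off the $j=0$ term, compare the tail to an integral, apply $\Gamma(t)\le t^t$), but your justification of $(a+j)^p\ge\tfrac12(a^p+j^p)$ via $(a+j)^p\ge\max(a^p,j^p)$ is cleaner and more robust than the paper's, which invokes convexity for $p\ge 1$ and then gestures at ``concavity'' for $p<1$ even though concavity of $x\mapsto x^p$ gives $(a+j)^p\ge 2^{p-1}(a^p+j^p)$ with $2^{p-1}<\tfrac12$ and so does not directly deliver the needed constant. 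Your observation that the $\max$ inequality works uniformly in $p>0$ with no case split is a genuine improvement in clarity.

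Part (2) has a real gap. You claim $\sum_{j\ge 0}g(j)\le 2\int_0^\infty g$ for $g(x)=x^be^{-cx^p}$ by integral comparison on the two monotone pieces plus ``the peak value gets absorbed into the factor $2$''. What the monotone comparisons actually give, after accounting for the one or two integers $j$ in a unit neighborhood of the mode $x^*$ where neither one-sided comparison is valid, is $\sum_{j\ge 0}g(j)\le\int_0^\infty g+g(j_0)+g(j_0+1)\le\int_0^\infty g+2g(x^*)$ with $j_0=\lfloor x^*\rfloor$. Your claimed absorption amounts to $2g(x^*)\le\int_0^\infty g$, and this is simply false in general: for $b=p=1$ one has $g(x^*)=(ce)^{-1}$ and $\int_0^\infty g=c^{-2}$, so already for $c>e/2$ the peak value exceeds the integral. (The sum itself is still small in that regime because $j_0=0$ and $g(0)=0$, but that is a different and $g$-specific argument, not the generic ``unit-width neighborhood of the mode'' one you sketch.) The paper's proof sidesteps this: it keeps the near-mode contribution as a separate additive term, bounds it by roughly $x^*\cdot g(x^*)=(b/(cp))^{(b+1)/p}e^{-b/p}$, and then shows that this term and $\int_0^\infty g=\tfrac1p\,c^{-(b+1)/p}\Gamma((b+1)/p)$ are each at most $\tfrac1p\big(\tfrac{b+1}{cp}\big)^{(b+1)/p}$, the peak term by elementary algebra and the integral by $\Gamma(t)\le t^t$. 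To repair your write-up you should either adopt that two-term decomposition, or, if you want to keep the symmetric $2\int$ form, actually prove $g(j_0)+g(j_0+1)\le\int_0^\infty g$ for this specific $g$ in the relevant parameter range rather than assert it.
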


\subsection{Local Hamiltonians and quantum Gibbs states}\label{sec:Local Hamiltonians and quantum Gibbs states}
\paragraph{Local Hamiltonians.} In this work, we focus on Hamiltonians that are \emph{geometrically local}. That is, the interactions terms in the Hamiltonian act on a constant number of qudits that are in the neighborhood of each other. To describe this notion more precisely, we consider a $D$-dimensional lattice $\L \subset \mathbb{Z}^D$ that contains $n$ sites with a $d$-dimensional qudit (spin) on each site. We denote the dimension of the Hilbert space associated to the lattice $\Lambda$ by $\mathcal{D}_{\Lambda}$. The Hamiltonian of this system is 
\ba
H=\sum_{X\subset \L} H_X.\nn
\ea
Each term $H_X$ acts only on the sites in $X$ and $X$ is restricted to be a connected set with respect to $\L$. We also define the Hamiltonian restricted to a region $A\subseteq~\L$ by $H_A=\sum_{X \subseteq A} H_X$. Let $B(r,i):=\{j\in \Lambda| \dist(i,j)\le r\}$ denotes a ball (under the Manhattan distance in the lattice) of size~$r$ centered at site $i$. For a given connected set $X\in \Lambda$, let $\diam(X):=\max\{\dist(i,j):i,j\in X\}$ denote the \emph{diameter} of this set, $X^c:= \Lambda\setminus X$ denote the complement of this set and $\partial X$ denote its boundary. Given two sets $X,Y \in \Lambda$, we define $\dist(X,Y):=\min\br{\dist(i,j): i\in X, j\in Y}$. 

In order describe our Hamiltonians, we consider an orthogonal Hermitian basis for the space of operators acting on each qudit. For instance, for qubits this basis consists of the Pauli operators. By decomposing each local term $H_X$ in terms of the tensor product of such basis operators, we find the following canonical form for the Hamiltonian $H$:
\begin{definition}[Canonical representation for $\k$-local Hamiltonians]\label{def:geometrically-local Hamiltonians} 
A $\k$-local Hamiltonian $H$ on a lattice $\L$ is sum of $m$ Hermitian operators $E_\ell$ each acting non-trivially on $\k$ qudits. That is,
\ba
H=\sum_{\ell=1}^m \m_\ell E_\ell.
\ea
where $\m_\ell\in \bbR$ and we assume $\norm{E_\ell}\leq 1$, $\Tr[E_\ell^2]=\mathcal{D}_\Lambda$, $E_\ell^{\dagger}=E_\ell$ for $\ell\in[m]$, and 
\ba
\tr[E_k E_{\ell}]= 0\quad \text{for}\ k\neq \ell.\label{eq:s9}
\ea
\end{definition}
Since $H$ is geometrically local, it holds that $m=\orderof{|\L|}=\orderof{n}$. As discussed earlier, we extensively use the notion of {quasi-local operators}, which we now formally define.
\begin{definition}[Quasi-local operators]\label{def:Quasi-local operators}
An operator $A$ is said to be \emph{$(\tau, a_1, a_2, \zeta)$-quasi-local} if it can be written as 
\begin{align}
&A= \sum_{\ell=1}^n  {g}_{\ell}  \bar{A}_\ell  \quad {\rm with } \quad {g}_{\ell}  \le a_1\cdot  \exp(-a_2 \ell^\tau), \notag \\
&\bar{A}_\ell=\sum_{|Z|= \ell} a_Z, \quad  \max_{i\in \Lambda}\left( \sum_{Z: Z\ni i} \|a_Z\| \right) \leq \zeta,
\label{quasi_locality of A}
\end{align}
where the sets $Z\subset \Lambda$ are restricted to be balls.\footnote{The assumption that $Z$ is a ball suffices for us. Our results on quasi-local operators also generalize to the case where~$Z$ is an arbitrary \textit{regular} shape, for example when the radii of the balls inscribing and inscribed by $Z$ are of constant proportion to each other.}
\end{definition}
Although local operators are morally a special case of quasi-local operators (when $\tau=\infty$), we will reserve the above notation for operators with $\tau=\mathcal{O}(1)$.
A useful tool for analyzing quasi-locality is the Lieb-Robinson bound, which shows a light-cone like behavior of the time evolution operator.
\begin{fact}[Lieb-Robinson bound~\cite{Lieb1972}, \cite{10.1007/978-90-481-2810-5_39}]
\label{fact:LRB}
Let $P,Q$ be operators supported on regions $X, Y$ of the $D$ dimensional lattice $\Lambda$ respectively. Let $H$ be a $(\z,\k)$-geometrically local Hamiltonian. There exist constants $\vL, f, c$ that only depend on $\z,\k$ and $D$ such that 
$$\|[e^{iHt}Ae^{-iHt}, B]\|\leq f\|A\|\|B\|\cdot\min\br{|\partial X|, |\partial Y|}\cdot \min\br{e^{c\br{\vL|t|-\dist(X,Y)}},1}.$$
\end{fact}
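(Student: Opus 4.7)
The plan is to reconstruct the classical Lieb--Robinson argument in its iterated-commutator form, following the original paper \cite{Lieb1972} and the sharpened treatment of \cite{10.1007/978-90-481-2810-5_39}. Define $A(t) := e^{iHt} A e^{-iHt}$ and, for a fixed $B$ supported on $Y$, the auxiliary function
\begin{align*}
C_{B}(X, t) := \sup \bigl\{ \|[\tilde{A}(t), B]\| \,:\, \mathrm{supp}(\tilde{A}) \subseteq X,\ \|\tilde{A}\| \leq 1 \bigr\}.
\end{align*}
Decompose $H = \sum_Z h_Z$ into its geometrically local pieces. Differentiating $[A(t), B]$ and applying the Jacobi identity $[[h_Z, A(t)], B] = [h_Z, [A(t), B]] + [[h_Z, B], A(t)]$, the first piece only generates a unitary conjugation (preserving operator norm) while the second produces the coupling to $B$ that drives the growth. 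Using an integrating factor and observing that only $h_Z$ with $Z \cap Y \neq \emptyset$ contribute at first order yields the integral inequality
\begin{align*}
\|[A(t), B]\| \leq \|[A, B]\| + 2 \|B\| \sum_{Z:\, Z \cap Y \neq \emptyset} \|h_Z\| \int_0^{|t|} C_{\cdot}(X, s)\, ds,
\end{align*}
which one then iterates by re-applying the same estimate to each new commutator on the right-hand side.

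After $n$ iterations the bound becomes a sum over chains $Y \to Z_1 \to Z_2 \to \cdots \to Z_n$ of mutually overlapping local supports that must eventually intersect $X$ for the commutator not to vanish, with weight $(2|t|)^n \prod_k \|h_{Z_k}\| / n!$. The uniform bound $\sum_{Z \ni i} \|h_Z\| \leq \z$ from geometric locality, together with the fact that each $h_Z$ has diameter at most $\k$, implies that the number of length-$n$ chains emanating from the boundary $\partial Y$ is at most $|\partial Y| \cdot s^n$ for a constant $s = s(D, \z, \k)$ depending on the lattice coordination number. For the chain to reach $X$ one needs at least $n \geq \dist(X, Y)/\k$ hops, so the remaining series is dominated by this minimum length. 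Summing the tail against the factorial suppression $1/n!$ and balancing the resulting exponentials identifies a critical velocity $\vL = \mathcal{O}(\z \k)$ and gives
\begin{align*}
\|[A(t), B]\| \leq f\, \|A\| \|B\| \cdot |\partial Y| \cdot \exp\bigl(c(\vL |t| - \dist(X, Y))\bigr),
\end{align*}
with constants $f, c$ depending only on $D, \z, \k$.

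Running the identical argument with the roles of $A$ and $B$ interchanged (using $\|[A(t), B]\| = \|[A, e^{-iHt} B e^{iHt}]\|$ to peel off chains anchored at $\partial X$ instead) produces the twin bound with $|\partial X|$ in place of $|\partial Y|$, so taking the better of the two yields the $\min(|\partial X|, |\partial Y|)$ factor in the statement. The outer $\min(\cdot, 1)$ is just the trivial uniform estimate $\|[A(t), B]\| \leq 2 \|A\| \|B\|$, which dominates in the regime $\vL |t| \geq \dist(X, Y)$. The main obstacle I anticipate is purely combinatorial rather than conceptual: counting connected chains of $n$ overlapping interaction patches in a $D$-dimensional lattice and verifying that their growth rate $s$ is beaten by the Dyson factor $1/n!$ uniformly in the geometry, while tracking the explicit dependence of $\vL$, $f$, $c$ on $\z$, $\k$, and $D$. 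This bookkeeping is exactly what is carried out in \cite{10.1007/978-90-481-2810-5_39}, and I would follow that template to extract the stated constants.
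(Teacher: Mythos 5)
The paper does not prove Fact~\ref{fact:LRB}; it is stated as a citation to \cite{Lieb1972} and the Nachtergaele--Sims survey \cite{10.1007/978-90-481-2810-5_39} and used as a black box. So there is no ``paper's proof'' to compare against, only the standard argument from those references, which is what you have reconstructed.

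Your sketch is the right argument and the overall structure is sound: differentiate $[A(t),B]$, split via the Jacobi identity into a norm-preserving unitary part plus a source driven by $[H,B]$, integrate and iterate to get a Dyson-type series indexed by overlapping chains of interaction supports, count chains, and observe that a chain must make $\ge \dist(X,Y)/\kappa$ hops before the commutator can be nonzero, so the tail of $\sum_n (2|t|\, s)^n/n!$ gives the $e^{c(v_{\mathrm{LR}}|t|-\dist(X,Y))}$ cone. The symmetry swap $A\leftrightarrow B$ to produce the $\min(|\partial X|,|\partial Y|)$, and the trivial $\min(\cdot,1)$ from $\|[A(t),B]\|\le 2\|A\|\|B\|$, are both correct.

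The one place where the write-up is looser than the argument needs to be is the prefactor $|\partial Y|$. The naive iteration bounds the first sum $\sum_{Z:\,Z\cap Y\neq\emptyset}\|h_Z\|$ by $\zeta|Y|$, not $\zeta|\partial Y|$, so ``chains emanating from $\partial Y$'' is the conclusion, not the starting point. The usual way to get the boundary version is to first conjugate $B$ by $e^{iH_Y t}$ where $H_Y$ collects all terms of $H$ supported \emph{entirely inside} $Y$ (this leaves $B$ supported in $Y$ and hence commuting with $A$), and then Duhamel-compare $e^{iHt}$ with $e^{iH_Y t}$: the generator of the difference is $H-H_Y$, and the only pieces of $H-H_Y$ that fail to commute with the $Y$-supported evolved $B$ are the $O(|\partial Y|)$ terms straddling the boundary. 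One can alternatively argue that chains starting at depth $r$ inside $Y$ must be longer by $\Omega(r)$ hops and so are exponentially suppressed, recovering $|\partial Y|$ after summing over $r$; either way, this step needs to be made explicit, since as written your chain count over-credits the interior of $Y$. With that repaired, the sketch matches what the cited references prove.
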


\paragraph{Gibbs states.} At an \emph{inverse-temperature $\b$}, a quantum many-body system with the Hamiltonian $H(\m)$ is in the Gibbs (thermal) state 
\ba
\r_{\b}(\m)=\frac{e^{-\b H(\m)}}{\tr[e^{-\b H(\m)}]}.\label{eq:s15}
\ea
The partition function of this system is defined by $Z_{\b}(\m)=\tr[e^{-\b H(\m)}]$.
\begin{rem} In our notation, we sometimes drop the dependency of the partition function or the Gibbs state on $\m$. We also often simply use the term local Hamiltonian $H$ or quasi-local operator $A$ when referring to \defref{geometrically-local Hamiltonians} and \defref{Quasi-local operators}.
\end{rem}
As discussed earlier, local marginals of the Gibbs states can be used to uniquely specify them. This provides us with ``sufficient statistics'' for learning the Hamiltonians from $\r_{\b}$. More precisely, we have:
\begin{prop}[Restatement of \propref{matching marginals}]
Consider the following two Gibbs states
\begin{align}
    \r_{\b}(\m)=\frac{e^{-\b\sum_\ell \m_\ell E_\ell}}{\Tr[e^{-\b\sum_\ell \m_\ell E_\ell}]},\quad  \r_{\b}(\l)=\frac{e^{-\b\sum_\ell \l_\ell E_\ell}}{\Tr[e^{-\b\sum_\ell \l_\ell E_\ell}]} \label{eq:b2}
\end{align}
such that $\Tr[\r_{\b}(\l) E_j]=\Tr[\r_{\b}(\m) E_j]$ for all $j\in [m]$, i.e. all the $\k$-local marginals of $\r_{\b}(\l)$ match that of $\r_{\b}(\m)$. Then, we have $\r_{\b}(\l)=\r_{\b}(\m)$, which in turns implies $\l_\ell=\m_\ell$ for $\ell\in [m]$. 
\end{prop}
\begin{proof}
We consider the relative entropy between $\r_{\b}(\l)$ and the Gibbs state $\r_{\b}(\m)$. We have
\begin{align}
    S\left(\r_{\b}(\m)\| \r_{\b}(\l)\right)&=\Tr\left[\r_{\b}(\m)\left(\log \r_{\b}(\m)- \log \r_{\b}(\l)\right)\right] \nn\\
    &=-S(\r_{\b}(\m))+\b\cdot \Tr\left[\r_{\b}(\m)\sum_\ell \l_\ell E_\ell\right]+\log Z(\l)\\
    &\overset{(1)}=-S(\r_{\b}(\m))+\b\sum_\ell \l_\ell \Tr[\r_{\b}(\l) E_\ell]+\log Z(\l) \\
    &=-S(\r_{\b}(\m))+S(\r_{\b}(\l))\nn\\
    &\overset{(2)}\geq 0, && \label{eq:7}
\end{align}
where $(1)$ follows because $\Tr[\r_{\b}(\m) E_\ell]=\Tr[\r_{\b}(\l) E_\ell]$ for all $\ell\in [m]$ and $(2)$ used the positivity of relative entropy. 
Similarly, we can exchange the role of $\r(\m)$ and $\r(\l)$ in \eqref{eq:7} and get
\begin{align}
    S\left(\r_{\b}(\l)\|\r_{\b}(\m)\right)=-S(\r_{\b}(\l))+S(\r_{\b}(\m))\geq 0.
\end{align}
Combining these bounds imply $S(\r_{\b}(\m))=S(\r_{\b}(\l))$ and hence from Eq.~\eqref{eq:7}, we get $S(\r_{\b}(\m)\| \r_{\b}(\l))=~0$. It is known that the relative entropy of two distribution is zero only when $\r_{\b}(\m)=\r_{\b}(\l)$. Hence, we also have $\log \r_{\b}(\m) = \log \r_{\b}(\l)$ or equivalently up to an additive term $\sum_{\ell=1}^m \m_\ell E_\ell=\sum_{\ell=1}^m \l_\ell E_\ell$. Since the operators $E_\ell$ form  an orthogonal basis (see Eq.~\eqref{eq:s9}), we see that $\l_\ell=\m_\ell$ for all $\ell\in[m]$.
\end{proof}
\begin{rem}
When the marginals of the two Gibbs states only approximately match, i.e., 
$$
|\tr[\r_{\b}(\m) E_\ell]-\tr[\r_{\b}(\l) E_\ell]|\leq \e
$$ 
for $\ell\in [m]$, then a similar argument to \eqref{eq:7} shows that $S(\r_{\b}(\m)\| \r_{\b}(\l))\leq \mathcal{O}(m\e)$. By applying Pinsker's inequality, we get $\norm{\r_{\b}(\m)-\r_{\b}(\l)}^2_1\leq \mathcal{O}(m \e)$.\footnote{Pinsker's inequality states that for two density matrices $\rho,\sigma$, we have $\|\rho-\sigma\|_1^2\leq 2\ln 2\cdot S(\rho\|\sigma)$.}
\end{rem}
\subsection{Quantum belief propagation} 
Earlier we saw that we could express the Gibbs state of a Hamiltonian $H$ by Eq.~\eqref{eq:s15}. Suppose we alter this Hamiltonian by adding a term $V$ such that
\begin{align}
    H(s)=H+sV,\quad s\in [0,1].
\end{align} 
How does the Gibbs state associated with this Hamiltonian change? If the new term $V$ commutes with the Hamiltonian $H$, i.e., $[H,V]=0$, then the derivative of the Gibbs state of $H(s)$ is given by
\begin{align}
 \frac{d}{ds} e^{-\b H(s)}=-\b e^{-\b H(s)}V= -\frac{\b}{2}\left\{e^{-\b H(s)},V\right\},
\end{align}
where $\{e^{-\b H(s)},V\}=e^{-\b H(s)}V+V e^{-\b H(s)}$ denotes the anti-commutator. In the non-commuting case though, finding this derivative is more complicated. The \emph{quantum belief propagation} is a framework developed in \cite{hastings_belief_propagation,Kim_gibbs,Brandao_gibbs_cmi} for finding such derivatives in a way that reflects the locality of the system. 

\begin{definition}[Quantum belief propagation operator] \label{def:r1}
For every $s\in[0,1]$, $\beta\in \mathbb{R}$, define $H(s)=H+s V$ where $V=\sum_{j,k} V_{j,k}\ketbra{j}{k} $ is a Hermitian operator. Also let $f_{\b}(t)$ be a function whose Fourier transform is 
\ba
\tilde{f}_{\b}(\omega)=\frac{\tanh(\b \omega/2)}{\b \omega/2},
\label{def:tilde_f_beta}
\ea
i.e., $f_{\b}(t)=\frac{1}{2\pi} \int d\omega \tilde{f}_{\b}(\omega) e^{i\omega t}$. The quantum belief propagation operator $\Phi_{H(s)}(V)$ is defined by
\ba
\Phi_{H(s)}(V)&=\int_{-\infty}^{\infty} dt f_{\b}(t)\ e^{-iH(s)t}\ V \ e^{iH(s)t}.\nn
\ea
Equivalently, in the energy basis of $H(s)=\sum_{j}\ep_j(s)\ \ketbra{j}{j}$, we can write
\ba
\Phi_{H(s)}(V)&=\sum_{j,k} \ketbra{j}{k}\  V_{j,k}\ \tilde{f_\b}(\Ene_j(s)-\Ene_k(s)).
\ea
\end{definition}
\begin{prop}[cf. \cite{hastings_belief_propagation}] \label{prop:QBP}
In the same setup as \defref{r1}, it holds that
\ba
\frac{d}{ds}e^{-\b H(s)}=-\frac{\b}{2}\left\{e^{-\b H(s)},\Phi_{H(s)}(V)\right\}.
\ea
\end{prop}

\subsection{Change in the spectrum after applying local operators}
For a Hamiltonian $H$, let $P^H_{\le x}$ and $P^H_{\ge y}$ be projection operators onto the eigenspaces of $H$ whose energies are in $\le x$ and $\ge y$, respectively (we use similar notation $P^A_{\le x}, P^A_{\ge y}$ for the quasi-local operator $A$). Consider a quantum state $\ket{\psi}$ in the low-energy part of the spectrum such that $P^H_{\le x}\ket{\psi}=\ket{\psi}$. Suppose this states $\ket{\psi}$ is perturbed by applying a local operator~$O_X$ on a subset $X\subset \Lambda$ of its qudits. Intuitively, we expect that the operator $O_X$ only affects the energy of $\ket{\psi}$ up to $\orderof{|X|}$, i.e., $\| P^H_{\ge y} O_X\ket{\psi}\| \approx 0$ for $y\gg x+ |X|$. A simple example is when~$\ket{\psi}$ is the eigenstate of a classical spin system. By applying a local operation that flips the spins in a small region $X$, the energy changes at most by $\mathcal{O}(|X|)$. The following lemma rigorously formulates the same classical intuition for quantum Hamiltonians.
\begin{lem}[Theorem~2.1 of \cite{Arad_connecting_global_local_dist}]
\label{lem:AKL16}
Let $H$ be an arbitrary $\kappa$-local operator such that 
\ba
H = \sum_{|Z|\le \kappa} h_Z,
\ea
and each $i\in \L$ supports at most $g$ terms $h_Z$. Then, for an arbitrary operator $O_X$ which is supported on $X\subseteq \Lambda$, the operator norm of $P^H_{\ge y} O_X P^H_{\le x}$ is upper-bounded by 
\ba
\| P^{H}_{\ge y} O_X P^{H}_{\le x}\| \le \|O_X\| \cdot \exp \big(-\frac{1}{2g\kappa}(y-x - 2g|X|)\big) .
\ea
\end{lem}
In our analysis, we need an different version of this lemma for \emph{quasi-local} operators instead of $\k$-local operators. The new lemma will play a central role in lower-bounding the variance of {quasi-local operators}. 
The proof follows by the analysis of a certain moment function (as opposed to the moment generating function in \cite{Arad_connecting_global_local_dist}). Due to formal similarities between the proofs, we defer the proof of the next lemma to Appendix~\ref{sec:Derivation of norm quasi local}. 
\begin{lem}[Variation of \cite{Arad_connecting_global_local_dist} for quasi-local operators]
\label{multicommutator_norm_quasi_local}
Let $A$ be a $(\tau, a_1, a_2, 1)$-quasi-local operator, as given in Eq.~\eqref{quasi_locality of A}, with $\tau\le 1$. For an arbitrary operator $O_X$ supported on a subset $X\subseteq \Lambda$  with $|X|=k_0$ and $\|O_X\|=1$, we have
\begin{align}
\| P^{A}_{\ge x+y} O_X  P^{A}_{\le x}  \|  \le c_5\cdot  k_0 \exp \Big(-(\lambda_1 y/k_0)^{1/\tau_1}\Big), 
\label{Concentration_lemma_subexponential1}
\end{align}
where $\tau_1:=\frac{2}{\tau}-1$,  $c_5$ and $\lambda_1$ are constants depending on $a_1$ and $a_2$  as $c_5 \propto a_2^{2/\tau}$ and $\lambda_1\propto a_2^{-2/\tau}$ respectively.
\end{lem}

\subsection{Local reduction of global operators}
\label{global_to_local}
An important notion in our proofs will be a reduction of a global operator to a local one, which has influence on a site $i$. Fix a subset $Z\subseteq \Lambda$ and an operator $O$ supported on $Z$. Define 
\begin{align}
    \label{defnofOi}
O_{\locci}:= O- \tr_i[O]\otimes \frac{\iden_i}{d}
\end{align}
where operator $\iden_i$ is the identity operator on the $i$th site, $d$ is the local dimension, $\tr_i$ is the partial trace operation with respect to the site $i$. Note that $O_\locci$ removes all the terms in $O$ that do not act on the $i$th site. This can be explicitly seen by introducing a basis $\{E^{\alpha}_Y\}_{\alpha \in \mathbb{N},Y\in Z}$ of Hermitian operators, where $Y$ labels the support of $E^{\alpha}_Y$ and $\alpha$ labels several possible operators on the same support. We can assume that $\Tr[\br{E^{\alpha}_Y}^2]=1, \Tr_i[E^{\alpha}_Y]=0$ for every $i\in Y$, and the orthogonality condition $\Tr[E^{\alpha}_Y E^{\alpha'}_{Y'}]=0$ holds if $\alpha\neq \alpha'$ or $Y\neq Y'$. These conditions are satisfied by the appropriately normalized Pauli operators.~Expand
$$
O= \sum_{\alpha, Y} g_{\alpha, Y}E^{\alpha}_Y.
$$
Then
$$
O_\locci=\sum_{\alpha, Y} g_{\alpha, Y}E^{\alpha}_Y - \sum_{\alpha, Y} g_{\alpha, Y}\Tr_{i}[E^{\alpha}_Y]\otimes \frac{\iden_i}{d}=\sum_{\substack{\alpha, Y: Y\ni i}} g_{\alpha, Y}E^{\alpha}_Y.
$$
Thus, $O_\locci$ is an operator derived from $O$, by removing all $E^{\alpha}_Y$ which act as identity on $i$. The following claim shows that the Frobenius norm of a typical $O_\locci$ is not much small in comparison to the Frobenius norm of $O$.  
\begin{claim} \label{claim_global_norm_local_norm}
For every operator $O$ and $O_\locci$ defined in Eq.~\eqref{defnofOi}, it holds that
\begin{align} 
\max_{i\in Z}\|O_\locci\|_F^2 \ge \frac{1}{|Z|}\sum_{i\in Z}\|O_\locci\|^2_F \ge \frac{1}{|Z|} \|O\|_F^2,
\label{main ineq_claim_global_norm_local_norm}
\end{align}
\end{claim}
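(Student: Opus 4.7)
The first inequality, $\max_i \|O_\locci\|_F^2 \ge \tfrac{1}{|Z|}\sum_i \|O_\locci\|_F^2$, is simply the statement that a maximum dominates an average, so no real work is needed there. The content of the claim is the second inequality $\sum_{i\in Z}\|O_\locci\|_F^2 \ge \|O\|_F^2$; after dividing by $|Z|$ this gives the bound. The plan is to exploit the explicit orthonormal expansion introduced just before the claim.

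Concretely, I would expand $O = \sum_{\alpha,Y} g_{\alpha,Y} E^\alpha_Y$ in the basis $\{E^\alpha_Y\}_{\alpha, Y \subseteq Z}$ with the stated orthogonality $\Tr[E^\alpha_Y E^{\alpha'}_{Y'}] = 0$ whenever $(\alpha,Y)\neq(\alpha',Y')$ and each basis element traceless on every site of its support. Orthogonality immediately gives
\begin{align*}
\|O\|_F^2 \;=\; \sum_{\alpha,Y} |g_{\alpha,Y}|^2 \, N_Y,
\end{align*}
where $N_Y := \Tr[(E^\alpha_Y)^2]$ is the (basis-element-dependent) normalization. The key fact already derived in the paragraph preceding the claim is
\begin{align*}
O_\locci \;=\; \sum_{\alpha,\, Y \ni i} g_{\alpha,Y} E^\alpha_Y,
\end{align*}
so another application of orthogonality yields $\|O_\locci\|_F^2 = \sum_{\alpha,\, Y\ni i} |g_{\alpha,Y}|^2 N_Y$.

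Now I would just sum over $i \in Z$ and switch the order of summation:
\begin{align*}
\sum_{i\in Z}\|O_\locci\|_F^2 \;=\; \sum_{\alpha,Y} |g_{\alpha,Y}|^2 N_Y \cdot |\{i\in Z : i\in Y\}| \;=\; \sum_{\alpha,Y} |g_{\alpha,Y}|^2 N_Y \cdot |Y|.
\end{align*}
Since every $Y$ appearing in the support-decomposition of a site-traceless $O$ satisfies $|Y|\ge 1$, this sum is at least $\sum_{\alpha,Y} |g_{\alpha,Y}|^2 N_Y = \|O\|_F^2$, proving the desired inequality.

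There is no serious obstacle here; the only subtlety to flag is the treatment of an identity component ($Y=\emptyset$, where $|Y|=0$). That component is annihilated by each map $O \mapsto O_\locci$ (any term acting as identity on site $i$ is killed by the subtraction $\Tr_i[O]\otimes \iden_i/d$), so one should either assume $O$ has no identity part in its decomposition or simply note that the identity component contributes to $\|O\|_F^2$ only an additive term that can be absorbed. In the regime of interest (applying this to the traceless quasi-local operator $\widetilde W$), this is automatic and the bound goes through cleanly.
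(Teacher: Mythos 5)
Your proof is correct and takes essentially the same route as the paper's: expand $O$ in the orthogonal basis $\{E^\alpha_Y\}$, use orthogonality to write $\|O\|_F^2$ and $\|O_\locci\|_F^2$ as weighted sums of squared coefficients, then sum over $i\in Z$ and observe that each term indexed by $Y\neq\emptyset$ is counted at least once (you compute the count exactly as $|Y|$, whereas the paper simply writes the resulting inequality). Your flag about the identity ($Y=\emptyset$) component is a genuine subtlety the paper leaves implicit — as stated "for every operator $O$" the claim would fail for $O=\iden$ — though be aware that if such a component were present it could not just be "absorbed": the inequality strictly requires the $Y=\emptyset$ coefficient to vanish, which it does in the application because $\quW$ and its restrictions are traceless.
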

\begin{proof}
Using the identities $\Tr[E^{\alpha}_YE^{\alpha'}_{Y'}]=0$ and $\Tr[\br{E^{\alpha}_Y}^2]=1$, we have
\begin{align*} 
\|O\|_F^2 =\sum_{Y,\alpha} g^2_{\alpha, Y} \le \sum_{i\in Z} \sum_{\alpha, Y:Y\ni i}g^2_{\alpha, Y} =\sum_{i\in Z}  \| O_\locci \|_F^2.
\end{align*}
This completes the proof.
\end{proof}

\subsection{Stochastic convex optimization applied to Hamiltonian learning}\label{sec:Stochastic convex optimization_prelim}
Suppose we want to solve the optimization 
$$
\max_{x\in \bbR^m} f(x)
$$ 
for a function $f:\bbR^m\rightarrow \bbR$ which is of the form $f(x)=\bbE_{y\sim \cD}[g(x,y)]$. Here $g(x,y)$ is some convex function and the expectation $\bbE_{y\sim \cD}$ is taken with respect to an unknown distribution $\cD$. Algorithms for this maximization problem  are based on obtaining i.i.d.~\emph{samples} $y$ drawn from the distribution~$\cD$. In practice, we can only receive finite samples $y_1,y_2,\dots,y_\ell$ from such a distribution. Hence, instead of the original optimization, we solve an \emph{empirical} version 
$$
\max_{x\in \bbR^m} \frac{1}{\ell}\sum_{k=1}^\ell  g(x,y_k).
$$
The natural question therefore is: How many samples $\ell$ do we need to guarantee the output of the empirical optimization is close to the original solution? One answer to this problem relies on a property of the objective function known as \emph{strong convexity}. 
\begin{definition}[Restatement of \defref{strong convexity}]
Consider a convex function $f:\bbR^{m}\mapsto \bbR$ with gradient $\nabla f(x)$ and Hessian $\nabla^2 f(x)$ at $x$. The function $f$ is said to be $\a$-strongly convex in its domain if it is differentiable and for all $x,y$, and
\begin{align}
    f(y)\geq f(x)+\nabla f(x)^\top (y-x) +\frac{1}{2}\a \norm{y-x}^2_2,\nn
\end{align}
or equivalently if its Hessian satisfies
\begin{align}
    \nabla^2 f(x)\succeq \a \iden.
\end{align}
In other words, for any vector $v \in \bbR^m$ it holds that $\sum_{i,j} v_i v_j \frac{\partial^2}{\partial x_i \partial x_j}f(x)\geq \a \norm{v}^2_2$.
\end{definition}
Next, we discuss how the framework of convex optimization and in particular strong convexity, can be applied to the $\HLP$. To this end, we define the following optimization problems. 
\begin{definition}[Optimization program for learning the Hamiltonian]\label{def:dual functions}
We denote the objective function in the $\HLP$ and its approximate version (equations \eqref{eq:4} and \eqref{eq:9} in \secref{Maximum entropy estimation and sufficient statistics}) by $L(\l)$ and $\hat{L}(\l)$ respectively, i.e.,
\ba
L(\l)=\log Z_{\b}(\l)+\beta\cdot\sum_{\ell=1}^m \l_\ell e_\ell,\quad 
\hat{L}(\l)=\log Z_{\b}(\l)+\beta\cdot\sum_{\ell=1}^m \l_k \hat{e}_\ell,
\ea
where the partition function is given by $Z_{\b}(\lambda)=\tr\big(e^{-\b \sum_{\ell=1}^m  \lambda_\ell E_\ell}\big)$. The parameters of the Hamiltonian that we intend to learn are $\m=\argmin_{\l\in \bbR^m: \|\l\|\leq 1} L(\l)$. As before, we also define the empirical version of this optimization by
\ba
\hat{\m}=\argmin_{\l\in \bbR^m: \norm{\l}\leq 1} \hat{L}(\l).\label{eq:b9}
\ea
\end{definition}
We prove later in \lemref{variance lower bound on Hessian} that $\log Z_{\b}(\l)$ is a convex function in parameters $\l$ and thus, the optimization in \eqref{eq:b9} is a convex program whose solution can be in principle found. In this work, we do not constraint ourselves with the running time of solving \eqref{eq:b9}. We instead obtain sample complexity bounds as formulated more formally in the next theorem.
\begin{thm}[Error in $\pmb{\m}$ from error in marginals $\pmb{e_\ell}$]\label{thm:required error in marginals}
Let $\delta,\alpha > 0$. Suppose the marginals $e_\ell$ are determined up to error $\d$, i.e., $|e_\ell-\hat{e}_\ell|\leq \d$ for all $\ell\in[m]$. Additionally assume $\nabla^2 \log Z(\l) \succeq \a \iden$ for $\norm{\l}\leq 1$. Then the optimal solution to the program \eqref{eq:b9} satisfies 
$$
\norm{\m-\hat{\m}}_2\leq \frac{2\b\sqrt{m}\d}{\alpha}
$$
\end{thm}

\begin{proof}
From the definition of $\hat{\m}$ as the optimal solution of $\hat{L}$ in \eqref{eq:b9}, we see that $\hat{L}(\hat{\m})\leq \hat{L}(\m)$. Thus, we get
\ba
\log Z_{\b}(\hat{\m}) +\beta\cdot \sum_{\ell=1}^m \hat{\m}_\ell \hat{e}_\ell \leq \log Z_{\b}(\m) +\beta\cdot \sum_{\ell=1}^m \m_\ell \hat{e}_\ell.\nn
\ea
or equivalently,
\ba
\log Z_{\b}(\hat{\m}) \leq \log Z_{\b}(\m)+\beta\cdot\sum_{\ell=1}^m (\m_\ell-\hat{\m}_\ell) \hat{e}_\ell.\label{eq:b11}
\ea
We show later in \lemref{expression for Hessia} that for every $\ell\in [m]$, we have $\frac{\partial}{\partial \m_\ell}\log Z_{\b}(\m)=-\b e_\ell$.\footnote{In particular, see Eq.~\eqref{eq:firstderivativeofpartition}, where we showed $\frac{\partial}{\partial \m_\ell}\log Z_{\b}(\m)=-\beta \cdot \Tr[E_\ell \rho_\beta(\m)]=-\beta e_\ell$.} This along with the assumption $\nabla^2\log Z(\m)\succeq \alpha\iden$ in the theorem statement, implies that for every $\m'$ with $\norm{\m'}\leq 1$ 
\ba
\log Z_{\b}(\m')\geq \log Z_{\b}(\m)-\beta\cdot \sum_{\ell=1}^m (\m'_\ell-\m_\ell)e_\ell  +\frac{1}{2}\a \norm{\m'-\m}_2^2.\label{eq:b10}
\ea
Hence, by choosing $\m'=\hat{\m}$ and combining \eqref{eq:b10} and \eqref{eq:b11}, we get
$$
\log Z_{\b}(\m)-\beta\cdot \sum_{\ell=1}^m (\hat{\m}_\ell-\m_\ell)e_\ell  +\frac{1}{2}\a \norm{\hat{\m}-\m}_2^2\leq \log Z_{\b}(\m)+\beta\cdot\sum_{\ell=1}^m (\m_\ell-\hat{\m}_\ell) \hat{e}_\ell
$$
which further implies that
\ba
\frac{1}{2}\a \norm{\hat{\m}-\m}_2^2 &\leq \beta\cdot \sum_{\ell=1}^m (\hat{\m}_\ell-\m_\ell) (e_\ell-\hat{e}_\ell)\nn,\\
&\leq \beta\cdot\norm{\hat{\m}-\m}_2\cdot \norm{\hat{e}-e}_2.\nn
\ea
Hence, we have
\ba
\norm{\hat{\m}-\m}_2\leq \frac{2\b}{\a}\norm{\hat{e}- e}_2\leq \frac{2\b\sqrt{m} \d}{\a}.\nn
\ea
\end{proof}

\begin{cor}[Sample complexity from strong convexity]
Under the same conditions as in \thmref{required error in marginals}, the number of copies of the Gibbs state $\r_{\b}$ that suffice to solve the $\HLP$~is
\ba
N=O\left(\frac{\b^2 2^{\mathcal{O}(\k)}}{\a^2 \e^2}m \log m\right)\nn.
\ea
\end{cor}
\begin{proof}
First observe that, using Theorem~\ref{thm:required error in marginals}, as long as the error in estimating the marginals $e_\ell$ are
\ba
\d \leq \frac{\a \e}{2\b \sqrt{m}},\label{eq:s19}
\ea
we estimate the coefficients $\m$ by $\hat{\m}$ such that $\norm{\hat{\m}-\m}_2\leq \e$. 
The marginals $e_\ell$ can be estimated in various ways. One method considered in \cite{cotler2020tomographyCommuting,bonet2019tomographyCommuting} is to group the operators $E_\ell$ into sets of mutually commuting observables and simultaneously measure them at once. Alternatively, we can use the recent procedure in \cite[Theorem 1]{Kueng_shadows} based on a variant of shadow tomography. In either case, the number of copies of the state needed to find all the marginals with accuracy $\d$ is 
$$
N=O\left(\frac{2^{\mathcal{O}(\k)}}{\d^2}\log m\right),
$$
where recall that $\k$ is the locality of the Hamiltonian. Plugging in Eq.~\eqref{eq:s19} gives us the final bound
\ba
N=O\left(\frac{\b^2 2^{\mathcal{O}(\k)}}{\a^2 \e^2}m \log m\right)\nn.
\ea
\end{proof}
\section{Strong convexity of $\pmb{\log Z_{\b}(\lambda)}$}
\label{sec:Strong convexity of log z proofs}
 We now state our main theorem which proves the strong convexity of the logarithm of the partition function. Recall that  for a vector $\lambda=(\lambda_1,\ldots,\lambda_m)\in \bbR^m$, Hamiltonian  $H(\lambda)=\sum_i \lambda_i E_i$ where 
$E_i$ are tensor product of Pauli operators with weight at most $\k$ and $\rho_{\b}(\lambda)=\frac{1}{Z_{\b}(\lambda)}e^{-\beta H(\lambda)}$, we defined the partition function as $Z_\b(\lambda)=\tr(e^{-\beta H(\lambda)})$. 
We now prove our main theorem for this section.
\begin{thm}[Restatement of Theorem \ref{thm:strong convexity of log-partition function_informal}: $\pmb{\log\ Z(\lambda)}$ is strongly convex]\label{thm:strong convexity of log-partition function} Let $H=\sum_{\ell=1}^m \m_\ell E_\ell$ be a $\k$-local Hamiltonian over a finite dimensional lattice. For a given inverse-temperature $\b$, there are constants $c, c'>3$ depending on the geometric properties of the lattice  such that 
\ba
\nabla^2 \log Z_{\b}(\m) \succeq e^{-\mathcal{O}(\b^c)} \cdot \frac{\b^{c'}}{m} \cdot \iden, 
\ea
i.e., for every vector $v\in \bbR^m$ we have $v^T \cdot \nabla^2 \log Z_{\b}(\m) \cdot v \geq \b^{c'}e^{-\mathcal{O}(\b^c)}/m \cdot \norm{v}^2_2$.
\end{thm}

The proof of Theorem~\ref{thm:strong convexity of log-partition function}  is divided into multiple lemmas that we state and prove both in this and the following sections. We begin with finding an expression for the Hessian of $\log Z_{\b}(\l)$. 
\begin{lem}\label{lem:expression for Hessia}
For every vector $v\in \bbR^{m}$, define the local operator $W_v=\sum_{i=1}^m v_i E_i$ (for notational convenience, later on we stop subscripting $W$ by $v$). The Hessian $\nabla^2 \log Z_{\b}(\l)$ satisfies
\ba
v^\top\cdot\left(\nabla^2 \log Z_{\b}(\l)\right)\cdot v&= \frac{\beta^2}{2}\Tr\Big[\big\{W_v, \Phi_{H(\l)}(W_v)\big\}\rho_{\b}(\l)\Big] - \beta^2\big(\Tr\left[W_v\rho_{\b}(\l)\right]\big)^2,
\ea
\end{lem}
\begin{proof}

Since the terms in the Hamiltonian are non-commuting, we  use \propref{QBP} to find the derivatives of $\log Z_{\b}(\l)$. We get
\begin{align}
\label{eq:firstderivativeofpartition}
 \frac{\partial}{\partial \l_j} \log Z_{\b}(\l)&=\frac{1}{Z_{\b}(\l)}\tr\left[-\frac{\b}{2}\left\{e^{-\b H(\l)},\Phi_{H(\l)}(E_j)\right\}\right]\nn\\
    &=\frac{-\b}{Z_{\b}(\l)}\Tr\left[e^{-\b H(\l)}\int_{-\infty}^{\infty} dt f_{\b}(t)e^{-iH(\l)t}\ E_j\ e^{iH(\l)t}\right]\nn\\
    &=-\b\ \Tr \left[E_j \frac{e^{-\b H(\l)}}{Z_{\b}(\l)}\right],
\end{align}
where the second equality used the definition of the quantum belief propagation operator $\Phi_{H(\lambda)}(E_j)=\int_{-\infty}^{\infty} dt f_{\b}(t)\ e^{-iH(\lambda)t}\ E_j \ e^{iH(\lambda)t}$ with $f_\b$ as given in Definition~\ref{def:r1}. The third equality used the fact that $e^{iH(\lambda)t}$ commutes with $e^{-\beta H(\lambda)}$. Similarly, we have 
\begin{align}
    \frac{\partial^2}{\partial \l_k\partial \l_j} \log Z_{\b}(\l) &=-\b\ \Tr \left[E_j \cdot \frac{\partial}{\partial\l_k}\left(\frac{e^{-\b H(\l)}}{Z_{\b}(\l)}\right)\right]\nn\\
    &=-\b\ \Tr \left[E_j \cdot \frac{1}{Z_{\b}(\l)}\frac{\partial}{\partial\l_k}\left(e^{-\b H(\l)}\right)\right]+\b\Tr \left[E_j \cdot \frac{e^{-\b H(\l)}}{Z_{\b}(\l)}\right]\cdot \frac{1}{Z_{\b}(\l)}\frac{\partial}{\partial\l_k}Z_{\b}(\l)\nn\\
    &=\frac{\b^2}{2}\tr\left[ E_j\cdot\left\{ \r_{\b}(\l),\Phi_{H(\l)}(E_k)\right\}\right]-\b^2\ \tr[E_k \r_{\b}(\l)]\ \tr[E_j\r_{\b}(\l)]\nn\\
    &=\frac{\b^2}{2}\tr\left[\left\{ E_j ,\Phi_{H(\l)}(E_k)\right\}\cdot\r_{\b}(\l)\right]-\b^2\ \tr[E_k \r_{\b}(\l)]\ \tr[E_j\r_{\b}(\l)]\nn.
\end{align}
One can see from this equation that $\nabla^2 \log Z_{\b}(H)$ is a symmetric real matrix, \footnote{The terms $\tr[E_k \r_{\b}(\l)]$  and $\tr[E_j\r_{\b}(\l)]$ are real, being expectations of Hermitian matrices. Moreover, $\left\{E_j ,\Phi_{H(\l)}(E_k)\right\}$ is a Hermitian operator, being an anti-commutator of two Hermitian operators. Hence $\tr\left[ \left\{E_j ,\Phi_{H(\l)}(E_k)\right\}\r_{\b}(\l)\right]$ is real too.} and hence its eigenvectors have real entries. Finally, we get 
\ba
v^{\top}\cdot \left(\nabla^2 \log Z_{\b}(\l)\right) \cdot v&=\sum_{j,k} v_j v_k \frac{\partial^2}{\partial \l_k\partial \l_j} \log Z_{\b}(\l)\\ &=\frac{\beta^2}{2}\Tr\Big[\big\{W_v, \Phi_{H(\l)}(W_v)\big\}\rho_{\b}(\l)\Big] - \big(\beta\Tr\left[W_v\rho_{\b}(\l)\right]\big)^2.\nn
\ea
\end{proof}
The statement of Lemma \ref{lem:expression for Hessia} does not make it clear that the Hessian is a variance of a suitable operator, or even is positive. The following lemma shows how to lower bound the Hessian by a variance of a quasi-local operator. The intuition for the proof arises by writing the Hessian in a manner that makes its positivity clear. This in particular, shows that $\log Z_{\b}(\m)$ is a convex function in parameters $\m$ -- we later improve this to being strongly convex.

\begin{lem}[A lower bound on $\pmb{\nabla^2 \log Z_{\b}(\l)}$]
\label{lem:variance lower bound on Hessian}
For every $v\in \bbR^m$ and local operator $W_v=\sum_i v_i E_i$, define another local operator $\quW$ such that
\begin{align} 
\quW_v= \int_{-\infty}^\infty f_\beta(t)\ e^{-iHt}\ W\ e^{iHt} dt,\label{eq:r8}
\end{align}
where
\begin{align}
f_\beta(t)= \frac{2}{\beta\pi}\log \frac{e^{\pi |t|/\beta}+1}{e^{\pi |t|/\beta}-1}
\end{align}
is defined such that $f_\beta(t)$ scales as $\frac{4}{\beta\pi}e^{-\pi |t|/\beta}$ for large $t$.
We claim
\ba
\frac{1}{2}\Tr\Big[\big\{W_v, \Phi_{H(\l)}(W_v)\big\}\rho_{\b}(\l)\Big] - \big(\Tr\left[W_v\rho_{\b}(\l)\right]\big)^2 \geq \Tr\left[(\quW_v)^2\rho_{\b}(\l)\right]- \left(\Tr\left[\quW_v\rho_{\b}(\l)\right]\right)^2\label{eq:r6}
\ea
 
\end{lem}
\begin{rem}
For the rest of the paper, we are going to fix an arbitrary $v \in \bbR^m$, in order to avoid subscripting $W,\quW$ by $v$. 
\end{rem}
\begin{proof}[Proof of Lemma~\ref{lem:variance lower bound on Hessian}]
 Let us start by proving a simpler version of Eq.~\eqref{eq:r6}, where we only show 
\ba
\frac{1}{2}\Tr\Big[\big\{W, \Phi_{H(\l)}(W)\big\}\rho_{\b}(\l)\Big] - \big(\Tr\left[W\rho_{\b}(\l)\right]\big)^2\geq 0.
\ea
 Since $v$ is an arbitrary vector, this shows that, as expected, $\nabla^2 \log Z_{\b}(\l)$ is a positive semidefinite operator.

Consider the spectral decomposition of the Gibbs state $\r_{\b}(\l)$: $\r_{\b}(\l)=\sum_j r_j(\l) \ketbra{j}{j}$. Then observe that 
\ba
&\frac{1}{2}\Tr\Big[\big\{W, \Phi_{H(\l)}(W)\big\}\rho_{\b}(\l)\Big] - \big(\Tr\left[W\rho_{\b}(\l)\right]\big)^2\\
&= \frac{1}{2}\sum_j r_j(\l)\bra{j}\{W, \Phi^H(W)\}\ket{j} - \br{\sum_jr_j(\l)W_{j,j}}^2\nn\\
&= \frac{1}{2}\sum_{j,k} r_j(\l)\br{W_{j,k}\bra{k}\Phi^H(W)\ket{j}+\bra{j}\Phi^H(W)\ket{k}W_{k,j}} - \br{\sum_jr_j(\l)W_{j,j}}^2\nn\\
&\overset{(1)}=\frac{1}{2}\sum_{j,k} r_j(\l)\br{W_{j,k}W_{k,j}\tilde{f}_\beta(\Ene_k-\Ene_j)+W_{j,k}W_{k,j}\tilde{f}_\beta(\Ene_j-\Ene_k)} - \br{\sum_jr_j(\l)W_{j,j}}^2\nn\\
&\overset{(2)}=\sum_{j,k} r_j(\l)|W_{j,k}|^2\tilde{f}_\beta(|\Ene_j-\Ene_k|) - \br{\sum_jr_j(\l)W_{j,j}}^2.\label{eq:r9}
\ea
In equality $(1)$, we use Definition \ref{def:r1} and in equality $(2)$ we use the facts that $W$ is Hermitian and $\tilde{f}_\beta(t)=\tilde{f}_\beta(-t)$. Since $\tilde{f}_\beta(0)=1$ and $\tilde{f}_\beta(t)>0$ for all $t$, it is now evident from last equation that
$$\Tr\br{\frac{1}{2}\{W, \Phi^H(W)\}\rho_{\b}} - \Tr\br{W\rho_{\b}}^2\geq \sum_{j}r_j(\l)|W_{j,j}|^2 - \br{\sum_jr_j(\l)W_{j,j}}^2\geq 0.$$
We can improve this bound by using the operator $\quW$ in \eqref{eq:r8}. The function $f_\beta(t)$ in \eqref{eq:r8} is chosen carefully such that its Fourier transform satisfies $\tilde{f}_{\beta}(\omega)=\tilde{f}_{\beta}(|\omega|)$. Then, we have that
\begin{align} 
\quW= \int_{-\infty}^\infty f_\beta(t)\ e^{-iHt}\ W\ e^{iHt} dt = \sum_{j,k}\ket{j}\bra{k}\ W_{j,k} \ \tilde{f}_\beta(|\Ene_j-\Ene_k|).
\end{align}
Similar to \eqref{eq:r9} we get
\begin{align} 
\tr (\quW^2\rho_{\b}(\l)) - [\tr (\quW\rho_{\b}(\l))]^2& = \sum_{j,k} r_j(\l) |W_{j,k}|^2 \tilde{f}_\beta(|\Ene_j-\Ene_k|)^2 - \left( \sum_{j} r_j(\l) W_{j,j} \right)^2 \notag \\
&\le \sum_{j,k} r_j(\l) |W_{j,k}|^2 \tilde{f}_\beta(|\Ene_j-\Ene_k|) - \left( \sum_{j} r_j(\l) W_{j,j} \right)^2   \notag \\
&=\tr \left(\frac{\{ W , \Phi^H(W) \}}{2} \rho_{\b}(\l) \right) - [\tr(W \rho_{\b}(\l))]^2,
\end{align}
where the inequality is derived from $\tilde{f}_\beta(x)^2 \le \tilde{f}_\beta(x)$ for arbitrary $-\infty < x < \infty$. 
Thus, we get
$$\br{\sum_{i=1}^m v_i\frac{\partial}{\partial \lambda_i}}^2 \log Z_{\b}(\l) \geq \Tr(\quW^2\rho_{\b}(\l))- \Tr(\quW\rho_{\b}(\l))^2.$$
\end{proof}

\section{Lower bound on the variance of quasi-local operators}
In the previous section, we showed how to give a lower bound on the Hessian of the logarithm of the partition function. To be precise,  for a vector $\lambda=(\lambda_1,\ldots,\lambda_m)\in \bbR^m$ with $\|\lambda\|\leq 1$, Hamiltonian  $H(\lambda)=\sum_i \lambda_i E_i$ and $\rho_{\b}(\lambda)=\frac{1}{Z_{\b}(\lambda)}e^{-\beta H(\lambda)}$ (where $Z_{\b}(\lambda)=\tr(e^{-\beta H(\lambda)})$), we showed in Lemma~\ref{lem:variance lower bound on Hessian} how to carefully choose a local operator $\quW$ such that for every $v$, we have
\begin{align}
\label{eq:recallnablaZ>=W'}
    v^\top \cdot \big(\nabla^2 \log Z_{\b}(\l)\big) \cdot v  \geq \beta^2\Var[\quW]. 
\end{align}
In this section, we further prove that the variance of $\quW$ with respect to $\rho_{\b}(\lambda)$ can be bounded from below by a large enough quantity.  Before looking at the highly non-trivial case of finite temperature, lets look at a simpler case of infinite temperature limit.

\begin{figure}[ht]
  \centering
  
\usetikzlibrary{shapes,arrows,fit,backgrounds,calc}
\tikzstyle{box} = [rectangle, very thick, rounded corners, minimum width=3cm, minimum height=1cm,text centered, draw=black, fill=blue!20!white, inner sep=6pt, inner ysep=5pt]

\tikzstyle{arrow} = [double,double equal sign distance,-implies]

\begin{tikzpicture}[node distance=2.45cm]
\node (box0) [box] [draw, align=center]{
        Theorem \ref{claim_W_norm_W'_norm}\\
        \emph{Variance at finite temperature}};
\node (box1) [box, below of=box0] [draw, align=center]{
        \emph{Variance of locally rotated operator at finite temperature}};
\node (box2) [box, below of=box1] [draw, align=center]{
        \emph{Variance of operator with small support at finite temperature}};
\node (box3) [box, below of=box2] [draw, align=center]{
       Lemma \ref{lem:Wprimelowb} \\
         \emph{Variance of operator with small support at infinite temperature} };
\node (box4) [box, below of=box3] [draw, align=center]{
       Theorem \ref{claim_W_norm_W'_norm}\\
        \emph{Variance of operator with small support at infinite temperature}};

\draw [arrow] (box1) -- node [anchor=west] {Claims~\ref{claim1_lower} and \ref{claim2_lower}}(box0);
\draw [arrow] (box2) -- node [anchor=west] {Subsection \ref{subsec:globaltolocal}, Equation~\eqref{eq:upperboundedagAurho1}}(box1);
\draw [arrow] (box3) -- node [anchor=west] {Claim~\ref{claim:existenceofuXi0}}(box2);
\draw [arrow] (box4) -- node [anchor=west] {Claim \ref{claim_Wr_norm_W'r_norm}}(box3);

\end{tikzpicture}
   \caption{Flow of the argument in the proof of Theorem \ref{thm:variancelowerbound}.}
    \label{fig:flowchart}
\end{figure}
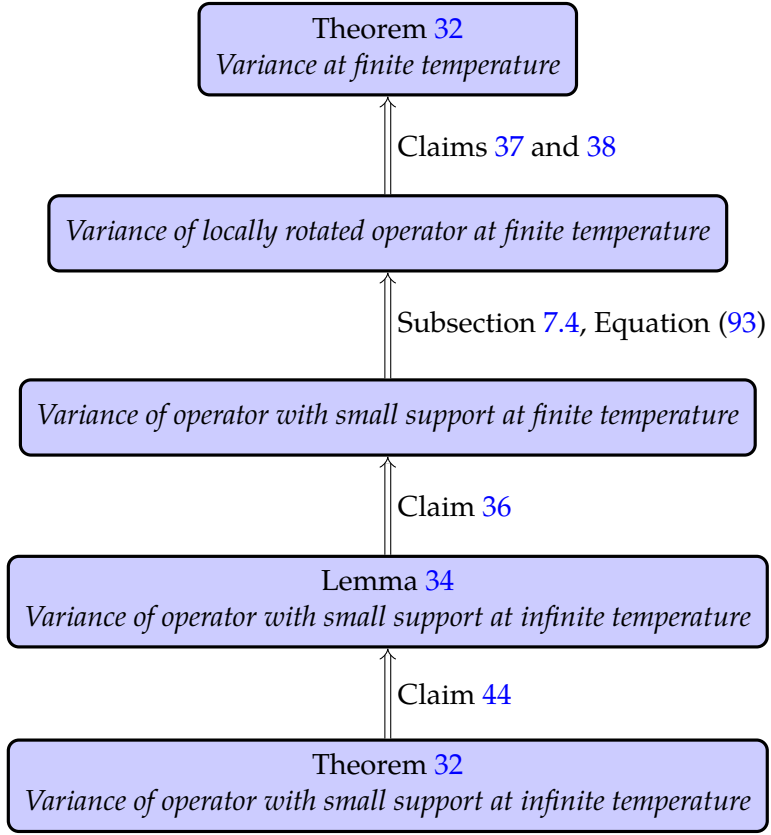
\subsection{Warm-up: Variance at infinite temperature}

Consider the infinite temperature state (i.e., the maximally mixed state) $\eta =\frac{\iden_\Lambda}{\mathcal{D}_{\Lambda}}$. We have the following theorem, where we assume that the locality of $W$, namely $\kappa=\orderof{1}$.

\begin{thm} \label{claim_W_norm_W'_norm}
For $\quW$ as defined in Lemma~\ref{lem:variance lower bound on Hessian}, we have
\begin{align} 
\Tr[(\quW)^2\eta] - \Tr[\quW\eta]^2 \ge \frac{\Omega(1)}{(\beta \log(m)+1)^2}\sum_{i=1}^m v_i^2.
\label{main_ineq_claim_W_norm_W'_norm}
\end{align} 
\end{thm}
The intuition behind the theorem is as follows. In the above statement, if $\quW$ is replaced by $W$, then the lower bound is immediate (see Eq.~\eqref{Wupp/bound_1} below). Similarly, if $H$ and $W$ were commuting, then $\quW$ would be the same as $W$ and the statement would follow. In order to show \eqref{main_ineq_claim_W_norm_W'_norm} for $\quW$ in general, we expand it in the energy basis of the Hamiltonian and use the locality of $W$ to bound the contribution of cross terms (using Lemma \ref{lem:AKL16}). This accounts for the contributions arising due to non-commutativity of $W$ and $H$.

\begin{proof} [Proof of Theorem \ref{claim_W_norm_W'_norm}]
Recall from Lemma~\ref{lem:variance lower bound on Hessian} that $W=\sum_i v_i E_i$. We first note that $\Tr[\quW\eta]= \Tr[W\eta]=0$. 
From the definition, we have $\Tr[(\quW)^2\eta]=\frac{1}{\mathcal{D}_\Lambda}\|(\quW)^2\|_F^2$. To begin with, we observe
\begin{align} 
&\|W\|_F^2=\mathcal{D}_\Lambda \sum_{i=1}^m v_i^2, \label{Wupp/bound_1}
\end{align} 
which holds since the basis $E_i$ satisfies $\|E_i\|_F^2=\mathcal{D}_\Lambda$ and $\Tr[E_iE_j]=0$ if $i\neq j$. Define $P_s^H$ as the projection onto the energy range $(s,s+1]$ of $H$.  
\begin{align} 
P_{s}^H:=\sum_{j: \Ene_j \in (s , s+1]} \ket{j}\bra{j}.
\end{align} 
Using the identity $\sum_s P_s^H=\iden_\Lambda$ and the definition of $\quW$, let us expand
\begin{align} 
\|W\|_F^2&= \sum_{s,s'=-\infty}^\infty \| P_{s'}^H W P_{s}^H \|_F^2 , \notag \\
\|\quW\|_F^2&= \sum_{s,s'=-\infty}^\infty \left \|\int_{-\infty}^\infty dt   P_{s'}^H  f_\beta(t)e^{-iHt} W e^{iHt} P_{s}^H  \right\|_F^2\\
& =\sum_{s,s'=-\infty}^\infty \left \| \sum_{\substack{j: \Ene_j \in (s , s+1]\\ k: \Ene_k \in (s , s+1]}} W_{j,k}  \tilde{f}_\beta(|\Ene_j-\Ene_k|)  P_{s'}^H\ket{j}\bra{k} P_s^H\right\|_F^2.\label{eq:frobeniusonW'}
\end{align} 
 By using the inequality 
$$
\tilde{f}_\beta(\omega)= \frac{\tanh(\beta \omega/2)}{\beta \omega/2} \geq \frac{1}{\frac{\beta}{2}|\omega|+1},$$ we have 
\begin{align} 
&\left \| \sum_{\substack{j: \Ene_j \in (s , s+1]\\ k: \Ene_k \in (s , s+1]}} W_{j,k}  \tilde{f}_\beta(|\Ene_j-\Ene_k|)  P_{s'}^H\ket{j}\bra{k} P_s^H\right\|_F^2  \notag \\
=& \sum_{j: \Ene_j \in (s , s+1]} \sum_{k: \Ene_j \in (s' , s'+1]} [\tilde{f}_\beta(|\Ene_j-\Ene_k|)]^2  |W_{j,k}|^2\notag \\
\ge& \frac{1}{\big(\frac{\beta}{2} (|s-s'|+1) +1\big)^2} \sum_{j: \Ene_j \in (s , s+1]} \sum_{k: \Ene_j \in (s' , s'+1]}  |W_{j,k}|^2
= \frac{\| P_{s'}^H W P_{s}^H \|_F^2}{\big(\frac{\beta}{2} (|s-s'|+1) +1\big)^2}.
\end{align} 
Plugging this lower bound in Eq.~\eqref{eq:frobeniusonW'} gives the following lower bound for $\|\quW\|_F^2$:
\begin{align} 
\|\quW\|_F^2 \ge  \sum_{s,s'=-\infty}^\infty \frac{\| P_{s'}^H W P_{s}^H \|_F^2}{\big(\frac{\beta}{2} (|s-s'|+1) +1\big)^2} 
= \sum_{s_0=-\infty}^\infty\sum_{s_1=-\infty}^\infty   \frac{\| P_{(s_0+s_1)/2}^H W P_{(s_0-s_1)/2}^H \|_F^2}{\big(\frac{\beta}{2} (|s_1|+1) +1\big)^2}, 
\label{W'upp/bound_1}
\end{align} 
where we have introduced $s_0=s+s', s_1=s-s'$. Let us consider the last expression for a fixed~$s_0$, introducing a cut-off parameter $\bar{s}$ which we fix eventually: 
\begin{align} 
&\sum_{s_1=-\infty}^\infty   \frac{\| P_{(s_0+s_1)/2}^H W P_{(s_0-s_1)/2}^H \|_F^2}{[\frac{\beta}{2}(|s_1|+1) +1]^2} \notag \\
\ge&  \frac{1}{\big(\frac{\beta}{2}(\bar{s}+1) +1\big)^2} 
\left( \sum_{|s_1|\le \bar{s}} \| P_{(s_0+s_1)/2}^H W P_{(s_0-s_1)/2}^H \|_F^2\right) \notag \\
=&\frac{1}{\big(\frac{\beta}{2}(\bar{s}+1) +1\big)^2} 
\left( \sum_{s_1=-\infty}^\infty \| P_{(s_0+s_1)/2}^H W P_{(s_0-s_1)/2}^H \|_F^2
-\sum_{|s_1|> \bar{s}} \| P_{(s_0+s_1)/2}^H W P_{(s_0-s_1)/2}^H \|_F^2  \right) .
\label{W'upp/bound_2}
\end{align} 
By combining the inequalities~\eqref{W'upp/bound_1} and \eqref{W'upp/bound_2}, we obtain
\begin{align} 
\|\quW\|_F^2 \ge \frac{\|W\|_F^2}{\big(\frac{\beta}{2}(\bar{s}+1) +1\big)^2}  - \frac{1}{\big(\frac{\beta}{2}(\bar{s}+1) +1\big)^2}\sum_{s_0=-\infty}^\infty\sum_{|s_1|> \bar{s}} \| P_{(s_0+s_1)/2}^H W P_{(s_0-s_1)/2}^H \|_F^2  .\label{W'upp/bound_3}
\end{align} 
Now, we will estimate the second term in \eqref{W'upp/bound_3}. Since the subspaces $P^H_{(s_0+s_1)/2}$ and $P^H_{(s_0-s_1)/2}$ are sufficiently far apart in energy, we can use the exponential concentration on the spectrum~\cite{Arad_connecting_global_local_dist} (as stated in Lemma~\ref{lem:AKL16}) to obtain the following: for $W=\sum_i v_i E_i$, we have 
\begin{align}
\begin{aligned}
\| P_{(s_0+s_1)/2}^H  W P_{(s_0-s_1)/2}^H \|&\le \sum_{i=1}^m v_i \| P_{(s_0+s_1)/2}^H  E_i P_{(s_0-s_1)/2}^H \|\\
&\le
 C e^{-\lambda (|s_1|-1-\kappa)} \sum_{i=1}^m |v_i| \le  C m e^{-\lambda (|s_1|-1-\kappa)} \max_{i} |v_i| .
 \end{aligned}
 \label{eq:usingAKL1}
\end{align}
 where we use the condition that $E_i$ are tensor product of Pauli operators with weight at most $\k$, and the parameters $C$ and $\lambda$ are $\orderof{1}$ constants (see Lemma~\ref{lem:AKL16} for their explicit forms).
 Then, the second term in \eqref{W'upp/bound_3} can be upper-bounded by
\begin{align} 
&\sum_{s_0=-\infty}^\infty\sum_{|s_1|> \bar{s}} \| P_{(s_0+s_1)/2}^H W P_{(s_0-s_1)/2}^H \|_F^2 \notag\\
&\overset{(1)}\le \sum_{s_0=-\infty}^\infty\sum_{|s_1|> \bar{s}} \| P_{(s_0+s_1)/2}^H W P_{(s_0-s_1)/2}^H \|^2\cdot \| P_{(s_0+s_1)/2}^H \|_F^2\notag\\ 
&\overset{(2)}\le \sum_{|s_1|> \bar{s}} \sum_{s_0=-\infty}^\infty \| P_{(s_0+s_1)/2}^H \|_F^2 \cdot  C^2 m^2 e^{-2\lambda (|s_1|-1-\kappa)} \max_{i} v_i^2  \notag \\
&= \mathcal{D}_\Lambda  C^2 m^2 e^{2\lambda (1+\kappa)}\max_{i} v_i^2 \sum_{|s_1|\ge \bar{s}+1} e^{-2\lambda |s_1|} \notag \\
&\overset{(3)}\leq \mathcal{D}_\Lambda \max_{i} v_i^2\frac{C^2 m^2 e^{2\lambda(\kappa+1)} }{\lambda} e^{-2\lambda \bar{s}} \overset{(4)}\leq \mathcal{D}_\Lambda \frac{C^2 m^2 e^{2\lambda(\kappa+1)} }{\lambda} e^{-2\lambda \bar{s}} \br{\sum_i v_i^2},\label{W'upp/bound_4}
\end{align}
where inequality $(1)$ follows from Eq.~\eqref{fact:relatingfrobeniusAB},  $(2)$ follows from Eq.~\eqref{eq:usingAKL1},  $(3)$ follows from Fact \ref{fact:integrals} and 
$(4)$ follows from $\max_{i} v_i^2\le \sum_i v_i^2$.  
Therefore, by applying Eq.~\eqref{Wupp/bound_1} and \eqref{W'upp/bound_4} to \eqref{W'upp/bound_3}, we arrive at the lower bound as 
\begin{align} 
\|\quW\|_F^2 \ge \frac{\mathcal{D}_\Lambda}{[\beta (\bar{s}+1)/2 +1]^2} \left(\sum_{i=1}^m v_i^2 \right) \left(1 - \frac{C^2m^2 e^{2\lambda(\kappa+1)} }{\lambda} e^{-2\lambda \bar{s}} \right).
\end{align} 
Since $\lambda, C, \kappa=\orderof{1}$, by choosing $\bar{s}=\orderof{\log(m)}$, we obtain the main inequality~\eqref{main_ineq_claim_W_norm_W'_norm}. 
This completes the proof. $\square$
\end{proof}

\subsection{Variance at finite temperature}

Next, we show how to prove a variance lower bound at finite temperature. This is achieved by the following general theorem on the variance of arbitrary local operator, which will reduce the problem to estimating a ``variance-like'' quantity at the infinite temperature case (observe the occurrence of the maximally mixed state $\eta$ in the theorem below). 
\begin{thm}
\label{thm:variancelowerbound}
Let $\beta>0$, $H$ be a $\k$-local Hamiltonian  on the lattice $\Lambda$ and $\rho_\beta=\frac{e^{-\beta H}}{\tr(e^{-\beta H})}$. 
Let $A$ be a $(\tau, a_1, a_2, 1)$-quasi-local operator (see Eq.~\eqref{quasi_locality of A}) where $a_2=\orderof{1/\beta}, a_1=\orderof{1}$ are constants and $Z$ are restricted to be connected sets within $\Lambda$. Suppose $\Tr[A\rho_\beta]=0$ and $\tau\leq 1$. We have
$$
\langle A^2 \rangle =\tr(A^2\rho_\beta)\geq \br{\max_{i\in \Lambda}\Tr[A_\locci^2\eta]}^{\beta^{\Omega(1)}}.
$$
\end{thm}
We remark that the theorem statement above  hides several terms that depend on the lattice, such as the lattice dimension, the degree of the graph and the locality of Hamiltonian (which we have fixed to be a constant). Additionally, the assumptions $a_2=\orderof{1/\beta}, a_1=\orderof{1}$ are made  in order to show that an operator $A^*$ to which we apply the theorem, satisfies the assumptions of the theorem.\footnote{We remark that we can also apply the theorem for  other choices of $a_1, a_2$, with small modifications to the proof.} Before proving the theorem, we first discuss how to use this theorem in order to prove a lower bound on the Hessian of $\log Z(\l)$.

For an arbitrary $v \in \bbR^m$, let $W=\sum_i v_i E_i$ and $\quW$ be the operators defined in Lemma~\ref{lem:variance lower bound on Hessian}. In Appendix~\ref{append:Wquasi} we show that $\quW$ is a 
$\br{1/D, \orderof{1}, \orderof{1/\beta}, c_\ast\beta^{2D+1}  \br{\max_{i\in \Lambda}|v_i|}}$-quasi-local operator, for $c_\ast=\orderof{1}$. Thus, the following operator 
$$
A^*=\frac{\beta^{-2D-1}}{c_\ast \max_{i\in \Lambda} |v_i| }(\quW-\Tr[\rho_\beta \quW]\iden),
$$
is $(\orderof{1}, \orderof{1}, \orderof{1/\beta}, 1)$-quasi-local and satisfies $\Tr[A\rho_{\beta}]=0$.
We now apply Theorem \ref{thm:variancelowerbound} to the operator $A^*$ to prove Theorem \ref{thm:strong convexity of log-partition function}. We need to estimate $\max_i\Tr[A_\locci^{\ast 2}\eta]$. Consider 
$$\max_i\Tr[(A^*_\locci)^2\eta]= \frac{\beta^{-4D-2}}{c_\ast^2\br{\max_{i\in \Lambda} |v_i|}^2}\br{\max_i\Tr[(\quW_\locci)^2\eta]}.$$
The following lemma is shown in Appendix \ref{append:lowerboundwprimei}.
\begin{lem}
\label{lem:Wprimelowb}
It holds that 
$$
\max_{i\in \Lambda}(\Tr[(\quW_\locci)^2\eta]) =   \frac{\Omega(1)}{\br{\beta\log(\beta)+1}^{2D+2}} \br{\max_{i\in \Lambda}v_i^2}.
$$
\end{lem}
This implies
$$
\max_i\Tr[(A^*_\locci)^2\eta]= \frac{\Omega(1)}{\beta^{4D+2}\br{\beta\log(\beta)+1}^{2D+2}\br{\max_{i\in \Lambda} |v_i|}^2}\br{\max_{i\in \Lambda}v_i^2}=\frac{1}{\beta^{\Omega(1)}}.
$$
Using this lower bound in Theorem \ref{thm:variancelowerbound}, we find
\begin{align*}
\langle (\quW)^2 \rangle- \br{\langle \quW \rangle}^2 &= \frac{\beta^{4D+2} \br{\max_{i\in \Lambda} |v_i|}^2}{c_\ast^2}\br{\langle (A^*)^2 \rangle- \br{\langle A^* \rangle}^2}\\ 
&=\beta^{\Omega(1)}\br{\max_{i\in \Lambda} |v_i|}^2\cdot \br{\max_i\Tr[(A^*_\locci)^2\eta]}^{\beta^{\Omega(1)}} \\
&= \beta^{\Omega(1)}\cdot \br{\frac{1}{\beta^{\mathcal{O}(1)}}}^{\beta^{\mathcal{O}(1)}} \cdot \br{\max_{i\in \Lambda} |v_i|}^2\\
&\overset{(1)}=\beta^{\Omega(1)}\cdot e^{-\beta^{\mathcal{O}(1)}}\br{\max_{i\in \Lambda} |v_i|}^2 \geq \beta^{\Omega(1)}\cdot \frac{e^{-\beta^{\mathcal{O}(1)}}}{m}\br{\sum_i v_i^2},
\end{align*}
where we used $\beta^{-\mathcal{O}(1)}\geq e^{-\beta}$ in $(1)$. 
 Putting together the bound above with Eq.~\eqref{eq:recallnablaZ>=W'}, we find that for every $v\in \bbR^m$,  
$$ 
    v^\top \cdot \big( \nabla^2 \log Z_{\b}(\l)\big) \cdot v \geq \beta^2\Var[\quW]\geq 
     \beta^{\Omega(1)}\cdot \frac{e^{-\beta^{\mathcal{O}(1)}}}{m} \sum_{i=1}^m v_i^2.
$$
This establishes Theorem \ref{thm:strong convexity of log-partition function}.
\subsection{Some key quantities in the proof and proof sketch}

\begin{figure}
\centering
{
\includegraphics[clip, scale=0.5]{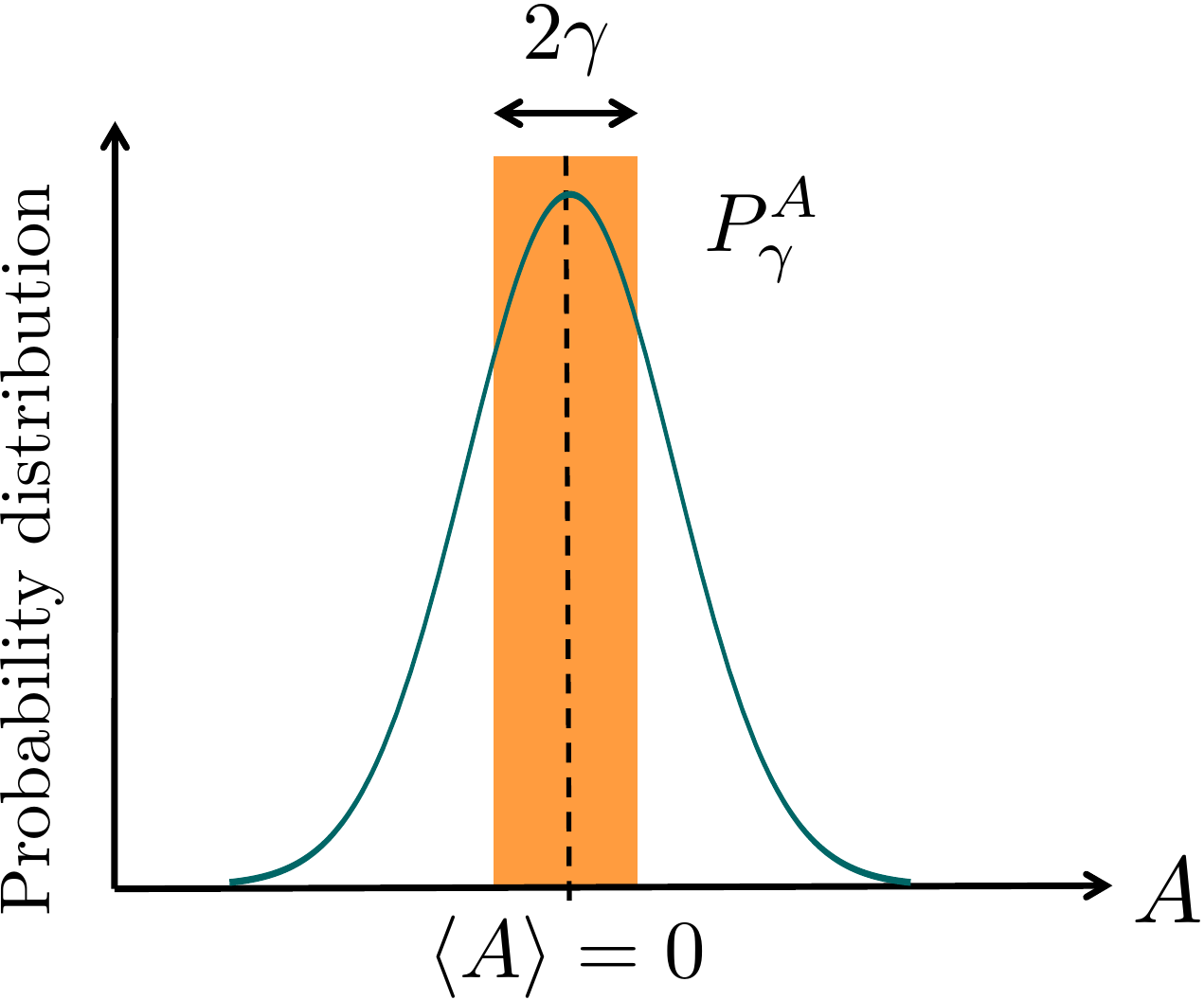}
}
\caption{Plot of the probability distribution $\Tr[\Pi_\omega\rho_\beta]$, where $\Pi_\omega$ is the projector onto the eigenvectors of $A$ with eigenvalue $\omega$. It is assumed that $\Tr[A\rho_\beta]=0$. For a $\gamma$ to be chosen in the proof, $P^A_\gamma$ is the projector onto the subspace of eigenvectors of $A$ with eigenvalue between $[-\gamma,\gamma]$. A lower bound on the variance of $A$ follows if we can show that for a constant $\gamma$, the probability mass in the colored range is small (see Equation \ref{staring_ineq_variance}).  
}
\label{fig:system}
\end{figure}

We now prove Theorem~\ref{thm:variancelowerbound}. For notational simplicity, let 
\begin{align}
H':= H- \frac{1}{\beta}\log Z_{\b},
\end{align}
which allows us to write $\rho_\beta=e^{-\beta H'}$. We will interchangeably use the frobenius norm to write $\langle A^2 \rangle =\tr(A^2\rho_\beta)= \|A\sqrt{\rho_\beta}\|_F$. 
We now define the projection operator ${P}_\gamma^A$ as follows (see Figure \ref{fig:system}):
\begin{align}
\label{eq:defnofPgamma}
{P}_\gamma^A := \sum_{\omega\in [-\gamma , \gamma]} \Pi_\omega,
\end{align}
where $\Pi_\omega$ is the projector onto the eigenspace of $A$ with eigenvalue $\omega$. 
We then define $\delta_\gamma$ by
\begin{align}
 \delta_\gamma:=1-\| {P}_\gamma^A  \sqrt{ \rho_\beta}\|_F^2. \label{def:delta_gamma}
\end{align}
Using $\delta_\gamma$, observe that we can lower bound  $\langle A^2 \rangle$ as 
\begin{align}
\begin{aligned}
\langle A^2 \rangle&= \sum_\omega \omega^2\langle \omega|\rho_\beta|\omega\rangle \geq \sum_{|\omega|\geq \gamma} \omega^2\langle \omega|\rho_\beta|\omega\rangle \geq \gamma^2 \sum_{|\omega|\geq \gamma}\langle \omega|\rho_\beta|\omega\rangle \ge \gamma^2 \delta_\gamma. 
\end{aligned}
\label{staring_ineq_variance}
\end{align}
 Let ${Q}_\gamma^A =\iden- {P}_\gamma^A$, then observe that from Eq.~\eqref{def:delta_gamma}  that
\begin{align}
\| {P}_\gamma^A  \sqrt{\rho_\beta} - \sqrt{\rho_\beta} \|_F^2 = \| {Q}_\gamma^A \sqrt{\rho_\beta}  \|_F^2= \delta_\gamma. \label{def:delta_gamma_ineq}  
\end{align}

The proof is divided into the following subsections, each of which may be of independent interest on its own.

\subsection{Reducing the global problem to a local problem}
\label{subsec:globaltolocal}

A main challenge in bounding the variance of the operator $A$ is that it is a global operator and its properties may scale badly with the system size. But since it is a linear combination of local operators, it is related to operators supported in a local region by a simple linear transform. To this end, recall the definition of $A_\locci$ (the local operator which includes essentially the terms in $A$ that have support on the $i$th site) from the subsection \ref{global_to_local}. 
Using Haar random unitaries, we obtain the integral representation of $A_\locci$ as 
 \begin{align}
A_\locci = A - \frac{1}{d} [\tr_i (A)] \otimes \iden_i= A- \int d\mu (U_{i})U_{i}^\dagger A U_{i},
\label{def_A_i_d}
\end{align}
where $\mu(U_{i})$ is the Haar measure for unitary operator~$U_{i}$ which acts on the $i$th site. Since the $A_\locci$ is obtained from a quasi-local $A$, it is quasi-local itself. Next claim will approximate $A_\locci$ by a local operator. 
\begin{claim}\label{claim:quasi-local-A_i}
For an integer $R$, let $X_{i}$ be the radius-$R$ ball \emph{around} the site $i$, i.e., $X_{i}= B(R,i)$.
There exists an operator $A_{X_i}$ supported entirely on $X_i$, such that
$$\|A_\locci-A_{X_i}\|\leq 2a_1\cdot\br{\frac{4}{a_2\tau^2}}^{\frac{1}{\tau}}\cdot e^{-\frac{a_2}{2} (R)^{\tau}}.$$
\end{claim}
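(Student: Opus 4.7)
The plan is to construct $A_{X_i}$ by directly truncating the quasi-local expansion of $A_\locci$. Expanding $A=\sum_{\ell=1}^n g_\ell \sum_{|Z|=\ell} a_Z$ via Eq.~\eqref{quasi_locality of A} and computing $A_\locci=A-\tr_i[A]\otimes \iden_i/d$ termwise, I will observe that any summand $a_Z$ with $Z\not\ni i$ acts as identity on site $i$, so $\tr_i[a_Z]\otimes \iden_i/d=a_Z$ and this contribution cancels exactly. Only terms with $Z\ni i$ survive, giving
\[
A_\locci \;=\; \sum_\ell g_\ell \sum_{\substack{|Z|=\ell \\ Z\ni i}} \tilde a_Z, \qquad \tilde a_Z := a_Z - \tr_i[a_Z]\otimes \iden_i/d,
\]
with $\|\tilde a_Z\|\leq 2\|a_Z\|$ by the triangle inequality.

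The natural candidate is the further truncation to balls already contained in $X_i=B(R,i)$,
\[
A_{X_i} \;:=\; \sum_\ell g_\ell \sum_{\substack{|Z|=\ell,\ Z\ni i \\ Z\subseteq X_i}} \tilde a_Z,
\]
which is manifestly supported on $X_i$. The error $A_\locci-A_{X_i}$ is the same sum restricted to $Z\not\subseteq X_i$. The key geometric step is that any connected $Z\subset\Lambda$ (in particular any lattice ball) satisfies $\diam(Z)\leq |Z|-1$, simply by traversing a spanning subgraph. Hence whenever $Z\ni i$ and $|Z|\leq R$, every $j\in Z$ has $\dist(i,j)\leq |Z|-1<R$, so $Z\subseteq B(R,i)=X_i$. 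Consequently only terms with $\ell=|Z|>R$ can contribute to the error.

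Now I will apply the triangle inequality, the local-norm bound $\sum_{Z\ni i,\,|Z|=\ell}\|a_Z\|\leq\zeta=1$ built into \eqref{quasi_locality of A}, and the decay $g_\ell\leq a_1 e^{-a_2\ell^\tau}$, to obtain
\[
\|A_\locci-A_{X_i}\|\ \leq\ 2\sum_{\ell>R} g_\ell\ \leq\ 2a_1\sum_{\ell>R} e^{-a_2\ell^\tau}.
\]
Invoking Fact~\ref{fact:integrals}~(3) with $a=R$, $c=a_2$, $p=\tau$ bounds this tail by $e^{-\frac{a_2}{2}R^\tau}\bigl(1+\frac{1}{\tau}(\tfrac{2}{a_2\tau})^{1/\tau}\bigr)$, and a short elementary check (using $\tau\leq 1$, as in Theorem~\ref{thm:variancelowerbound}) gives $1+\frac{1}{\tau}(\tfrac{2}{a_2\tau})^{1/\tau}\leq (\tfrac{4}{a_2\tau^2})^{1/\tau}$, yielding the stated bound.

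I do not anticipate real obstacles: the argument is essentially bookkeeping once one spots the two clean facts above (identity-on-$i$ terms cancel exactly in $A_\locci$, and connectedness of $Z$ forces $\diam(Z)\leq|Z|-1$, so that only the tail $\ell>R$ of the quasi-local expansion matters). The only mildly delicate point is reconciling the exact constant $(4/(a_2\tau^2))^{1/\tau}$ with the shape of the tail-sum bound coming from Fact~\ref{fact:integrals}~(3); this is a short inequality valid in the regime $\tau\leq 1$.
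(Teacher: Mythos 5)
Your proposal is correct and follows essentially the same route as the paper: define $A_{X_i}$ by truncating the quasi-local expansion of $A_\locci$ to balls $Z\subseteq X_i$, observe that by the ball/connectedness geometry every error term must have $|Z|>R$, and bound the resulting tail $\sum_{\ell>R}g_\ell$ via Fact~\ref{fact:integrals}(3). The only surface differences are cosmetic (you phrase the containment argument via $\diam(Z)\leq |Z|-1$ while the paper uses the contrapositive $Z\not\subset X_i\Rightarrow \diam(Z)\geq R\Rightarrow |Z|\geq R$, and you make explicit the final constant inequality $1+\tfrac{1}{\tau}(\tfrac{2}{a_2\tau})^{1/\tau}\leq(\tfrac{4}{a_2\tau^2})^{1/\tau}$ that the paper leaves implicit in step~(4)).
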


\begin{proof}
Using the representation of $A$ in Theorem \ref{thm:variancelowerbound} and the fact that local operators not containing~$i$ in their support are removed, we can write
$$
A_\locci=\sum_{\substack{k, Z\subseteq \Lambda:\\ |Z|\leq k, Z\ni i}}  {g}_k\br{a_Z - \frac{1}{d} [\tr_i (a_Z)] \otimes \iden_i},
$$
Define
$$
A_{X_i}=\sum_{\substack{k, Z\subseteq X_i:\\ |Z|\leq k, Z\ni i}}  {g}_k \br{a_Z - \frac{1}{d} [\tr_i (a_Z)] \otimes \iden_i}.
$$
be the desired approximations of $A_\locci$ by removing all operators that are not contained in $X_i$. Observe that
\begin{align}
    \|A_\locci-A_{X_i}\|&\leq 2\sum_{\substack{k, Z\subseteq \Lambda:\\ Z\not\subset X_i, |Z|\leq k, Z\ni i}}  {g}_k \|a_Z\| 
    \overset{(1)}\leq 2\sum_{\substack{k,Z\subseteq \Lambda:\\ \diam(Z)\geq R, |Z|\leq k, Z\ni i}}  {g}_k\|a_Z\| \notag\\
    &\overset{(2)}\leq 2\sum_{k\geq R}  {g}_k\br{\sum_{Z: Z\ni i}\|a_Z\|}\overset{(3)}\leq 2\zeta\sum_{k\geq R}  {g}_k\notag\\
    &\leq 2\zeta a_1\sum_{k\ge R}e^{-a_2 k^{\tau}}\overset{(4)}\leq 2\zeta a_1\cdot a_2^{-\frac{1}{\tau}}\cdot \br{\frac{2}{\tau}}^{\frac{2}{\tau}}\cdot e^{-\frac{a_2}{2} (R)^{\tau}}.
    \label{approx_Xi}
\end{align}
For inequality $(1)$, note that since $Z\not\subset X_i$ and $Z\ni i$, the diameter of $Z$ (recall that $Z$ is a ball) must be larger than the radius of $X_i$, which is $R$. Inequality $(2)$ holds since $k\geq |Z| \geq \diam(Z)$, inequality $(3)$ uses Definition \ref{def:Quasi-local operators} and inequality $(4)$ uses Fact \ref{fact:integrals}. Since $\zeta=1$ for the given $A$ (see the statement of Theorem \ref{thm:variancelowerbound}),
this completes the proof.
\end{proof}
We now define $i_0$ as 
\begin{equation}
\label{eq:definitionofi0}
i_0:=\argmax_{i}  \|A_{i}\sqrt{\eta }\|_F.    
\end{equation}
The set of unitaries $U_{i_0}$ and the Haar measure $\mu(U_{i_0})$ are defined analogously. Plugging in $$
R=\br{\frac{2}{a_2}
\log\frac{8\cdot 4^{\frac{1}{\tau}}\cdot a_1}{\tau^{\frac{2}{\tau}}\cdot a_2^{\frac{1}{\tau}}\|A_{(i_0)}\sqrt{\eta }\|_F}}^{\frac{1}{\tau}}$$ in Claim \ref{claim:quasi-local-A_i} we get
 \begin{align}
\| A_\loccio - A_{X_{i_0}}\| \le \frac{1}{4}\|A_\loccio \sqrt{\eta} \|_F\label{choice_of_X_i_0}
\end{align} 
Substituting $a_2=\orderof{1/\beta}, a_1=\orderof{1}, \tau=\mathcal{O}(1)$, we find that we can ensure the condition~\eqref{choice_of_X_i_0} for
\begin{equation}
\label{eq:mainchoiceofR}
R=\diam(X_{i_0}) = \br{\beta\log\br{\frac{1}{\|A_\loccio \sqrt{\eta}\|_F}}}^{\Omega(1)}.
\end{equation}

\subsection{Variance of operators with small support: finite temperature to infinite temperature}
Having related $A$ to the operator $A_{(i_0)}$ (which is essentially supported on a small number of sites in the lattice, up to a tail decaying sub-exponentially in the radius), we now argue that it is simpler to bound the variance of $A_{(i_0)}$ in terms of its variance at infinite temperature, as long as some local rotations are allowed. In particular, we will show the existence of a unitary $U_{i_0}$ for which we can proceed in this fashion. The intuition here is that if rotations are allowed, then the eigenvectors of $A_{X_{i_0}}$ can be rearranged to yield largest possible variance with $\rho_\beta$. This turns out to be larger than the variance with $\eta$.  To make this precise, we prove the following claim.
\begin{claim}
\label{claim:existenceofuXi0}
There exists $U_{X_{i_0}}$ such that
 \begin{align}
\| U_{X_{i_0}}^\dagger A_\loccio U_{X_{i_0}} \sqrt{\rho_\beta }\|_F &\ge   \|A_\loccio\sqrt{\eta}\|_F - 2\| A_{X_{i_0}}-A_\loccio \|\ge \frac{1}{2}\|A_\loccio\sqrt{\eta}\|_F\notag,
\end{align}
where the second inequality uses Eq.~\eqref{choice_of_X_i_0}.
\end{claim}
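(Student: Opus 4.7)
\textbf{Proof plan for Claim \ref{claim:existenceofuXi0}.} The plan is to first reduce the claim from the quasi-local operator $A_\loccio$ to the strictly local approximant $A_{X_{i_0}}$ via two applications of the triangle inequality, and then exhibit the desired $U_{X_{i_0}}$ by a rearrangement/majorization argument on the local region $X_{i_0}$.

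First, using the triangle inequality in the Frobenius norm together with the bound $\|M\sqrt{\rho_\beta}\|_F \le \|M\|\cdot\|\sqrt{\rho_\beta}\|_F = \|M\|$ (since $\Tr[\rho_\beta]=1$), for any unitary $U_{X_{i_0}}$ supported on $X_{i_0}$ I get
\begin{align*}
\bigl\|U_{X_{i_0}}^\dagger A_\loccio U_{X_{i_0}}\sqrt{\rho_\beta}\bigr\|_F
&\ge \bigl\|U_{X_{i_0}}^\dagger A_{X_{i_0}} U_{X_{i_0}}\sqrt{\rho_\beta}\bigr\|_F - \|A_\loccio - A_{X_{i_0}}\|.
\end{align*}
By the analogous triangle inequality with $\sqrt{\eta}$ in place of $\sqrt{\rho_\beta}$,
$\|A_\loccio\sqrt{\eta}\|_F \le \|A_{X_{i_0}}\sqrt{\eta}\|_F + \|A_\loccio - A_{X_{i_0}}\|$.
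Combining these two estimates, the claim reduces to showing the existence of a unitary $U_{X_{i_0}}$ supported on $X_{i_0}$ such that
\begin{align*}
\bigl\|U_{X_{i_0}}^\dagger A_{X_{i_0}} U_{X_{i_0}}\sqrt{\rho_\beta}\bigr\|_F \ge \bigl\|A_{X_{i_0}}\sqrt{\eta}\bigr\|_F.
\end{align*}

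Now I would pass to the local region. Since $A_{X_{i_0}}$ acts trivially outside $X_{i_0}$, write $A_{X_{i_0}} = \tilde A\otimes \iden_{X_{i_0}^c}$ and $U_{X_{i_0}} = \tilde U\otimes \iden_{X_{i_0}^c}$. Let $\rho_{X_{i_0}} = \tr_{X_{i_0}^c}[\rho_\beta]$ and $\eta_{X_{i_0}} = \iden_{X_{i_0}}/d_{X_{i_0}}$ denote the reductions onto $X_{i_0}$. Then the two sides above become
\begin{align*}
\bigl\|U_{X_{i_0}}^\dagger A_{X_{i_0}} U_{X_{i_0}}\sqrt{\rho_\beta}\bigr\|_F^2 = \tr_{X_{i_0}}\!\bigl[\tilde A^2\,\tilde U\rho_{X_{i_0}}\tilde U^\dagger\bigr],\qquad
\bigl\|A_{X_{i_0}}\sqrt{\eta}\bigr\|_F^2 = \tr_{X_{i_0}}\!\bigl[\tilde A^2\,\eta_{X_{i_0}}\bigr],
\end{align*}
so it suffices to produce $\tilde U$ on $X_{i_0}$ with $\tr[\tilde A^2\,\tilde U\rho_{X_{i_0}}\tilde U^\dagger] \ge \tr[\tilde A^2\,\eta_{X_{i_0}}]$.

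The final step is the rearrangement/majorization argument. Choose $\tilde U$ to be the unitary that maps the eigenbasis of $\rho_{X_{i_0}}$ (eigenvalues ordered decreasingly) to the eigenbasis of $\tilde A^2$ (eigenvalues also ordered decreasingly). By Von Neumann's trace/rearrangement inequality,
\begin{align*}
\tr\!\bigl[\tilde A^2\,\tilde U\rho_{X_{i_0}}\tilde U^\dagger\bigr] \;=\; \sum_i \lambda_i^{\downarrow}(\tilde A^2)\,\lambda_i^{\downarrow}(\rho_{X_{i_0}}).
\end{align*}
Any density matrix majorizes the maximally mixed state on its support, so $\lambda^{\downarrow}(\rho_{X_{i_0}})$ majorizes $(1/d_{X_{i_0}},\dots,1/d_{X_{i_0}})$. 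Combining this with the fact that $\lambda^{\downarrow}(\tilde A^2)$ is non-negative and decreasing, an Abel summation $\sum_i a_i(p_i - q_i) = \sum_i(a_i - a_{i+1})\sum_{j\le i}(p_j - q_j) \ge 0$ yields
\begin{align*}
\sum_i \lambda_i^{\downarrow}(\tilde A^2)\,\lambda_i^{\downarrow}(\rho_{X_{i_0}}) \;\ge\; \frac{1}{d_{X_{i_0}}}\sum_i \lambda_i^{\downarrow}(\tilde A^2) \;=\; \tr[\tilde A^2\,\eta_{X_{i_0}}],
\end{align*}
which closes the argument. The second stated inequality of the claim, $\tfrac12\|A_\loccio\sqrt{\eta}\|_F$, follows immediately by plugging the tail bound \eqref{choice_of_X_i_0} into the first inequality. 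The main conceptual step is the majorization/rearrangement move; the triangle inequalities are routine, and the hardest bookkeeping will be to make sure no quasi-local tail sneaks outside $X_{i_0}$ when one extends $\tilde U$ by identity.
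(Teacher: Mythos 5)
Your proposal is correct and takes essentially the same route as the paper: two applications of the triangle inequality to swap $A_\loccio$ for its local truncation $A_{X_{i_0}}$, followed by a rearrangement argument that picks the unitary aligning the eigenbasis of $\rho_{\beta,X_{i_0}}$ (ordered decreasingly) with that of $A_{X_{i_0}}^2$ (ordered decreasingly), together with the majorization $\lambda^{\downarrow}(\rho_{X_{i_0}}) \succ (1/d_{X_{i_0}},\dots)$ to conclude $\sum_s p_s\varepsilon_s^2 \ge \frac{1}{d_{X_{i_0}}}\sum_s\varepsilon_s^2$. The paper phrases the final step as a direct observation that the minimum of $\sum_s p_s\varepsilon_s^2$ over descending $\{p_s\}$ summing to one is attained at the uniform distribution, whereas you make the same point via Abel summation and majorization; the two are interchangeable, and your phrasing is perhaps cleaner.
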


\begin{proof}[Proof of Claim~\ref{claim:existenceofuXi0}]
Recall that the goal is to show the existence of a unitary $U_{X_{i_0}}$ satisfying
 \begin{align}
\| U_{X_{i_0}}^\dagger A_\loccio U_{X_{i_0}} \sqrt{\rho_\beta }\|_F \ge   \|A_\loccio\sqrt{\eta}\|_F - 2\| A_{X_{i_0}}-A_\loccio \|.
\end{align}
We start from the following,
 \begin{align}
\| U_{X_{i_0}}^\dagger A_\loccio U_{X_{i_0}} \sqrt{\rho_\beta} \|_F 
&= \| U_{X_{i_0}}^\dagger \bigl[(A_\loccio - A_{X_{i_0}}) +  A_{X_{i_0}} \bigr]U_{X_{i_0}} \sqrt{\rho_\beta}  \|_F \notag \\
&\ge  \| U_{X_{i_0}}^\dagger A_{X_{i_0}} U_{X_{i_0}} \sqrt{\rho_\beta}  \|_F  - \| A_\loccio - A_{X_{i_0}}\|
\label{inequality_A_X_i_0_norm0}
\end{align}
and lower-bound the norm of $\| U_{X_{i_0}}^\dagger A_{X_{i_0}} U_{X_{i_0}} \sqrt{\rho_\beta} \|_F$. For this, define
\begin{align}
\rho_{\beta, X} := \tr_{X^\co}(  \rho_{\beta}) ,
\end{align}
where $ \tr_{X^\co}$ is the partial trace operation for the Hilbert space on $X^\co$. We define the spectral decomposition of $A_{X_{i_0}}$ as 
\begin{align}
A_{X_{i_0}} = \sum_{s=1}^{\mathcal{D}_{X_{i_0}}} \varepsilon_s \ket{\varepsilon_s}\bra{\varepsilon_s},
\end{align}
where $\varepsilon_s$ is ordered as $|\varepsilon_1| \ge |\varepsilon_2| \ge |\varepsilon_3| \ge \cdots$ and~$\mathcal{D}_{X_{i_0}}$ is the dimension of the Hilbert space on~$X_{i_0}$. Additionally,  define the spectral decomposition of $\rho_{\beta, X_{i_0}}$ as 
\begin{align}
\rho_{\beta, X_{i_0}} = \sum_{s=1}^{\mathcal{D}_{X_{i_0}}} p_{s} \ket{\mu_s}\bra{\mu_s},
\end{align}
where $p_s$ is ordered as $p_1 \ge p_2 \ge p_3 \ge \cdots$ and $\ket{\mu_s}$ is the $s$th eigenstate of $\rho_{\beta, X_{i_0}}$.
We now choose the unitary operator $U_{X_{i_0}}$ such that 
\begin{align}
U_{X_{i_0}}\ket{\mu_s} = \ket{\varepsilon_s}  \quad {\rm for} \quad s=1,2,\ldots,\mathcal{D}_{X_{i_0}}
\end{align}
We then obtain
\begin{align}
U_{X_{i_0}} \rho_{\beta, X_{i_0}} U_{X_{i_0}}^\dagger = \sum_{s=1}^{\mathcal{D}_{X_{i_0}}} p_{s} \ket{\varepsilon_{s}}\bra{\varepsilon_{s}}.
\end{align}
This implies
\begin{align}
\| U_{X_{i_0}}^\dagger A_{X_{i_0}} U_{X_{i_0}}\sqrt{\rho_\beta}  \|_F^2&= \Tr[U_{X_{i_0}}^\dagger A^2_{X_{i_0}} U_{X_{i_0}}\rho_\beta]\notag\\
&= \Tr[A^2_{X_{i_0}} U_{X_{i_0}}\rho_\beta U_{X_{i_0}}^\dagger] = \Tr_{X_{i_0}}[A^2_{X_{i_0}} U_{X_{i_0}}\rho_{\beta, X_{i_0}} U_{X_{i_0}}^\dagger]\notag\\
&= \sum_{s=1}^{\mathcal{D}_{X_{i_0}}} p_{s}  \varepsilon_s^2 \ge \frac{1}{\mathcal{D}_{X_{i_0}}} \sum_{s=1}^{\mathcal{D}_{X_{i_0}}}  \varepsilon_s^2 =  \|A_{X_{i_0}}\sqrt{\eta} \|_F^2,
\label{inequality_A_X_i_0_norm}
\end{align}
where the inequality used the fact that  $p_s, \varepsilon_s$ are given in descending order. Then, the minimization problem of  $\sum_s p_s \varepsilon_s$ for ${p_s}_s$ with the constraint $p_1 \ge p_2 \ge p_3 \ge \cdots$ has a solution of $p_1=p_2=\cdots =p_{D_{X_{i_0}}}$. Using the lower bound
 \begin{align}
 \|A_{X_{i_0}}\sqrt{\eta} \|_F=\| (A_{X_{i_0}}-A_{(i_0)} + A_\loccio) \sqrt{\eta}  \|_F  \ge \|A_\loccio\sqrt{\eta }\|_F - \| A_{X_{i_0}}-A_\loccio \|,
\end{align}
we can reduce inequality~\eqref{inequality_A_X_i_0_norm} to
 \begin{align}
\| U_{X_{i_0}}^\dagger A_{X_{i_0}} U_{X_{i_0}} \sqrt{\rho_\beta}\|_F \ge \|A_\loccio\sqrt{\eta }\|_F - \| A_{X_{i_0}}-A_\loccio \|.
\label{inequality_A_X_i_0_norm2}
\end{align}
By combining the inequalities~\eqref{inequality_A_X_i_0_norm0} and \eqref{inequality_A_X_i_0_norm2}, we obtain
\begin{align*}
\| U_{X_{i_0}}^\dagger A_\loccio U_{X_{i_0}} \sqrt{\rho_\beta} \|_F &\ge  \| U_{X_{i_0}}^\dagger A_{X_{i_0}} U_{X_{i_0}} \sqrt{\rho_\beta}  \|_F  - \| A_\loccio - A_{X_{i_0}}\|\\
&\geq   \| U_{X_{i_0}}^\dagger A_{X_{i_0}} U_{X_{i_0}} \sqrt{\rho_\beta}  \|_F  - 2\| A_\loccio - A_{X_{i_0}}\|\
\end{align*}
which proves the claimed statement.
\end{proof}

\subsection{Invariance under local unitaries}

Recall that we reduced the problem of variance of $A$ to that of the operator $A_{(i_0)}$ that is essentially supported on small number of sites. But in the process, we introduced several local unitaries (c.f.~previous subsections). In order to handle the action of these unitaries, we will use two claims which show that local unitaries do not make much difference in the relative behavior of spectra of $A$ and $H'$. To elaborate, consider any local operator $U_X$ acting on constant number of sites $X$ on the state $\rho_\beta$. It is expected that the quantum state $U^\dagger_X\rho_\beta U_X $ has ``similar" spectral properties as $\rho_\beta $. So if the  eigen-spectrum of the operator $A$ is strongly concentrated for $\rho_\beta$, one would expect this behavior to hold even for $U^\dagger_X\rho_\beta U_X$. We make this intuition rigorous in the following claim.

\begin{claim} \label{claim1_lower}
Let $c_1,c_2, \lambda$ be universal constants. Let $X\subseteq \Lambda$. For every unitary $U_X$ supported on~$X$, we have 
\begin{align}
&\| {Q}_\gamma^A   U_X \sqrt{\rho_\beta}  \|_F^2\le \exp\big({\lambda |X|}\big)\delta_\gamma^{\frac{c_2}{c_2+\beta}}. 
\label{arbirtary_unitary_op}
\end{align}
\end{claim}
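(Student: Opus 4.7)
The starting point is the algebraic identity
\[
\|Q_\gamma^A U_X \sqrt{\rho_\beta}\|_F^2 = \Tr[Q_\gamma^A U_X \rho_\beta U_X^\dagger] = \tfrac{1}{Z_\beta}\Tr\bigl[Q_\gamma^A e^{-\beta(H+V)}\bigr],
\]
where $V := U_X H U_X^\dagger - H$ is supported in a constant-radius neighborhood of $X$ (since $H$ is $\kappa$-local) and has operator norm $\|V\|\leq c_H|X|$ for a constant $c_H$ depending only on the lattice geometry and the bounded norm of the local terms of $H$. Because $H+V$ is unitarily equivalent to $H$, its partition function equals $Z_\beta$, so the problem reduces to comparing the weighted trace $\Tr[Q_\gamma^A e^{-\beta(H+V)}]$ with $\Tr[Q_\gamma^A e^{-\beta H}]=\delta_\gamma Z_\beta$.

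As a first step I would derive a crude but $\beta$-dependent bound via the $\chi^2$-divergence between $U_X\rho_\beta U_X^\dagger$ and $\rho_\beta$. Writing $U_X\rho_\beta U_X^\dagger = \rho_\beta\cdot T(\beta)$ with $T(\beta)=e^{\beta H}e^{-\beta(H+V)}$, the Dyson series shows $\|T(\beta)\|\leq e^{\beta\|V\|}$, and a short Golden--Thompson-style manipulation yields
\[
\Tr\bigl[(U_X\rho_\beta U_X^\dagger)^2\rho_\beta^{-1}\bigr]\leq e^{\beta\|V\|}.
\]
Feeding this into the Cauchy--Schwarz-type inequality $\Tr[P\sigma]\leq \sqrt{\Tr[P\rho]}\,\sqrt{\Tr[\sigma^2\rho^{-1}]}$ applied to $P=Q_\gamma^A$, $\sigma=U_X\rho_\beta U_X^\dagger$, $\rho=\rho_\beta$, produces
\[
\|Q_\gamma^A U_X\sqrt{\rho_\beta}\|_F^2 \leq e^{c_H\beta|X|/2}\,\delta_\gamma^{1/2}.
\]
This has the correct shape, but its prefactor grows with $\beta$ and the exponent on $\delta_\gamma$ is the fixed $1/2$, whereas the claim insists on a $\beta$-independent prefactor $e^{\lambda|X|}$ and an exponent $c_2/(c_2+\beta)$ that depends on $\beta$.

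To bridge this gap I would apply an interpolation argument. One natural route is to construct an operator-valued analytic family $F(z) := Q_\gamma^A U_X \rho_\beta^{z}$ on a complex strip, apply the Stein--Hirschman three-line theorem to appropriate Schatten norms of $F$, and tune the Schatten indices as a function of $\beta$ so that the interpolated exponent at $z=1/2$ equals $c_2/(c_2+\beta)$. A more hands-on alternative is to split the inverse-temperature interval $[0,\beta]$ into $k\sim\beta$ blocks of size $\mathcal{O}(1)$ and iterate the chi-squared bound on each block, accumulating a constant prefactor per block but only a fractional power of $\delta_\gamma$ at the end; optimizing $k$ then produces $\delta_\gamma^{c_2/(c_2+\beta)}$. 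The main obstacle is making this interpolation precise: the analytic continuation of $\rho_\beta^{z}$ needs care near the boundary when $\rho_\beta$ is nearly rank-deficient, and the iterated variant requires a Lieb--Robinson-type bound to keep the local perturbation quasi-local under short imaginary-time evolution by $H$---which is exactly where the locality of $H$ and the lattice geometry enter the constant $\lambda$.
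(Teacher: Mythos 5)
Your proposal starts from a genuinely different decomposition than the paper — you rewrite $U_X\rho_\beta U_X^\dagger$ as the Gibbs state of the perturbed Hamiltonian $H+V$ and try to control the deviation via a $\chi^2$-divergence — but as written it cannot reach the stated bound, and the interpolation step you sketch to close the gap is unlikely to work. The core obstruction is that your ``crude'' bound $\delta_\gamma^{1/2}e^{\Theta(\beta|X|)}$ is not a valid endpoint to interpolate \emph{from}: for $\beta<c_2$ the target exponent $c_2/(c_2+\beta)$ exceeds $1/2$, so the claim is \emph{stronger} than your crude bound in the $\delta_\gamma$-direction, and no three-line (or Stein--Hirschman) argument upgrades a weaker exponent to a stronger one. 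Worse, the prefactor your approach produces, $e^{\Theta(\beta|X|)}$, grows with $\beta$ and cannot be converted into the $\beta$-\emph{independent} $e^{\lambda|X|}$ demanded by the claim, because your only input about $V=U_XHU_X^\dagger-H$ is the operator-norm bound $\|V\|\lesssim|X|$. That bound is too weak: it is agnostic to which energy sectors $V$ couples, so all of $\beta$ ends up in the prefactor.

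The ingredient your route is missing is the energy-concentration bound for local operators (Lemma~\ref{lem:AKL16}), which says that $U_X$ has \emph{exponentially} small matrix elements between eigenspaces of $H'$ whose energies differ by more than $O(|X|)$, with a decay rate $\lambda$ that does not depend on $\beta$. The paper's proof is built entirely around this: decompose $Q_\gamma^AU_X\sqrt{\rho_\beta}$ using the microcanonical projectors $P_m^{H'}$, split into the ``near'' block $|m'-m|\le\Delta$ (where one pays a factor $\Delta e^{\beta(\Delta+1)}$ but gets $\delta_\gamma$ outright) and the ``far'' block $|m'-m|>\Delta$ (controlled by the concentration lemma as $Ce^{-\lambda(\Delta-|X|)}$), then optimize $\Delta\sim\beta^{-1}\log(1/\delta_\gamma)$. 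The shape $e^{\lambda|X|}\delta_\gamma^{c_2/(c_2+\beta)}$ is exactly the trade-off between those two blocks; the $\beta$ appears only as the price of energy, never as a Lieb--Robinson-type amplification of $|X|$. Your iterated block-splitting of the imaginary-time interval $[0,\beta]$ does not recreate this, because the relevant splitting is in \emph{energy}, not in inverse temperature. As a minor point, the intermediate estimate $\Tr[(U_X\rho_\beta U_X^\dagger)^2\rho_\beta^{-1}]\le e^{\beta\|V\|}$ is also not quite correct (a rank-one local example already gives $\approx e^{2\beta\|V\|}$), though that is fixable by a constant; the structural issues above are what make the approach fail.
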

Let us see a simple application of the claim. It allows us to control the variance of $A$ even after local operations are applied to it. More precisely, 
\begin{align}
    \label{eq:proofsketchstep2}
\begin{aligned}
\|AU_X\sqrt{\rho_\beta}\|^2_F&= \|AP_\gamma^A U_X\sqrt{\rho_\beta}\|^2_F+ \|A(\iden-P_\gamma^A )U_X\sqrt{\rho_\beta}\|^2_F\\
&\leq \gamma^2+\|A\|^2  \cdot \|(\iden-P_\gamma^A ) U_X \sqrt{\rho_\beta}  \|_F^2.
\end{aligned}
\end{align}
 By Claim \ref{claim1_lower}, the expression on the second line is upper bounded by $\gamma^2+\|A\|^2e^{\orderof{1} |X|}\delta_\gamma^{\orderof{1}/\beta}$. This upper bound on $\|AU_X\sqrt{\rho_\beta}\|_F$ suffices to provide an inverse-polynomial \emph{lower bound} on the variance of $A^2$, since we can lower bound $\delta_\gamma$ for an appropriate choice of $\gamma$. However we now show how one can polynomially improve upon this upper bound (thereby the lower bound on variance) using the following claim. This claim, along the lines of Claim \ref{claim1_lower}, also shows that local unitaries $U_X$ do not change the desired expectation values. 

\begin{claim} 
\label{claim2_lower}
Let $X\subseteq \Lambda$. For every unitary $U_X$ supported on~$X$, we have\footnote{Explicit $\orderof{1}$ constants that appear in this inequality are made clear in the proof.} 
\begin{align}
\left \| A    {Q}_\gamma^A U_X \sqrt{\rho_\beta}  \right\|_F^2 \le    \frac{1}{\gamma}\cdot \exp\big({\orderof{1}\cdot|X|}\big) \delta_\gamma^{\orderof{1}/\beta}+\orderof{1}\cdot |X|^6 \cdot \langle A^2\rangle.
\label{unitary_A_norm_small}
\end{align}
\end{claim}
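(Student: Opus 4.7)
The plan is to decompose the projector $Q_\gamma^A$ relative to a threshold $K>\gamma$ to be optimized, separating the moderate-eigenvalue contribution, which I can handle via Claim~\ref{claim1_lower}, from the large-eigenvalue contribution, which must be related directly to $\langle A^2\rangle$. For any such $K$, I would write $Q_\gamma^A = R_{\gamma,K} + P^A_{>K}$, where $R_{\gamma,K}$ is the spectral projector of $A$ onto eigenvalues of absolute value in $(\gamma,K]$. Since $A$ commutes with both projectors and they project onto orthogonal subspaces, this gives a Pythagorean split
\[
\|A Q_\gamma^A U_X\sqrt{\rho_\beta}\|_F^2 = \|A R_{\gamma,K} U_X\sqrt{\rho_\beta}\|_F^2 + \|A P^A_{>K} U_X\sqrt{\rho_\beta}\|_F^2.
\]

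For the moderate piece I would use $\|A R_{\gamma,K}\| \le K$ followed by Claim~\ref{claim1_lower}:
\[
\|A R_{\gamma,K} U_X\sqrt{\rho_\beta}\|_F^2 \le K^2 \|Q_\gamma^A U_X\sqrt{\rho_\beta}\|_F^2 \le K^2 \exp(\orderof{|X|})\,\delta_\gamma^{\orderof{1}/\beta},
\]
and the choice $K = 1/\sqrt{\gamma}$ would produce the first summand $\frac{1}{\gamma}\exp(\orderof{|X|})\delta_\gamma^{\orderof{1}/\beta}$ of the claimed bound. For the large-eigenvalue piece I would further split $U_X\sqrt{\rho_\beta} = U_X P^A_{>K/2}\sqrt{\rho_\beta} + U_X P^A_{\le K/2}\sqrt{\rho_\beta}$. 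On the first subterm, replacing $U_X^\dagger A U_X$ by $A + (U_X^\dagger A U_X - A)$ and noting that the correction $[A,U_X]$ involves only the $\orderof{|X|}$ quasi-local terms of $A$ touching $X$, its contribution is bounded by $\langle A^2\rangle$ plus a cross term controllable via Cauchy--Schwarz. For the second subterm, Lemma~\ref{multicommutator_norm_quasi_local} applied to the quasi-local operator $A$ and the local operator $U_X$ shows that $\|P^A_{>K} U_X P^A_{\le K/2}\|$ is sub-exponentially small in $K^{1/\tau_1}/|X|$, and this decay defeats the growth of $\|A P^A_{>K}\|$ once $K$ exceeds a suitable polynomial in $|X|$.

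The principal obstacle is establishing the polynomial bound $\orderof{|X|^6}\langle A^2\rangle$ on the large-eigenvalue contribution when $K = 1/\sqrt{\gamma}$. The core difficulty is that $\|A P^A_{>K}\|$ is unbounded, so naive operator-norm estimates are useless; instead I must exploit the near-orthogonality between the high-$A$-eigenvalue sector and the locally perturbed state $U_X \sqrt{\rho_\beta}$, with Lemma~\ref{multicommutator_norm_quasi_local} as the key technical tool. The exponent $|X|^6$ should emerge from composing the $\orderof{|X|^2}$ factor governing $\|[A,U_X]\|^2$ from quasi-locality, the polynomial-in-$|X|$ effective radius inside which the local perturbation by $U_X$ is felt (coming from the stretched-exponential tails in Lemma~\ref{multicommutator_norm_quasi_local}), and additional lattice-geometric factors from the Frobenius-norm bookkeeping in the spectral integration. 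A sharper analysis might improve the exponent, but a polynomial-in-$|X|$ dependence is intrinsic because $U_X$ can genuinely distort the local $A$-spectrum of $\rho_\beta$.
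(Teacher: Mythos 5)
Your two-tier split $Q_\gamma^A = R_{\gamma,K}+P^A_{>K}$ with $K = 1/\sqrt{\gamma}$ handles the moderate block cleanly and reproduces the first summand, but the treatment of the large-eigenvalue block has two real gaps, one for each subpiece of $U_X\sqrt{\rho_\beta}$, and as a result it cannot deliver the stated $\mathcal{O}(|X|^6)\langle A^2\rangle$ term.

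\textbf{(i) The $P^A_{\le K/2}$ subpiece.} You propose to bound $\|A\,P^A_{>K}U_X P^A_{\le K/2}\sqrt{\rho_\beta}\|_F$ by multiplying the sub-exponentially small $\|P^A_{>K}U_X P^A_{\le K/2}\|$ (from Lemma~\ref{multicommutator_norm_quasi_local}) against the growth of $\|A P^A_{>K}\|$. But $\|A P^A_{>K}\|\le\|A\|$ and for a quasi-local $A$ this is $\Theta(n)$ in general, while the Lieb-Robinson-type decay is a function only of $K/|X|$ and has no system-size dependence. So with $K=1/\sqrt{\gamma}$ the product is not small uniformly in $n$. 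The paper avoids $\|A\|$ altogether: it decomposes $Q_\gamma^A=\sum_{s\ge 1}P_s^A$ into width-$\gamma$ shells, uses the pointwise bound $\|A P_s^A\|\le(s+1)\gamma$, and then pushes the $s'$-decomposition of $U_X\sqrt{\rho_\beta}$ onto the shell weights $p_{s'}=\|P_{s'}^A\sqrt{\rho_\beta}\|_F^2$ (which sum to $1$), so that the growth in $s$ is always paired with the decay in $|s-s'|$ and the weights $p_{s'}$ — this is what makes the Cauchy--Schwarz step close, without ever invoking $\|A\|$.

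\textbf{(ii) The $P^A_{>K/2}$ subpiece.} Commuting $A$ through $U_X$ gives
\begin{align}
A\,P^A_{>K}U_X P^A_{>K/2}\sqrt{\rho_\beta}=P^A_{>K}U_X A\,P^A_{>K/2}\sqrt{\rho_\beta}+P^A_{>K}[A,U_X]P^A_{>K/2}\sqrt{\rho_\beta},\nn
\end{align}
and while the first term is controlled by $\sqrt{\langle A^2\rangle}$, the commutator term costs $\|[A,U_X]\|=\mathcal{O}(|X|)$, i.e.\ an additive $\mathcal{O}(|X|^2)$ after squaring that is \emph{not} multiplied by $\langle A^2\rangle$ or $\delta_\gamma$. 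This is strictly weaker than the claimed bound, and it actually breaks the downstream use: in the proof of Theorem~\ref{thm:variancelowerbound} one takes $\gamma=\gamma_0=\|A_{(i_0)}\sqrt{\eta}\|_F/\sqrt{32}$, and $|X_{i_0}|$ grows as $\log(1/\gamma_0)^{\Omega(1)}$, so for small $\gamma_0$ a free additive $\mathcal{O}(|X|^2)$ term would dominate $\gamma_0^2$ and the dichotomy on page~\pageref{sec:Proof of the Theorem variancelowerbound} would give no information. In the paper the term that multiplies $|X|^6$ really is $\langle A^2\rangle$, because it arises from $\sum_{s'}(\gamma s')^2 p_{s'}\le\sum_{s'}\|P^A_{s'}A\sqrt{\rho_\beta}\|_F^2=\langle A^2\rangle$, not from a commutator estimate. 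So the shell decomposition is not merely a bookkeeping convenience; it is what keeps every term homogeneous in $\delta_\gamma$ or $\langle A^2\rangle$.

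In short, the moderate-block argument is fine and essentially matches the paper, but the large-block argument needs to be replaced by a shell decomposition of \emph{both} sides of $U_X$ (as in Eq.~\eqref{upper_bound_P^A_s U_X sqrt_rho_beta}--\eqref{eq:AQurhobetabound1}) rather than a single threshold $K$.
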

Proof of both the Claims~\ref{claim1_lower},~\ref{claim2_lower} appear in Section~\ref{subsec:2claimproofs}. An immediate corollary of this claim is the following, that improves upon Eq.~\eqref{eq:proofsketchstep2}.
\begin{cor}
\label{cor:unitnochange}
Let $X$ be a subset of $\Lambda$ of size $|X|=\mathcal{O}(1)$. For every unitary $U_X$ supported on~$X$, we have
 \begin{align}
\| A U_X \sqrt{\rho_\beta} \|_F^2 \le \gamma^2 + \frac{e^{\orderof{1}\cdot|X|} \delta_\gamma^{\orderof{1}/\beta}}{\gamma}+\orderof{1}|X|^6 \langle A^2\rangle.
\label{unitary_A_norm_small3}
\end{align}
\end{cor}
\begin{proof}
Similar to Eq.~\eqref{eq:proofsketchstep2}, we upper bound $\| A U_X \sqrt{\rho_\beta} \|_F^2$ as
 \begin{align}
\| A U_X \sqrt{\rho_\beta} \|_F^2  &= \left \| A  {P}_\gamma^A U_X \sqrt{\rho_\beta}  \right\|_F^2  +\left \| A   {Q}_\gamma^A U_X \sqrt{\rho_\beta}  \right\|_F^2
\le \gamma^2 +\left \| A   {Q}_\gamma^A U_X \sqrt{\rho_\beta}  \right\|_F^2,
\label{unitary_A_norm_small0}
\end{align}
since $Q^A_\gamma=\iden-P^A_\gamma$. 
By combining this with Claim \ref{claim2_lower}, the corollary follows.
\end{proof}

\subsection{Proof of the Theorem \ref{thm:variancelowerbound}}\label{sec:Proof of the Theorem variancelowerbound}
We are now ready to prove the main theorem statement. The main idea of the proof is the following: if the spectrum of $A$ is strongly concentrated for the Gibbs state $\rho_\beta$, the concentration can be proven to be protected to arbitrary local unitary operations (see Claims~\ref{claim1_lower} and~\ref{claim2_lower}). On the other hand, by choosing local unitary operations appropriately, we can relate the variance of the operator $A$ (rotated by certain local unitary $U_{i_0}$ on the site $i_0$) to  the variance of the operator $A_{(i_0)}$ and hence give a good lower bound to the variance (see Claim~\ref{claim:existenceofuXi0}). Combining the two results allows us to lower bound the variance of $A$ and hence prohibits the strong spectral concentration of the operator $A$. We formally prove this now.

\begin{proof}
Let $U_{X_{i_0}}$ be the unitary as chosen in Claim~\ref{claim:existenceofuXi0}. Using  Eq.~\eqref{def_A_i_d}, we obtain the following expression for~$U_{X_{i_0}}^\dagger A_{(i_0)} U_{X_{i_0}}$
 \begin{align}
U_{X_{i_0}}^\dagger A_{(i_0)} U_{X_{i_0}}  = 
U_{X_{i_0}}^\dagger A U_{X_{i_0}}  - \int d\mu (U_{i_0}) U_{X_{i_0}}^\dagger U_{i_0}^\dagger AU_{i_0} U_{X_{i_0}} .
\end{align}
Using triangle inequality, we have
 \begin{align}
  \| U_{X_{i_0}}^\dagger A_{(i_0)} U_{X_{i_0}} \sqrt{\rho_\beta} \|_F &\le   
\| U_{X_{i_0}}^\dagger A U_{X_{i_0}} \sqrt{\rho_\beta} \|_F + \int d\mu (U_{i_0}) \|  U_{X_{i_0}}^\dagger U_{i_0}^\dagger AU_{i_0} U_{X_{i_0}}\sqrt{\rho_\beta}  \|_F  \notag \\
&=\|A U_{X_{i_0}}\sqrt{\rho_\beta}  \|_F + \int d\mu (U_{i_0}) \|  AU_{i_0} U_{X_{i_0}}\sqrt{\rho_\beta}  \|_F 
\label{eq:upperboundedagAurho1}
\end{align}
This implies
\begin{align*}
\| U_{X_{i_0}}^\dagger A_{(i_0)} U_{X_{i_0}} \sqrt{\rho_\beta} \|_F^2 &\le \Big(\|A U_{X_{i_0}}\sqrt{\rho_\beta}  \|_F + \int d\mu (U_{i_0}) \|  AU_{i_0} U_{X_{i_0}}\sqrt{\rho_\beta}  \|_F\Big)^2\\
&\leq 2\|A U_{X_{i_0}}\sqrt{\rho_\beta}  \|^2_F + 2\Big(\int d\mu (U_{i_0}) \|  AU_{i_0} U_{X_{i_0}}\sqrt{\rho_\beta}  \|_F\Big)^2.
\end{align*}
Now, we can use Corollary \ref{cor:unitnochange} to obtain the upper bound 
\begin{align}
 \begin{aligned}
\| U_{X_{i_0}}^\dagger A_{(i_0)} U_{X_{i_0}} \sqrt{\rho_\beta} \|_F^2 &\le 4\gamma^2 + \frac{e^{\orderof{1} |X_{i_0}|}\delta_\gamma^{\orderof{1}/\beta}}{\gamma}+ \orderof{1}|X_{i_0}|^6\langle A^2\rangle.
\label{norm_quasi_local0}
\end{aligned}
\end{align}
Using Claim \ref{claim:existenceofuXi0}, we have 
\begin{align}
\| U_{X_{i_0}}^\dagger A_{(i_0)} U_{X_{i_0}} \sqrt{\rho_\beta} \|_F  \ge  \frac{1}{2} \|A_{(i_0)}\sqrt{\eta }\|_F. \label{norm_quasi_local}
\end{align}
 Putting together the upper bound in Eq.~\eqref{norm_quasi_local0} and the lower bound in Eq.~\eqref{norm_quasi_local}, we have
 \begin{align}
4\gamma^2 + \frac{e^{\orderof{1} |X_{i_0}|}\delta_\gamma^{\orderof{1}/\beta}}{\gamma}+ \orderof{1}|X_{i_0}|^6 \langle A^2\rangle
\ge \frac{1}{4} \|A_{(i_0)}\sqrt{\eta }\|_F^2.
\end{align}
By choosing as $\gamma^2=\|A_{(i_0)}\sqrt{\eta }\|_F^2 /32=:\gamma_0^2$, we obtain 
 \begin{align}
\frac{e^{\orderof{1} |X_{i_0}|}\delta_{\gamma_0}^{\orderof{1}/\beta}}{\gamma_0}+ \orderof{1}|X_{i_0}|^6 \langle A^2\rangle
\ge \gamma_0^2.
\end{align}
This inequality implies that either 
$$
\delta_{\gamma_0}\geq \br{\gamma_0^3e^{-\orderof{1} |X_{i_0}|}}^{\beta\cdot \mathcal{O}(1)}
$$ or 
$$
\langle A^2\rangle \geq \frac{\Omega(1)\gamma_0^2}{|X_{i_0}|^6}.
$$ 
Combining with Eq.~\eqref{staring_ineq_variance}, we conclude that 
$$
\langle A^2\rangle \geq \min\Big\{\gamma_0^2\cdot\br{\gamma_0^3e^{-\orderof{1} |X_{i_0}|}}^{\beta\cdot \mathcal{O}(1)}, \frac{\Omega(1)\gamma_0^2}{|X_{i_0}|^6} \Big\}.
$$
Eq.~\eqref{eq:mainchoiceofR} ensures that
$$
|X_{i_0}|= \orderof{1} R^D=\beta^{\Omega(1)}\log\Big(\frac{1}{\|A_{(i_0)}\sqrt{\eta}\|_F}\Big)^{\Omega(1)},
$$ 
where we have used the assumption that lattice dimension $D$ is $\orderof{1}$. Plugging in this expression for $|X_{i_0}|$ with the choice of $\gamma_0$, we find
\begin{align*}
\tr(A^2\rho_\beta)=\langle A^2\rangle &\geq \min\Big\{\|A_{(i_0)}\sqrt{\eta }\|_F^{\beta\cdot \mathcal{O}(1)}\cdot e^{-\beta\mathcal{O}(1)|X_{i_0}|}, \frac{\Omega(1)\|A_{(i_0)}\sqrt{\eta }\|_F^{\Omega(1)}}{|X_{i_0}|^6} \Big\}\\
    &\geq \min\Big\{\|A_{(i_0)}\sqrt{\eta }\|_F^{\beta^{\Omega(1)}}, \frac{\Omega(1)\|A_{(i_0)}\sqrt{\eta }\|_F^{\Omega(1)}}{\beta^{\orderof{1}}\log(\frac{1}{\|A_{(i_0)}\sqrt{\eta}\|_F})^{\orderof{1}}}\Big\}\\
    &\geq \min\Big\{\|A_{(i_0)}\sqrt{\eta }\|_F^{\beta^{\Omega(1)}}, \frac{\Omega(1)\|A_{(i_0)}\sqrt{\eta }\|_F^{\Omega(1)}}{\beta^{\orderof{1}}} \Big\}\\
    &\geq \|A_{(i_0)}\sqrt{\eta }\|_F^{\beta^{\Omega(1)}}.
\end{align*}
Since we chose $i_0$ in Eq.~\eqref{eq:definitionofi0} such that $\|A_\loccio\sqrt{\eta }\|_F=\max_i \|A_\locci\sqrt{\eta }\|_F$, this proves the theorem.
\end{proof}

\subsection{Proof of Claims \ref{claim1_lower} and \ref{claim2_lower}}
\label{subsec:2claimproofs}

\begin{proof}[Proof of Claim~\ref{claim1_lower}]
Recall that the goal is to show that for every  $X\subseteq \Lambda$ and arbitrary unitaries $U_X$,  
\begin{align}
&\| {Q}_\gamma^A   U_X \sqrt{\rho_\beta}  \|_F^2\le c_1 e^{\lambda |X|}\delta_\gamma^{\frac{c_2}{c_2+\beta}},
\label{Recall_inequality}
\end{align}
where $Q^A_\gamma=\iden-P^A_\gamma$ and $P^A_\gamma$ was defined in  Eq.~\eqref{eq:defnofPgamma}. 
To prove this inequality, we start from the following expression:
\begin{align}
\|  {Q}_\gamma^A   U_X \sqrt{\rho_\beta}  \|_F^2&= \left \| \sum_{m \in \mathbb{Z}}  {Q}_\gamma^A   U_X P_m^{H'} \sqrt{ \rho_\beta} \right \|_F^2 =\sum_{m \in \mathbb{Z}} \left \|  {Q}_\gamma^A   U_X P_m^{H'} \sqrt{ \rho_\beta} \right \|_F^2 \label{Q_A_gamma_U_X_rho_beta}
\end{align}
with 
\begin{align}
P^{H'}_m := \sum_{j:\Ene_j \in (m , m+1]} \ket{j}\bra{j}, \label{def_P_H_m}
\end{align}
where $\ket{j}$ is the eigenvector of the Hamiltonian $H'$ with $\Ene_j$ the corresponding eigenvalue.
Note that $\sum_{m\in \mathbb{Z}}P^{H'}_m = \iden$ and  we have $P^{H'}_m=0$ for $m \notin [-\|H'\| , \|H'\|]$.
For some $\Delta>0$ which we pick later, we now decompose $\left \|  {Q}_\gamma^A   U_X P_m^{H'} \sqrt{ \rho_\beta} \right \|_F^2$ as a sum of the following quantities
\begin{align}
\left \|  {Q}_\gamma^A   U_X P_m^{H'} \sqrt{ \rho_\beta} \right \|_F^2 
= \left \|  {Q}_\gamma^A \bigl(P^{H'}_{> m+\Delta } + P^{H'}_{< m-\Delta } + P^{H'}_{ [m-\Delta,m+\Delta] }\bigr)  U_X P_m^{H'} \sqrt{ \rho_\beta} \right \|_F^2 ,
\label{Decomp_Q_A_gamma_U_X_P_mH}
\end{align}
where 
\begin{align}
P^{H'}_{> m+\Delta }:=\sum_{m'>m+\Delta} P^{H'}_{m'} ,\quad 
P^{H'}_{< m+\Delta }:=\sum_{m'<m+\Delta} P^{H'}_{m'} ,\quad 
P^{H'}_{[m-\Delta,m+\Delta] }:=\sum_{m-\Delta\le m'\le m+\Delta} P^{H'}_{m'} .
\label{definition_of_P_H_range}
\end{align}
Summing over all $m\in \mathbb{Z}$ in Eq.~\eqref{Decomp_Q_A_gamma_U_X_P_mH} and using Eq.~\eqref{Q_A_gamma_U_X_rho_beta} followed by the triangle inequality gives us the following inequality
\begin{align}
\begin{aligned}
&\|  {Q}_\gamma^A   U_X \sqrt{\rho_\beta}  \|_F^2\\
&\le 
2\underbrace{\sum_{m \in \mathbb{Z}}\left \|  {Q}_\gamma^A P^{H'}_{ [m-\Delta,m+\Delta] }  U_X P_m^{H'} \sqrt{ \rho_\beta} \right \|_F^2}_{:=(1)} 
+2\underbrace{\sum_{m \in \mathbb{Z}}\left \|  {Q}_\gamma^A \bigl(P^{H'}_{> m+\Delta } + P^{H'}_{< m-\Delta } \bigr) U_X P_m^{H'} \sqrt{ \rho_\beta} \right \|_F^2}_{:=(2)}.
\end{aligned}
 \label{Q_A_gamma_U_X_rho_beta_2}
\end{align}
We first bound (1) in Eq.~\eqref{Q_A_gamma_U_X_rho_beta_2}. Note that for every  $m$, 
\begin{align}
\begin{aligned}
\left \|   {Q}_\gamma^A P^{H'}_{ [m-\Delta,m+\Delta] }  U_X P_m^{H'} \sqrt{ \rho_\beta} \right \|_F^2
&\le  \| P_m^{H'} \sqrt{ \rho_\beta} \|^2 \cdot   
\left \|   {Q}_\gamma^A P^{H'}_{ [m-\Delta,m+\Delta] }\right \|_F^2   \\
&\le  e^{-\beta m }\left \|   {Q}_\gamma^A P^{H'}_{ [m-\Delta,m+\Delta] }\right \|_F^2,
\end{aligned}
\label{upper_bound_Q_A_gamma_P_H}
\end{align}
where the first inequality used Eq.~\eqref{fact:relatingfrobeniusAB}. The expression in the last line can be upper bounded as
\begin{align*}
e^{-\beta m }\left \|   {Q}_\gamma^A P^{H'}_{ [m-\Delta,m+\Delta] }\right \|_F^2 &= e^{-\beta m }\Tr\left[{Q}_\gamma^A P^{H'}_{ [m-\Delta,m+\Delta]}\right]\\
&\leq e^{-\beta m }e^{\beta(m+\Delta+1)}\Tr\left[{Q}_\gamma^A P^{H'}_{ [m-\Delta,m+\Delta]}\rho_\beta P^{H'}_{ [m-\Delta,m+\Delta]}\right]\\
&= e^{\beta(\Delta+1)}\left \|   {Q}_\gamma^A P^{H'}_{ [m-\Delta,m+\Delta] }\sqrt{\rho_\beta}\right \|_F^2.
\end{align*} 
where the inequality follows from $$e^{-\beta(m+\Delta+1)}P^{H'}_{ [m-\Delta,m+\Delta]}\preceq P^{H'}_{ [m-\Delta,m+\Delta]}\rho_\beta P^{H'}_{ [m-\Delta,m+\Delta]}.$$ Thus we conclude, from Equation \ref{upper_bound_Q_A_gamma_P_H}, that
\begin{align*}
    \left \|   {Q}_\gamma^A P^{H'}_{ [m-\Delta,m+\Delta] }  U_X P_m^{H'} \sqrt{ \rho_\beta} \right \|_F^2 &\leq e^{\beta(\Delta+1)}\left \|   {Q}_\gamma^A P^{H'}_{ [m-\Delta,m+\Delta] }\sqrt{\rho_\beta}\right \|_F^2\\
    &=e^{\beta(\Delta+1)}\sum_{m'\in [m-\Delta,m+\Delta]}\left \|   {Q}_\gamma^A P^{H'}_{ m' }\sqrt{\rho_\beta}\right \|_F^2.
\end{align*}
So the first term (1) in Eq.~\eqref{Q_A_gamma_U_X_rho_beta_2} can be bounded by
\begin{align}
\sum_{m \in \mathbb{Z}}\left \|  {Q}_\gamma^A P^{H'}_{ [m-\Delta,m+\Delta] }  U_X P_m^{H'} \sqrt{ \rho_\beta} \right \|_F^2 &\le 
 e^{\beta(\Delta+1)}\sum_{m\in \mathbb{Z}}\sum_{m'\in [m-\Delta,m+\Delta]}\left \|   {Q}_\gamma^A P^{H'}_{ m' }\sqrt{\rho_\beta}\right \|_F^2\notag\\
 &\overset{(1)}= e^{\beta(\Delta+1)}\cdot 2\Delta\sum_{m'}\left \|   {Q}_\gamma^A P^{H'}_{ m' }\sqrt{\rho_\beta}\right \|_F^2\notag\\
 &= 2\Delta e^{\beta(\Delta+1)} \left \|   {Q}_\gamma^A\sqrt{\rho_\beta}\right \|_F^2 = 2\Delta e^{\beta(\Delta+1)}\delta_\gamma,
 \label{Q_A_gamma_U_X_rho_beta_2_first}
\end{align}
where in $(1)$ we use the fact that each $m'$ appears $2\Delta$ times in the summation $\sum_{m \in \mathbb{Z}} \sum_{m'\in [m-\Delta,m+\Delta]}$.

We now move on to upper bound $(2)$ in Eq.~\eqref{Q_A_gamma_U_X_rho_beta_2}  as follows.
We have 
\begin{align}
\left \|  {Q}_\gamma^A \bigl(P^{H'}_{> m+\Delta } + P^{H'}_{< m-\Delta } \bigr) U_X P_m^{H'} \sqrt{ \rho_\beta} \right \|_F^2 
\le \left \| \bigl(P^{H'}_{> m+\Delta } + P^{H'}_{< m-\Delta } \bigr) U_X P_m^{H'} \right \| \cdot  \| P_m^{H'} \sqrt{ \rho_\beta} \|_F^2  ,
\label{upper_bound_Q_A_gamma_P_H2}
\end{align}
where we use $\|{Q}_\gamma^A\|\le 1$. 
Using Lemma~\ref{lem:AKL16}, we obtain 
\begin{align}
\left\| \bigl(P^{H'}_{> m+\Delta } + P^{H'}_{< m-\Delta } \bigr) U_X   P^{H'}_m \right\| \le  C e^{-\lambda (\Delta -|X| ) }, \label{ineq:Arad16}
\end{align}
where $C$ and $\lambda$ are universal constants.
Plugging Eq.~\eqref{ineq:Arad16} into Eq.~\eqref{upper_bound_Q_A_gamma_P_H2}, we get
\begin{align}
\left \|  {Q}_\gamma^A \bigl(P^{H'}_{> m+\Delta } + P^{H'}_{< m-\Delta } \bigr) U_X P_m^{H'} \sqrt{ \rho_\beta} \right \|_F^2 
\le  C e^{-\lambda (\Delta -|X| ) }  \| P_m^{H'} \sqrt{ \rho_\beta} \|_F^2  .
\label{upper_bound_Q_A_gamma_P_H2_2}
\end{align}
With this, we can bound (2) in Eq.~\eqref{Q_A_gamma_U_X_rho_beta_2}  by
\begin{align}
\sum_{m \in \mathbb{Z}}\left \|  {Q}_\gamma^A \bigl(P^{H'}_{> m+\Delta } + P^{H'}_{< m-\Delta } \bigr) U_X P_m^{H'} \sqrt{ \rho_\beta} \right \|_F^2   \le 
\sum_{m \in \mathbb{Z}}C e^{-\lambda (\Delta -|X| ) }  \| P_m^{H'} \sqrt{ \rho_\beta} \|_F^2  = C e^{-\lambda (\Delta -|X| ) }  ,
\label{Q_A_gamma_U_X_rho_beta_2_second}
\end{align}
where the equality used the fact that $\sum_{m\in\mathbb{Z}}\| P_m^{H'} \sqrt{ \rho_\beta} \|_F^2=\tr (\rho_\beta)=1$. Putting together Eq.~\eqref{Q_A_gamma_U_X_rho_beta_2_first} and \eqref{Q_A_gamma_U_X_rho_beta_2_second} into Eq.~\eqref{Q_A_gamma_U_X_rho_beta_2}, we finally obtain the upper bound~of 
\begin{align}
\|  {Q}_\gamma^A   U_X \sqrt{\rho_\beta}  \|_F^2&\le 
4\Delta e^{\beta (\Delta +1)} \delta_\gamma + 2C e^{-\lambda (\Delta -|X| ) }.
\end{align}
We let $\Delta =c \beta^{-1} \log(1/\delta_\gamma)$, which gives 
\begin{align}
&\|  {Q}_\gamma^A   U_X \sqrt{\rho_\beta}  \|_F^2\le c_1 e^{\lambda |X|}\delta_\gamma^{\frac{c_2}{c_2+\beta}}, 
\label{arbirtary_unitary_op_proof}
\end{align}
for some universal constants $c_1,c_2$. This proves the claim statement.
\end{proof}

We now proceed to prove Claim~\ref{claim2_lower}. 

\begin{proof}[Proof of Claim~\ref{claim2_lower}] Recall that the aim is to prove that for every $X\subseteq \Lambda$ and unitary $U_X$ we have
$$
\left \| A    {Q}_\gamma^A U_X \sqrt{\rho_\beta}  \right\|_F^2 \le     \frac{e^{\orderof{1} |X|}\delta_\gamma^{\orderof{1}/\beta}}{\gamma}+ \frac{\orderof{1}|X|^5}{\gamma^5}\langle A^2\rangle
$$
We let $c_5,\lambda_1, \tau_1$ be $\orderof{1}$ constants as defined in Lemma \ref{multicommutator_norm_quasi_local} and $c_1,c_2, \lambda =\orderof{1}$ be constants given by Claim \ref{claim1_lower}. For the proof, we first decompose ${Q}_\gamma^A$ as 
 \begin{align}
&{Q}_\gamma^A = \sum_{s=1}^\infty P^A_s ,\quad P^A_s:= P^A_{(s\gamma,(s+1)\gamma]} +  P^A_{[-(s+1)\gamma,-s\gamma)},
\label{definition_Q_gamma_P_gamma_A}
\end{align}
where $P^A_{[a,b]}$ is defined as $P^A_{[a,b] }:=\sum_{a\le \omega\le b}\Pi_{\omega}$ (where $\Pi_{\omega}$ is the subspace spanned by the eigenvectors of $A$ with eigenvalue $\varepsilon$).
Using this notation, observe that $ \| A {Q}_\gamma^A U_X \sqrt{\rho_\beta}\|_F$ can bounded~by
 \begin{align}
\left \| A   {Q}_\gamma^A U_X \sqrt{\rho_\beta}  \right\|_F^2
=  \sum_{s=1}^\infty  \left \| A  P^A_s U_X \sqrt{\rho_\beta}  \right\|_F^2   
\le& \sum_{s=1}^\infty  \left \| A  P^A_s\right\|^2 \cdot \left\|P^A_s U_X \sqrt{\rho_\beta}  \right\|_F^2 \notag \\
\le& \gamma^2 \sum_{s=1}^\infty (s+1)^2  \left \| P^A_s U_X \sqrt{\rho_\beta}  \right\|_F^2 ,
\label{summ_s_A_uX/}
\end{align}
where we use $\|A  P^A_s\| \le \gamma(s+1)$ from the definition~\eqref{definition_Q_gamma_P_gamma_A} of $P^A_s$. 
The norm $\left \| P^A_s U_X \sqrt{\rho_\beta}  \right\|_F$ is bounded from above by
 \begin{align}
 \left \| P^A_s U_X \sqrt{\rho_\beta}  \right\|_F =  \left \| P^A_s U_X \sum_{s'=0}^\infty  P^A_{s'}  \sqrt{\rho_\beta}  \right\|_F \le  \sum_{s'=0}^\infty \left \| P^A_s U_X  P^A_{s'} \sqrt{\rho_\beta}  \right\|_F,
 \label{upper_bound_P^A_s U_X sqrt_rho_beta}
\end{align}
where we use $\sum_{s'=0}^\infty  P^A_{s'}=\iden$ in the first equation  and in the second inequality we use the triangle inequality for the Frobenius norm. Using Lemma \ref{multicommutator_norm_quasi_local} we additionally have
\begin{align}
\label{eq:PsPs'decayexponential}
\| P^A_s U_X  P^A_{s'}  \| \le  c_5|X| e^{-({\lambda_1}\gamma |s-s'|/|X|)^{1/{\tau_1}}} \quad \text{ for every } s,s'\geq 0,
\end{align}
where $c_5, {\lambda_1}$ are as given in Lemma \ref{multicommutator_norm_quasi_local}. Using this, we have
\begin{align}
\label{eq:PsuP0beta}
\begin{aligned}
    \left \| P^A_s U_X  P^A_{0} \sqrt{\rho_\beta}  \right\|_F
    &=    \left \| P^A_s U_X  P^A_{0} \sqrt{\rho_\beta}  \right\|_F^{1/2}\cdot \left \| P^A_s U_X  P^A_{0} \sqrt{\rho_\beta}  \right\|_F^{1/2}\\
    &\overset{(1)}\leq \br{2c_1 e^{\lambda|X|}\delta_\gamma^{\frac{c_2}{c_2+\beta}}}^{1/2} \cdot \| P^A_s U_X  P^A_{0} \sqrt{\rho_\beta}  \|_F^{1/2}\\
    &\overset{(2)}\leq \br{2c_1 e^{\lambda|X|}\delta_\gamma^{\frac{c_2}{c_2+\beta}}}^{1/2} \cdot \left \| P^A_s U_X  P^A_{0}\right\|^{1/2}\cdot \|\sqrt{\rho_\beta}\|_F^{1/2}\\
    &\overset{(3)}\leq \br{2c_1 e^{\lambda|X|}\delta_\gamma^{\frac{c_2}{c_2+\beta}}}^{1/2}\cdot \br{c_5|X| e^{-(\lambda_1\gamma |s|/|X|)^{1/\tau_1}}}^{1/2}\cdot 1\\
    &\overset{(4)}=(\delta'_\gamma)^{1/2}\cdot \br{c_5|X| e^{-(\lambda_1\gamma |s|/|X|)^{1/\tau_1}}}^{1/2},
    \end{aligned}
\end{align}
where inequality $(1)$ uses $\left \| P^A_s U_X  P^A_{0} \sqrt{\rho_\beta}  \right\|_F\le 2c_1 \delta_\gamma^{\frac{c_2}{c_2+\beta}}$  
\footnote{
Since $Q^A_\gamma \le \iden$ and $P^A_{0}=P^A_\gamma=\iden-Q^A_\gamma$, we have
$$
\left \| P^A_s U_X  P^A_{0} \sqrt{\rho_\beta}  \right\|_F 
\le \left\| Q^A_\gamma U_X  (\iden-Q^A_\gamma) \sqrt{\rho_\beta} \right\|_F  \le \left\| Q^A_\gamma U_X  \sqrt{\rho_\beta} \right\|_F 
+ \left\| Q^A_\gamma U_X Q^A_{\gamma}  \sqrt{\rho_\beta} \right\|_F
\le \left\| Q^A_\gamma U_X  \sqrt{\rho_\beta} \right\|_F 
+ \delta_\gamma
\le 2c_1  e^{\lambda|X|}\delta_\gamma^{\frac{c_2}{c_2+\beta}},
$$
where the first inequality used $P_s^A\leq Q^A_\gamma$ and the last inequality used $\| Q^A_\gamma U_X  \sqrt{\rho_\beta} \|_F\leq c_1e^{\lambda|X|}\delta_{\gamma}^{\frac{c_2}{c_2+\beta}}$ from Claim~\ref{claim1_lower}.},
inequality $(2)$ uses Eq.~\eqref{fact:relatingfrobeniusAB}, inequality $(3)$ uses Eq.~\eqref{eq:PsPs'decayexponential} and the fact that $\|\sqrt{\rho_\beta}\|_F=\Tr(\rho_\beta)=1$ and equality $(4)$ defines $\delta'_\gamma:=2c_1 e^{\lambda|X|}\delta_\gamma^{\frac{c_2}{c_2+\beta}}$. Using Eq.~\eqref{eq:PsuP0beta}, we obtain the following
 \begin{align}
\sum_{s'=0}^\infty  \left \| P^A_s U_X  P^A_{s'} \sqrt{\rho_\beta}  \right\|_F 
&\le \left \| P^A_s U_X  P^A_{0} \sqrt{\rho_\beta}  \right\|_F   + \sum_{s'=1}^\infty  \left \| P^A_s U_X  P^A_{s'}  \right\| \cdot \left \| P^A_{s'} \sqrt{\rho_\beta}  \right\|_F \notag\\
&\le {\delta'}_\gamma^{1/2} c_5^{1/2}|X|^{1/2}e^{-(\lambda_1\gamma s/|X|)^{1/\tau_1}/2} \notag\\
& +  \sum_{s'=1}^\infty c_5|X| e^{-(\lambda_1\gamma |s-s'|/|X|)^{1/\tau_1}}\left\| P^A_{s'} \sqrt{\rho_\beta}  \right\|_F  ,
\label{sum_s'_P_A_U_X_P_A_s'}
\end{align}
where the first term in the inequality was obtained from Eq.~\eqref{eq:PsuP0beta} and the second term was obtained from Eq.~\eqref{eq:PsPs'decayexponential}. 

We now upper bound the summation in the second term of Eq.~\eqref{sum_s'_P_A_U_X_P_A_s'} by using the Cauchy--Schwarz inequality as follows: 
\begin{align}
&\sum_{s'=1}^\infty e^{-(\lambda_1\gamma |s-s'|/|X|)^{1/\tau_1}} \left\| P^A_{s'} \sqrt{\rho_\beta}  \right\|_F \notag\\
&=\sum_{s'=1}^\infty e^{-(\lambda_1\gamma |s-s'|/|X|)^{1/\tau_1}/2}\cdot  \big(e^{-(\lambda_1\gamma |s-s'|/|X|)^{1/\tau_1}/2} \left\| P^A_{s'} \sqrt{\rho_\beta}  \right\|_F\big) \notag \\
&\le  \left(\sum_{s'=1}^\infty e^{-(\lambda_1\gamma |s-s'|/|X|)^{1/\tau_1}} \right)^{1/2}  \left(\sum_{s'=1}^\infty e^{-(\lambda_1\gamma |s-s'|/|X|)^{1/\tau_1}}\left\| P^A_{s'} \sqrt{\rho_\beta}  \right\|_F ^2 \right)^{1/2} \notag \\
&\overset{(1)}\le \br{\frac{4\tau_1|X|}{\lambda_1\gamma}\br{2\tau_1}^{\tau_1}}^{1/2}\cdot \left(\sum_{s'=1}^\infty p_{s'}^A e^{-(\lambda_1\gamma |s-s'|/|X|)^{1/\tau_1}}   \right)^{1/2},
\label{The summation in the second term of sum_s'_P_A_U_X_P_A_s'}
\end{align}
where $p_{s'}:= \left\| P^A_{s'} \sqrt{\rho_\beta}  \right\|_F ^2$ and we used Fact \ref{fact:integrals} in inequality $(1)$.
Note that  $\sum_{s'=1}^\infty p_{s'} =\| {Q}^A_{\gamma}\sqrt{\rho_\beta}  \|_F ^2$ because of $P_0^A=P^A_{(0,\gamma]} +  P^A_{[-\gamma,0)}={P}^A_{\gamma}$.
 We can obtain the following upper bound by combining the equations Eq.~\eqref{upper_bound_P^A_s U_X sqrt_rho_beta}, \eqref{sum_s'_P_A_U_X_P_A_s'} and \eqref{The summation in the second term of sum_s'_P_A_U_X_P_A_s'}:
\begin{align*}
&\left \| P^A_s U_X \sqrt{\rho_\beta}  \right\|_F^2 \\
&\le  \Big(\sum_{s'=0}^\infty \left \| P^A_s U_X  P^A_{s'} \sqrt{\rho_\beta}  \right\|_F\Big)^2\\
&\leq \Big({\delta'}_\gamma^{1/2} c_5^{1/2}|X|^{1/2}e^{-(\lambda_1\gamma s/|X|)^{1/\tau_1}/2}+\sum_{s'=1}^\infty c_5|X| e^{-(\lambda_1\gamma |s-s'|/|X|)^{1/\tau_1}}\left\| P^A_{s'} \sqrt{\rho_\beta}  \right\|_F\Big)^2 \\
&\leq \Big({\delta'}_\gamma^{1/2} c_5^{1/2}|X|^{1/2}e^{-(\lambda_1\gamma s/|X|)^{1/\tau_1}/2}  +  c_5|X|\br{\frac{4\tau_1|X|}{\lambda_1\gamma}\br{2\tau_1}^{\tau_1}}^{1/2}\cdot \left(\sum_{s'=1}^\infty p_{s'}^A e^{-(\lambda_1\gamma |s-s'|/|X|)^{1/\tau_1}}   \right)^{1/2}\Big)^2\\
&\leq \underbrace{2 c_5{\delta'}_\gamma|X| e^{-(\lambda_1\gamma s/|X|)^{1/\tau_1}}}_{:=f_1(s)} +\underbrace{\frac{8c_5^2\tau_1|X|^3}{\lambda_1\gamma}\br{2\tau_1}^{\tau_1}\cdot \left(\sum_{s'=1}^\infty p_{s'}^A e^{-(\lambda_1\gamma |s-s'|/|X|)^{1/\tau_1}}  \right). }_{:=f_2(s)} 
\end{align*}
Recall that the goal  of this claim was to upper bound Eq.~\eqref{summ_s_A_uX/}, which we can rewrite now as
\begin{equation}
\left \| A   {Q}_\gamma^A U_X \sqrt{\rho_\beta}  \right\|_F^2
\le \gamma^2 \sum_{s=1}^\infty (s+1)^2  \left \| P^A_s U_X \sqrt{\rho_\beta}  \right\|_F^2 \leq \gamma^2\sum_{s=1}^\infty (s+1)^2f_1(s)+\gamma^2\sum_{s=1}^\infty (s+1)^2f_2(s).  
    \label{eq:AQurhobetabound1}
\end{equation}
We bound each of these terms separately. In order to bound the first term observe that
\begin{align*}
\begin{aligned}
    \gamma^2\sum_{s=1}^\infty (s+1)^2f_1(s)&=2\gamma^2 c_5\delta'_\gamma|X|^2\sum_{s=1}^\infty (s+1)^2 e^{-(\lambda_1\gamma s/|X|)^{1/\tau_1}} \\
    &\overset{(1)}\leq 2\gamma^2c_5{\delta'}_\gamma|X|\cdot 8\tau_1\cdot\br{\frac{(3\tau_1)^{\tau_1}|X|}{\lambda_1\gamma}}^{3}\leq \frac{16c_5\delta'_\gamma|X|^4\tau_1(3\tau_1)^{3\tau_1}}{\lambda_1^3\gamma},
\end{aligned}
\end{align*}
where inequality $(1)$ uses Fact \ref{fact:integrals}. We now bound the second term in Eq.~\eqref{eq:AQurhobetabound1} as follows
\begin{align}
    \begin{aligned}
       \gamma^2\sum_{s=1}^\infty (s+1)^2f_2(s)&=\gamma^2\cdot\frac{8c_5^2\tau_1|X|^3}{\lambda_1\gamma}\br{2\tau_1}^{\tau_1}
\sum_{s=1}^\infty (s+1)^2\left(\sum_{s'=1}^\infty p_{s'}^A e^{-(\lambda_1\gamma |s-s'|/|X|)^{1/\tau_1}}  \right).  \\
&=\frac{8\gamma c_5^2\tau_1|X|^3\br{2\tau_1}^{\tau_1}}{\lambda_1}
 \sum_{s'=1}^\infty p_{s'}^A \Big(\sum_{s=1}^\infty (s+1)^2 e^{-(\lambda_1\gamma |s-s'|/|X|)^{1/\tau_1}}\Big)\\
&\le \frac{8\gamma c_5^2\tau_1|X|^3\br{2\tau_1}^{\tau_1}}{\lambda_1}
 \sum_{s'=1}^\infty p_{s'}^A(2s')^2 \Big(\sum_{s=1}^\infty (1+|s-s'|)^2 e^{-(\lambda_1\gamma |s-s'|/|X|)^{1/\tau_1}}\Big)\\
 &\overset{(1)}\le \frac{8\gamma c_5^2\tau_1|X|^3\br{2\tau_1}^{\tau_1}}{\lambda_1}\cdot 16\tau_1\cdot\br{\frac{\br{3\tau_1}^{\tau_1}|X|}{\lambda_1\gamma}}^{3}\sum_{s'=1}^\infty p_{s'}^A(2s')^2,
 \end{aligned}
 \end{align}
  where inequality $(1)$ follows from Fact \ref{fact:integrals}. Further upper bound this expression by simplifying the pre-factors, we get
 \begin{align}
 \begin{aligned}
    \gamma^2\sum_{s=1}^\infty (s+1)^2f_2(s)&\leq \frac{512 c_5^2|X|^6\tau_1^2(3\tau_1)^{4\tau_1}}{\lambda_1^4\gamma^2}\sum_{s'=1}^\infty p_{s'}^A (s')^2\\
 &= \frac{512 c_5^2|X|^6\tau_1^2(3\tau_1)^{4\tau_1}}{\lambda_1^4}\sum_{s'=1}^\infty (\gamma s')^2 p_{s'}^A \\
 &=\frac{512 c_5^2|X|^6\tau_1^2(3\tau_1)^{4\tau_1}}{\lambda_1^4}\sum_{s'=1}^\infty  \left\|P^A_{s'}(\gamma s') \sqrt{\rho_\beta}  \right\|_F ^2\\
 &\overset{(2)}\le \frac{512 c_5^2|X|^6\tau_1^2(3\tau_1)^{4\tau_1}}{\lambda_1^4}\sum_{s'=0}^\infty  \left\|P^A_{s'}A \sqrt{\rho_\beta}  \right\|_F ^2 = \frac{512 c_5^2|X|^6\tau_1^2(3\tau_1)^{4\tau_1}}{\lambda_1^4}\langle A^2\rangle,
 \end{aligned}
\end{align}
 In inequality $(2)$, we used $P^A_{s'}(\gamma s')\preceq P^A_{s'}$ from the definition~\eqref{definition_Q_gamma_P_gamma_A} of $P^A_{s'}$.
By combining the above inequalities altogether, we prove Eq.~\eqref{unitary_A_norm_small}.
\end{proof}
\bibliographystyle{alpha}
 \bibliography{refs}

\addtocontents{toc}{\setcounter{tocdepth}{1}}


\appendix
\section{Proof of Fact~\ref{fact:integrals}}\label{sec:proof of integral facts}
Here we restate and prove the following fact.
\begin{fact}[Restatement of Fact~\ref{fact:integrals}]

Let $a,c,p>0$ be reals and $b$ be a positive integer. Then
\begin{enumerate}
\item[1)] $\sum_{j=0}^{\infty} e^{-cj} \leq \frac{e^c}{c}$.
\item[2)] $\sum_{j=0}^{\infty} j^be^{-cj^p} \leq \frac{2}{p}\cdot\br{\frac{b+1}{cp}}^{\frac{b+1}{p}}$.
\item[3)] $\sum_{j=0}^{\infty} e^{-c(a+j)^p} \leq e^{-\frac{c}{2}a^p}\br{1+\frac{1}{p }\br{\frac{2}{cp}}^{\frac{1}{p}}}$.
\end{enumerate}
\end{fact}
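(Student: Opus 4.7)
My plan is to handle the three parts independently using elementary sum-to-integral comparisons. Part~1 is a direct geometric-series computation: $\sum_{j=0}^\infty e^{-cj} = 1/(1-e^{-c})$, and the stated bound rearranges to $c \le e^c - 1$, which is immediate from the Taylor expansion of $e^c$ for $c>0$.

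For part~2 I will compare the sum to the integral $\int_0^\infty x^b e^{-cx^p}\,dx$. The integrand $f(x) = x^b e^{-cx^p}$ is non-negative and unimodal on $[0,\infty)$ with unique maximum at $x^\ast = (b/(cp))^{1/p}$, so splitting the sum around $x^\ast$ and using Riemann-sum comparisons on each monotone branch yields
\begin{align*}
\sum_{j=0}^\infty f(j) \;\le\; \int_0^\infty f(x)\,dx \;+\; \mathcal{O}(f(x^\ast)).
\end{align*}
The integral is computed by the substitution $y = cx^p$, giving $\Gamma((b+1)/p)/(p\, c^{(b+1)/p})$, and invoking the paper's bound $\Gamma(t)\le t^t$ with $t=(b+1)/p$ collapses this to $\tfrac{1}{p}((b+1)/(cp))^{(b+1)/p}$. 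The stated factor of $2$ then accommodates the peak correction.

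For part~3 I will first establish the subadditivity-type inequality $(a+j)^p \ge \tfrac12(a^p+j^p)$, valid for every $p>0$: when $p\ge 1$ convexity of $x^p$ actually gives the stronger $(a+j)^p \ge a^p + j^p$, while for $0<p<1$ concavity yields $(a^p+j^p)/2 \le ((a+j)/2)^p = 2^{-p}(a+j)^p$, so $(a+j)^p \ge 2^{p-1}(a^p+j^p) \ge \tfrac12(a^p+j^p)$ since $2^{p-1}\ge 1/2$ on $[0,1]$. Substituting and factoring out $e^{-ca^p/2}$ reduces the statement to bounding $\sum_{j=0}^\infty e^{-cj^p/2}$; the summand is monotonically decreasing in $j$, so the standard integral test gives $\sum \le 1 + \int_0^\infty e^{-cx^p/2}\,dx$, and the integral is evaluated exactly as in part~2 (substitution followed by $\Gamma(t)\le t^t$) to produce $\tfrac{1}{p}(2/(cp))^{1/p}$.

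The main obstacle is the sum-versus-integral bookkeeping in part~2: depending on whether the continuous maximum $x^\ast$ is near an integer or whether the Laplace-width of the peak is narrower than unit spacing, a naive comparison can leave slack, and extracting the clean constant $2/p$ requires either case-splitting on the location of $x^\ast$ relative to the integer grid or a sharper comparison that trades off $f(x^\ast)$ against the gamma-function estimate. Once that bookkeeping is in place, the remaining steps in all three parts are routine algebraic manipulations and applications of $\Gamma(t)\le t^t$.
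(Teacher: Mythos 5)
Your proposal follows the paper's argument closely: part 1 is the identical geometric-series computation, and part 3 uses the same subadditivity bound $(a+j)^p\ge\tfrac{1}{2}(a^p+j^p)$ followed by an integral comparison and $\Gamma(t)\le t^t$ (your handling of the $p\ge 1$ case via $(a+j)^p\ge a^p+j^p$ is in fact cleaner than the paper's intermediate claim $(a+j)^p\ge 2^{p-1}(a^p+j^p)$, which is false for $p>1$ but harmless there). The genuine gap is in part~2, and it is the one you flag without closing. The bound $\sum_j f(j)\le\int_0^\infty f+\mathcal{O}(f(x^\ast))$ cannot be pushed to the stated conclusion, because $f(x^\ast)=(b/(cp))^{b/p}e^{-b/p}$ has the wrong $c$-dependence: it scales as $c^{-b/p}$, whereas both the integral and the target $\tfrac{1}{p}\big((b+1)/(cp)\big)^{(b+1)/p}$ scale as $c^{-(b+1)/p}$, so for $c$ large no constant multiple of $f(x^\ast)$ can be absorbed (take $p=b=1$, $c=10$: $2f(x^\ast)\approx 0.074$ already exceeds the claimed bound $0.08$ after adding the integral $0.01$). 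The paper's peak correction is instead $x^\ast f(x^\ast)=(b/(cp))^{(b+1)/p}e^{-b/p}$ --- morally one $f(x^\ast)$ for each of the at most $x^\ast$ nonzero grid points in the rising branch, using $f(0)=0$ since $b\ge 1$ --- which carries the missing factor $c^{-1/p}$ and is then shown to be $\le\tfrac{1}{p}\big((b+1)/(cp)\big)^{(b+1)/p}$ via $\big(b/(b+1)\big)^{b+1}\le e^{-1}$. This is exactly the ``sharper comparison that trades off against the gamma-function estimate'' you gesture at in your last paragraph; replacing $\mathcal{O}(f(x^\ast))$ with $x^\ast f(x^\ast)$ (or $(x^\ast+1)f(x^\ast)$) and carrying out that comparison turns your outline into the paper's proof.
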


\begin{proof}
The first summation follows from
$$\sum_{j=0}^{\infty} e^{-cj}= \frac{1}{1-e^{-c}}=\frac{e^c}{e^c-1}\leq \frac{e^c}{c}.$$
For the second sum, notice that the function $t^be^{-ct^p}$ achieves the maximum at $t^*=\br{\frac{b}{cp}}^{\frac{1}{p}}$. Then
\begin{align*}
    \sum_{j=0}^{\infty} j^b e^{-cj^p} &\leq t^*\br{t^*}^b e^{-c\br{t^*}^p}+\int_{0}^{\infty} t^be^{-ct^p} dt\\
    &= \br{\frac{b}{cp}}^{\frac{b+1}{p}}e^{-\frac{b}{p}}+\frac{1}{(b+1)c^{\frac{b+1}{p}}}\int_{0}^{\infty} e^{-y^{\frac{p}{b+1}}} dy\\
    &=\br{\frac{b}{e^{\frac{b}{b+1}}cp}}^{\frac{b+1}{p}}+\frac{1}{pc^{\frac{b+1}{p}}}\Gamma\br{\frac{b+1}{p}}\\
    &\leq \br{\frac{b}{e^{\frac{b}{b+1}}cp}}^{\frac{b+1}{p}}+\frac{1}{pc^{\frac{b+1}{p}}}\br{\frac{b+1}{p}}^{\frac{b+1}{p}}\leq \frac{2}{p}\cdot\br{\frac{b+1}{cp}}^{\frac{b+1}{p}}. 
\end{align*}
 For the third sum, we will use the identity 
$$(a+j)^p\geq 2^{p-1}\br{a^p+j^p}\geq \frac{1}{2}\br{a^p+j^p}.$$
This is clearly true if $p\geq 1$. For $p<1$, we use concavity. Now, consider the following chain of inequalities and change of variables:
\begin{align*}
    \sum_{j=0}^{\infty} e^{-c(a+j)^p} &\leq  e^{-\frac{c}{2} a^p}\sum_{\ell=0}^{\infty} e^{-\frac{c}{2}\ell^p}\\
    &\leq e^{-\frac{c}{2}a^p}\br{1+\int_{0}^{\infty} e^{-\frac{c}{2}t^p} dt}\\
    &= e^{-\frac{c}{2}a^p}\br{1+\frac{2^{\frac{1}{p}}}{c^{\frac{1}{p}}}\int_{0}^{\infty} e^{-y^p} dy}\\
    &=e^{-\frac{c}{2}a^p}\br{1+\frac{2^{\frac{1}{p}}}{pc^{\frac{1}{p}}}\Gamma\left(\frac{1}{p}\right)}\leq e^{-\frac{c}{2}a^p}\br{1+\frac{1}{p c^{\frac{1}{p}}}\br{\frac{2}{p}}^{\frac{1}{p}}}.
\end{align*}
This completes the proof.
\end{proof}

\section{Fourier transform of $\pmb{\tanh(\beta\omega/2)/(\beta\omega/2)}$} \label{sec:Hamiltonian construction}
We here derive the Fourier transform of 
\begin{align*}
\tilde{f}_\beta(\omega)= \frac{\tanh(\beta\omega/2)}{\beta\omega/2},
\end{align*}
which is
 \begin{align*}
f_\beta(t):=\frac{1}{2\pi} \int_{-\infty}^\infty e^{i\omega t} \tilde{f}_\beta(\omega) d\omega.
\end{align*}

\begin{figure}
\centering
\subfigure[Integral path $C^+$ of $\omega$ for $t>0$]
{\includegraphics[clip, scale=0.3]{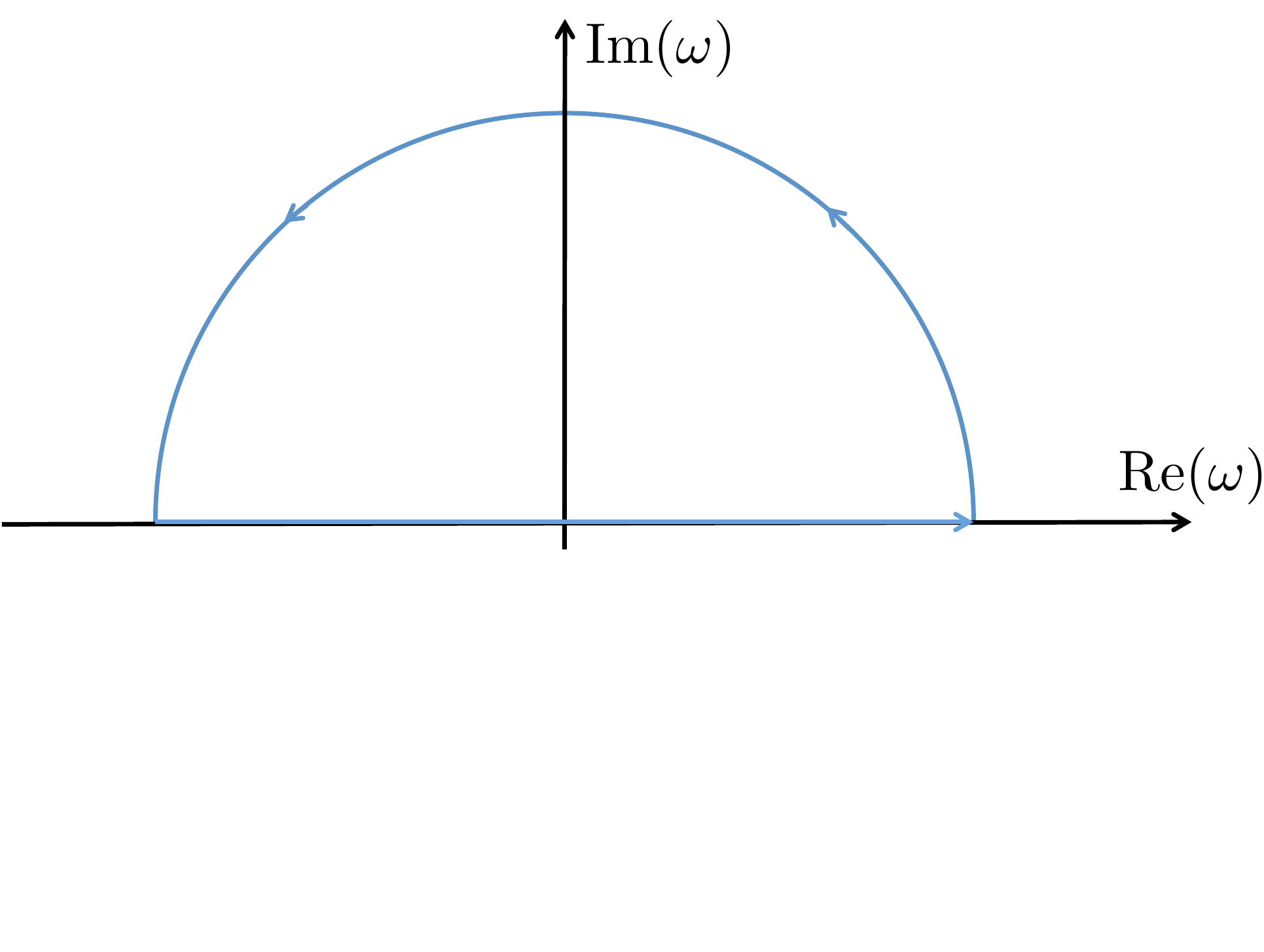}
}
\subfigure[Integral path $C^-$ of $\omega$ for $t<0$]
{\includegraphics[clip, scale=0.3]{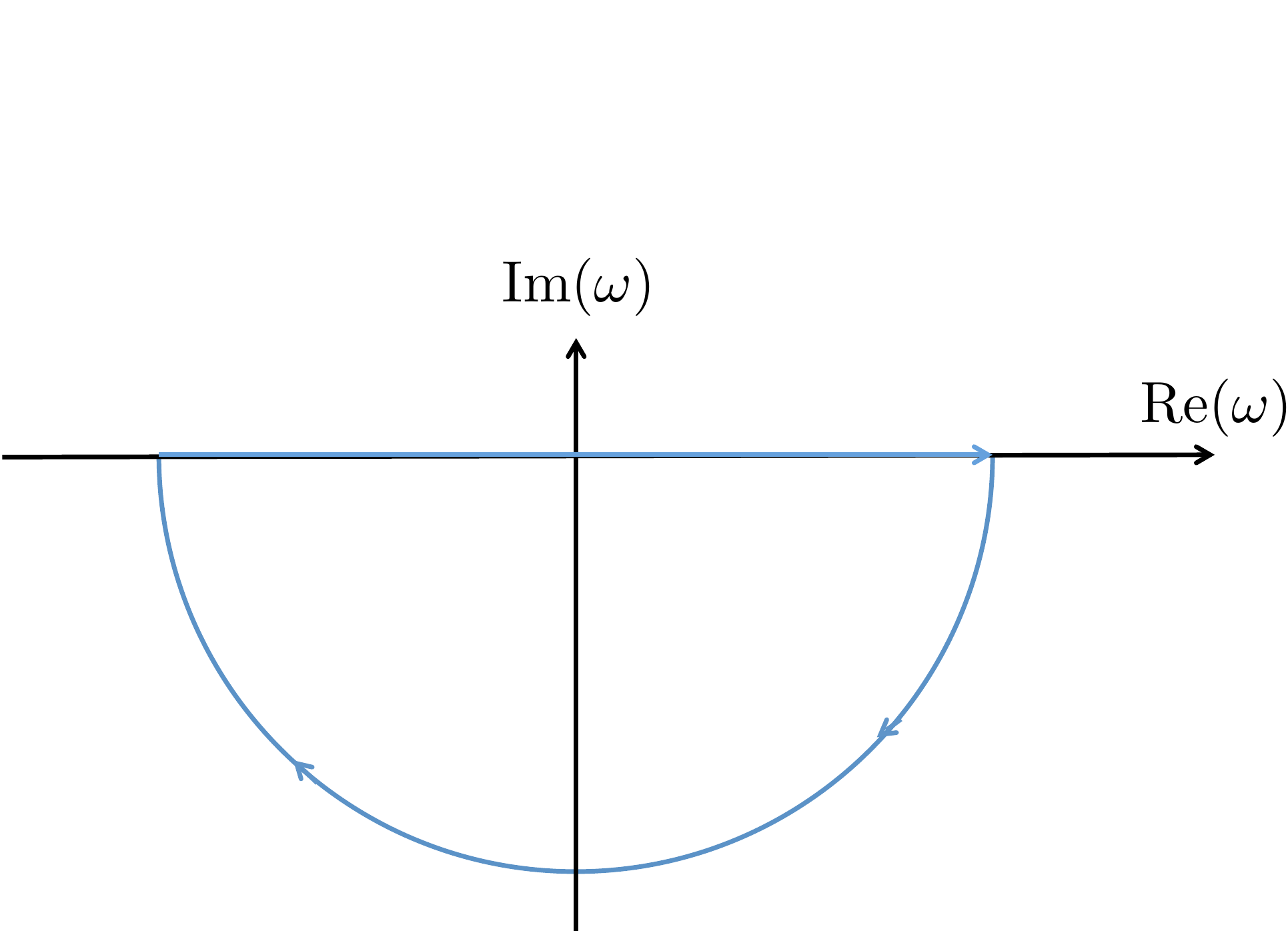}
}
\caption{Cauchy's integral theorem for the calculation of the Fourier transform.}
\label{fig:int_path}
\end{figure}

For the calculation of the Fourier transform, we first consider the case of $t>0$.
By defining $C^+$ as a integral path as in Fig.~\ref{fig:int_path} (a), we obtain
\begin{align}
\frac{1}{2\pi}  \int_{-\infty}^\infty e^{i\omega t} \tilde{f}_\beta(\omega) d\omega
&=\frac{1}{2\pi}   \int_{C^+} e^{i\omega t}\tilde{f}_\beta(\omega) d\omega \notag \\
&=i \sum_{m=0}^\infty {\rm Res}_{\omega=i\pi+2i m\pi} [e^{i\omega t} \tilde{f}_\beta(\omega)].
\end{align}
Note that the singular points of $[e^{i\omega t}\tilde{f}_\beta(\omega)]$ are given by $\beta \omega= i\pi (2 m+1)$ with $m$ integers.
We can calculate the residue as 
\begin{align}
{\rm Res}_{\beta \omega=i\pi+2i m\pi} [e^{i\omega t} \tilde{f}_\beta(\omega)] = \frac{4e^{-(2m+1)\pi t/\beta}}{\beta\pi} \frac{-i}{2m+1}
\end{align}
We thus obtain
\begin{align}
f_\beta(t)=\frac{1}{2\pi}  \int_{-\infty}^\infty e^{i\omega t} \tilde{f}_\beta(\omega)  d\omega
&=\frac{4}{\beta \pi}  \sum_{m=0}^\infty  \frac{ e^{-(2m+1)\pi t/\beta}}{2m+1}.
\end{align}
for $t>0$.

We can perform the same calculation for $t<0$.
In this case, we define $C^-$ as a integral path as in Fig.~\ref{fig:int_path} (b), and obtain
\begin{align}
f_\beta(t)&=\frac{1}{2\pi}   \int_{C^-} e^{i\omega t}\tilde{f}_\beta(\omega) d\omega 
=-i \sum_{m=0}^\infty {\rm Res}_{\omega=-i\pi-2i m\pi} [e^{i\omega t} \tilde{f}_\beta(\omega)] .
\end{align}
By using 
\begin{align}
{\rm Res}_{\omega=-i\pi-2i m\pi} [e^{i\omega t}\tilde{f}_\beta(\omega) ] = \frac{4e^{(2m+1)\pi t/\beta}}{\beta\pi} \frac{i}{2m+1} ,
\end{align}
we have
\begin{align}
f_\beta(t)=\frac{1}{2\pi}  \int_{-\infty}^\infty e^{i\omega t} \tilde{f}_\beta(\omega)  d\omega
&= \frac{4}{\beta\pi}  \sum_{m=0}^\infty  \frac{ e^{(2m+1)\pi t/\beta}}{2m+1} .
\end{align}
for $t<0$.
By combining the above expressions for $f_\beta(t)$, we arrive at 
\begin{align}
f_\beta(t)= \frac{4}{\beta\pi}  \sum_{m=0}^\infty  \frac{ e^{-(2m+1)\pi |t|/\beta}}{2m+1} .
\end{align}
The summation is calculated as
\begin{align}
\sum_{m=0}^\infty  \frac{ e^{-(2m+1)x}}{2m+1} = \int_x^\infty \sum_{m=0}^\infty   e^{-(2m+1)x' } dx' 
=  \int_x^\infty \frac{1}{e^{x'} -e^{-x'}} dx' = \frac{1}{2}\log\frac{e^x+1}{e^x-1}
\end{align}
for $x>0$, which yields
\begin{align}
f_\beta(t)= \frac{2}{\beta\pi}\log \frac{e^{\pi |t|/\beta}+1}{e^{\pi |t|/\beta}-1} .
\end{align}
Since 
$$\log \frac{e^{\pi |t|/\beta}+1}{e^{\pi |t|/\beta}-1}\leq \frac{2}{e^{\pi |t|/\beta}-1},$$
$f_\beta(t)$ shows an exponential decay in $|t|$.

\section{Derivation of the sub-exponential concentration} \label{sec:Derivation of norm quasi local}

Recall that the goal in this appendix is to prove the following lemma.
\begin{lem} [Restatement of Lemma~\ref{multicommutator_norm_quasi_local}]
\label{res:multicommutator_norm_quasi_local}
Let $A$ be a $(\tau, a_1, a_2, 1)$-quasi-local operator with $\tau<1$, as given in Eq.~\eqref{quasi_locality of A}. For an arbitrary operator $O_X$ supported on a subset $X\subseteq \Lambda$  with $|X|=k_0$ and $\|O_X\|=1$, we~have
\begin{align}
\| P^A_{\ge x+y} O_X  P^A_{\le x}  \|  \le c_5\cdot  k_0 \exp \Big(-(\lambda_1 y/k_0)^{1/\tau_1}\Big), 
\label{res:Concentration_lemma_subexponential1}
\end{align}
where $\tau_1:=\frac{2}{\tau}-1$ and $c_5$ and $\lambda_1$ are constants which only depend on $a_1$ and $a_2$. 
In particular, the $a_2$ dependence of $c_5$ and $\lambda_1$ is given by 
$c_5 \propto a_2^{2/\tau}$ and $\lambda_1\propto a_2^{-2/\tau}$ respectively.
\end{lem}
Before proving this lemma, let us elaborate upon the method. Recall that
\begin{align}
P^A_{\le x} = \sum_{\omega\leq x} \Pi_\omega ,\quad 
P^A_{>y} = \sum_{\omega> y} \Pi_\omega,
\end{align}
where $\Pi_\omega$ is the projector onto the eigenvalue $\omega$ eigenspace of $A$. One way to prove the upper bound in the estimation of the norm~\eqref{res:Concentration_lemma_subexponential1} is to utilize the technique in Ref.~\cite{Arad_connecting_global_local_dist} (i.e., Lemma~\ref{lem:AKL16}). The argument proceeds by considering
\begin{align}
\| P^A_{\ge x+y} O_X  P^A_{\le x}  \| 
&= \| P^A_{\ge x+y} e^{-\nu A}  e^{\nu A}O_X   e^{-\nu A}  e^{\nu A}P^A_{\le x}  \| \notag \\
&\le \| P^A_{\ge x+y} e^{-\nu A}\| \cdot \|  e^{\nu A}O_X   e^{-\nu A} \| \cdot \| e^{\nu A}P^A_{\le x}  \| \notag \\
&\le e^{-\nu x}\|  e^{\nu A}O_X   e^{-\nu A} \|, \label{Itai's_technique}
\end{align}
which reduces the problem to estimation of the norm $\|  e^{\nu A}O_X   e^{-\nu A} \|$.
Additionally, by definition of~$A$ in Theorem~\ref{thm:variancelowerbound} we have
\begin{align}
A= \sum_{\ell=1}^n  {g}_{\ell} \bar{A}_\ell ,
\end{align}
where $\bar{A}_\ell$ is $\k$-local and $ {g}_{\ell} $ is sub-exponentially decaying function for $\ell$ (as made precise in Eq.~\eqref{quasi_locality of A}), namely ${g}_{\ell} =\exp(-\orderof{\ell^{1/D}})$. 
In this case, for $\nu=\orderof{1}$, 
the norm of the imaginary time evolution can be finitely bounded only in the case $D=1$~\cite{kuwahara2016asymptotic}. 
That is, the norm $\|  e^{\nu A}O_X   e^{-\nu A} \|$ diverges to infinity for $D\ge 2$. However, our main contribution in this section is that  we are able to prove the lemma statement  \emph{without} going through the inequalities in~\eqref{Itai's_technique} (which in turn used earlier results of~\cite{kuwahara2016asymptotic,Arad_connecting_global_local_dist}). We now give more details.

\subsection{Proof of Lemma~\ref{res:multicommutator_norm_quasi_local}} \label{Proof of Lemma multicommutator_norm_quasi_local}
In order to estimate the norm, we need to take a different route from \eqref{Itai's_technique}.  
Let $I$ be any interval of the real line and $P^A_I$ be the projector onto the eigenspace of $A$ with eigenvalues in $I$. Using the operator inequality $$P^A_{\ge z} (A-\omega\iden)^m\succeq (z-\omega)^m P^A_{\ge x+y},$$ we obtain
\begin{align}
\|  (A-\omega\iden )^m O_X  P^A_I \|\ge \|  P^A_{\ge z} (A-\omega)^m O_X  P^A_I \| \ge (z-\omega)^m \|  P^A_{\ge z} O_X  P^A_I \|,
\end{align}
hence
\begin{align}
\|  P^A_{\ge z} O_X  P^A_I \| \le \frac{\|  (A-\omega)^m O_X  P^A_I \|}{(z-\omega)^m}.
\label{Markov_ineq_norm}
\end{align}
Our strategy to establish Eq.~\eqref{res:Concentration_lemma_subexponential1} will be to expand 
\begin{align}
\|  P^A_{\ge x+y} O_X  P^A_{\le x} \| \le \sum_{j=0}^\infty \|  P^A_{\ge x+y} O_X P^A_{I_j} \| ,
\label{P^A_ge y O_X  P^A_le x_upper_bound}
\end{align}
for carefully chosen intervals $I_j:=(x- a_1 (j+1), x- a_1 j]$ (the term $a_1$ is as given in the statement of Lemma \ref{res:multicommutator_norm_quasi_local}). Towards this, let us fix an arbitrary $\omega$, an interval $I:=(\omega-a_1,\omega]$ and prove an upper bound on $\|  P^A_{\ge \omega+\theta} O_X P^A_{I_j} \|$ (for all $\theta$). We show the following claim.
\begin{claim}
\label{clm:genericquasiprojupbound}
There is a constant $c_6$ such that
\begin{align}
\|  P^A_{\ge \omega + \theta} O_X  P^A_I \| \le  \frac{1}{\tau}\exp \left[-[\theta/(ec_6 k_0)]^{1/\tau_1} +1\right].
\label{Upper_bound_for_P_A_I}
\end{align}
\end{claim}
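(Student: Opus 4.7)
The plan is to prove Claim~\ref{clm:genericquasiprojupbound} via the polynomial Markov inequality \eqref{Markov_ineq_norm}, replacing the exponential-moment / MGF argument of \cite{Arad_connecting_global_local_dist} (which degrades past 1D in the quasi-local setting) by a direct bound on a power of $(A-\omega)$. I would proceed in four steps.

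\medskip
\noindent\textbf{Step 1 (polynomial Markov).} Applying Eq.~\eqref{Markov_ineq_norm} with $z=\omega+\theta$, it suffices to bound $\|(A-\omega)^m O_X P^A_I\|$ for an arbitrary positive integer $m$ that will be chosen later.

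\medskip
\noindent\textbf{Step 2 (commutator expansion).} Since $\omega$ is a scalar, $[A-\omega,O_X]=\mathrm{ad}_A(O_X)$, and the standard operator identity gives
\[
(A-\omega)^m O_X \;=\; \sum_{k=0}^m \binom{m}{k}\, \mathrm{ad}_A^k(O_X)\, (A-\omega)^{m-k}.
\]
Because $A P^A_I$ has spectrum in $I=(\omega-a_1,\omega]$ of width $a_1$, we have $\|(A-\omega)^{m-k} P^A_I\|\le a_1^{m-k}$, so
\[
\|(A-\omega)^m O_X P^A_I\|\;\le\; \sum_{k=0}^m \binom{m}{k}\, \|\mathrm{ad}_A^k(O_X)\|\, a_1^{m-k}.
\]

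\medskip
\noindent\textbf{Step 3 (multi-commutator bound).} The technical heart is proving
\[
\|\mathrm{ad}_A^k(O_X)\| \;\le\; k_0\cdot (c_6' k_0)^k\cdot (k!)^{\tau_1},\qquad \tau_1=\tfrac{2}{\tau}-1,
\]
with $c_6'$ depending only on $a_1,a_2,\tau$ and the lattice. Expanding
\[
\mathrm{ad}_A^k(O_X) \;=\; \sum_{\ell_1,\ldots,\ell_k} g_{\ell_1}\cdots g_{\ell_k}\, \sum_{|Z_i|=\ell_i} \mathrm{ad}_{a_{Z_1}}\cdots \mathrm{ad}_{a_{Z_k}}(O_X),
\]
one uses the connectivity of the support chain together with the normalization $\max_j\sum_{Z\ni j}\|a_Z\|\le 1$ to show that the innermost sum over $Z_i$ contributes at most $|T_i|\le k_0+\ell_{i+1}+\cdots+\ell_k$. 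The remaining weighted sum
\[
\sum_{\ell_1,\ldots,\ell_k} g_{\ell_1}\cdots g_{\ell_k}\,\prod_{i=1}^k (k_0+\ell_{i+1}+\cdots+\ell_k)
\]
is controlled by invoking the moment bound $\sum_\ell g_\ell\,\ell^j \le \tfrac{2a_1}{\tau}\bigl(\tfrac{j+1}{a_2\tau}\bigr)^{(j+1)/\tau}$ from Fact~\ref{fact:integrals}(2), with care to split "short" ($\ell_i$ carrying $\Theta(a_1)$ weight) and "long" ($\ell_i$ with sub-exponentially decaying weight) contributions so that overcounting is avoided.

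\medskip
\noindent\textbf{Step 4 (optimize $m$).} Substituting Step 3 into Step 2 and summing the binomial series yields $\|(A-\omega)^m O_X P^A_I\|\le k_0\,(c_6 k_0)^m (m!)^{\tau_1}$ (after absorbing an $O(1)$ prefactor from the tail of the $k$-sum). Combining this with Step 1 and Stirling's approximation one obtains
\[
\|P^A_{\ge\omega+\theta} O_X P^A_I\| \;\le\; k_0\, \Bigl(\tfrac{c_6 k_0\, m^{\tau_1}}{e^{\tau_1}\theta}\Bigr)^m.
\]
Choosing $m\asymp (\theta/(ec_6 k_0))^{1/\tau_1}$ makes the bracketed factor equal $e^{-\tau_1}$, producing exactly the sub-exponential decay in Eq.~\eqref{Upper_bound_for_P_A_I} after adjusting constants (the prefactor $\tfrac{1}{\tau}e$ in the claim accommodates the integer rounding of $m$ and the Stirling constant).

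\medskip
\noindent\textbf{Main obstacle.} The difficulty is Step 3: the crude estimate $\prod_i (k_0+\ell_{i+1}+\cdots+\ell_k)\le(k_0+L)^k$ with $L=\sum_i\ell_i$, combined with the raw moment bound on $\sum_\ell g_\ell\,\ell^k$, produces the exponent $1+\tfrac{1}{\tau}$ on $(k!)$ rather than the required $\tau_1=\tfrac{2}{\tau}-1$. The tighter bookkeeping must exploit that the measure $g_\ell$ makes it energetically unfavorable for many $\ell_i$ to be simultaneously large, so the weighted product is dominated by configurations with only a bounded number of "long" $\ell_i$. This refined analysis—the "moment function" alluded to in the paragraph preceding the claim—is precisely what replaces the MGF argument of \cite{Arad_connecting_global_local_dist} when $\|e^{\nu A}O_X e^{-\nu A}\|$ ceases to be finite (i.e.\ for lattice dimension $D\ge 2$ with sub-exponential decay).
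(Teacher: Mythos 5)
Your Steps 1, 2, and 4 coincide with the paper's argument: apply the Markov-type inequality \eqref{Markov_ineq_norm}, expand $(A-\omega)^m O_X P^A_I$ via the binomial identity with $\|(A-\omega)^{m-s}P^A_I\|\le a_1^{m-s}$, plug in a multi-commutator bound of the form $\|\ad_A^s(O_X)\|\lesssim (c\,k_0)^s\, s^{\tau_1 s}$, and then optimize $m\asymp[\theta/(ec_6 k_0)]^{1/\tau_1}$. You also correctly state the target form of the multi-commutator bound and correctly diagnose that the naive route (bound $\prod_i(k_0+\cdots)\le(k_0+L)^k$, then use Jensen-type factorization to reduce to a single moment $\sum_\ell g_\ell \ell^k$) yields exponent $1+\tfrac1\tau$ on $k^k$ rather than the needed $\tau_1=\tfrac2\tau-1$.

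However, the crux—Step 3—is left open. Your "split into short and long $\ell_i$" suggestion is a heuristic, not an argument, and it is not the mechanism the paper uses. The paper's proof of the multi-commutator bound (Claim~\ref{cl:multicommutator_norm_quasi_local}) hinges on a different and quite specific algebraic trick: since $g_k = a_1 e^{-a_2 k^{\tau}}$ with $\tau\le1$, concavity gives $\sum_{j=1}^s k_j^{\tau}\ge \bigl(\sum_{j=1}^s k_j\bigr)^{\tau}$, hence
\begin{align}
g_{k_1}g_{k_2}\cdots g_{k_s} \;\le\; a_1^{s-1}\,g_{K}, \qquad K:=k_1+\cdots+k_s.\notag
\end{align}
This collapses the $s$-fold product of sub-exponential weights into a \emph{single} sub-exponential weight $g_K$ together with a power $a_1^{s-1}$, after which one sums over the $\binom{K-1}{s-1}\le(eK/s)^s$ compositions of $K$ into $s$ positive parts. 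Combined with $\prod_i(k_0+k_1+\cdots+k_{i-1})\le k_0(k_0+K)^{s-1}$ and then Fact~\ref{fact:integrals}(2) applied to $\sum_K K^{2s-1}e^{-a_2K^{\tau}}$, the $1/s^s$ from the binomial estimate cancels part of the $(2s)^{2s/\tau}$ from the moment bound and produces $s^{(2/\tau-1)s}=s^{\tau_1 s}$ exactly. Without this collapse of the weight product, there is no valid way to pass to a single $K$-sum, and your calculation either diverges (bounding each $g_{\ell_i}$ by $a_1$) or gives the wrong exponent (your moment reduction). This is the missing ingredient, and as written the proposal does not establish Claim~\ref{clm:genericquasiprojupbound} for all $\tau\le 1$ (in particular it would fail for $\tau$ close to $1$, e.g.\ $D=1$, where $1+\tfrac1\tau > \tfrac2\tau-1$).
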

The claim is proved in subsection \ref{proof_of_genericquasiprojupbound}. Let us use the claim to establish the lemma.
In the inequality~\eqref{P^A_ge y O_X  P^A_le x_upper_bound}, we need to estimate $\|  P^A_{\ge x+y} O_X P^A_{I_j}\|$ with $I_j:=(x- (j+1)a_1, x-ja_1]$. 
Setting $\omega=x-j a_1$ and $\theta=y+j a_1$ in Claim \ref{clm:genericquasiprojupbound}, we have 
\begin{align}
\|  P^A_{\ge x+y} O_X P^A_{I_j} \|  \le \frac{1}{\tau}\exp \left\{-\left(\frac{y + a_1 j}{ec_6 k_0} \right)^{1/\tau_1}+1 \right\}.
\label{Upper_bound_for_P_A_I_j_2}
\end{align}
In order to complete the bound on Equation \ref{P^A_ge y O_X  P^A_le x_upper_bound}, we need to take summation with respect to $j$. We have
 \begin{align}
\sum_{j=0}^\infty \|  P^A_{\ge x+y} O_X P^A_{I_j} \|  \le \sum_{j=0}^\infty  \frac{1}{\tau}\exp \left\{-\left(\frac{y + a_1 j}{ec_6 k_0} \right)^{1/\tau_1}+1 \right\}\leq \frac{1}{\tau}e^{-\frac{1}{2}\br{\frac{y}{ec_6 k_0}}^{1/\tau_1}}\br{1+\frac{ec_6k_0\tau_1}{a_1}\br{4/\tau}^{1/\tau_1}},
\end{align}
where in last inequality we used Fact \ref{fact:integrals} (3) with $c=(ec_6k_0/a_1)^{-1/\tau_1}$, $p=1/\tau_1$ and $a=y/a_1$. This gives the form of \eqref{res:Concentration_lemma_subexponential1} and completes the proof.

\subsection{Proof of Claim \ref{clm:genericquasiprojupbound}}
\label{proof_of_genericquasiprojupbound}
From Equation \ref{Markov_ineq_norm}, it suffices to upper bound $\|(A-\omega)^m O_X  P^A_I \| $.
Abbreviate $\tilde{A}:= A-\omega\iden$. Introduce the multi-commutator
$$\ad_{\tilde{A}}^s(O_X ):=\underbrace{[\tilde{A},\ldots [\tilde{A},[\tilde{A}, O_X]]\ldots]}_{s\text{ times}}.$$
Consider the following identity,
\begin{align}
\tilde{A}^m O_X  P^A_I  = \sum_{s=0}^m \binom{m}{s} \ad_{\tilde{A}}^s(O_X )  \tilde{A}^{m-s}  P^A_I.
\end{align}
This shows that
\begin{align}
\| \tilde{A}^m O_X  P^A_I \|  &\le  \sum_{s=0}^m \binom{m}{s} \| \ad_{\tilde{A}}^s(O_X ) \| \cdot \| \tilde{A}^{m-s}  P^A_I \| \le  \sum_{s=0}^m \binom{m}{s} a_1^{m-s} \| \ad_{\tilde{A}}^s(O_X ) \| ,
\label{upp_tilde_A_^m O_X _P^A_I}
\end{align}
where we use $\| \tilde{A}^{m-s}  P^A_I \|=\|  (A-\omega)^{m-s}P^A_I \| \le a_1^{m-s}$.
The remaining task is to estimate the upper bound of $\| \ad_{\tilde{A}}^s(O_X ) \|= \| \ad_{A}^s(O_X ) \|$. This is done in the following claim.
\begin{claim} \label{cl:multicommutator_norm_quasi_local}
Let $A$ be an operator that is given by the form~\eqref{quasi_locality of A}.
Then, for an arbitrary operator $O_X$ which is supported on a subset $X$ ($|X|=k_0$), the norm of the multi-commutator $\ad^s_A(O_X)$ is bounded  from above by
\begin{align}
\| \ad_{A}^s (O_X) \| \le  \frac{(2a_1)^s(2k_0)^se^s}{\tau}\cdot\br{\frac{2}{a_2\tau}}^{\frac{2s}{\tau}}\cdot \br{s^{\tau_1}}^s \quad {\rm for} \quad  s\le m, \label{multicommutator_norm_quasi_local_ineq1}
\end{align}
where the constants $a_1$ and $a_2$ have been defined in Eq.~\eqref{quasi_locality of A}, and $\Gamma(\cdot)$ is the gamma function.
\end{claim}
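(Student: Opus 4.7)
The plan is to establish Claim \ref{cl:multicommutator_norm_quasi_local} by expanding the $s$-fold nested commutator as an explicit combinatorial sum over sequences of local terms in $A$, controlling the growth of support by a connectivity constraint, and then evaluating the resulting moment sums using the sub-exponential decay of the couplings $g_\ell$.

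First I would write $A=\sum_{Z} g_{|Z|}\, a_{Z}$, where the sum runs over balls $Z\subset \Lambda$, to obtain the multi-linear expansion
\begin{align*}
\ad_{A}^{s}(O_{X}) \;=\; \sum_{Z_{1},\ldots,Z_{s}} g_{|Z_{1}|}\cdots g_{|Z_{s}|}\,[a_{Z_{s}},[a_{Z_{s-1}},\ldots,[a_{Z_{1}},O_{X}]\cdots]].
\end{align*}
Setting $S_{0}=X$ and $S_{i}=S_{i-1}\cup Z_{i}$, a nested commutator vanishes unless $Z_{i}\cap S_{i-1}\neq \emptyset$ for every $i$. Combined with $\|[P,Q]\|\le 2\|P\|\|Q\|$ applied $s$ times, this yields $\|\ad_{A}^{s}(O_{X})\|\le 2^{s}\sum_{(Z_{i})\text{ connected to }X}\prod_{i}g_{|Z_{i}|}\|a_{Z_{i}}\|$.

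Next I would bound the combinatorial sum. For fixed sizes $\ell_{i}=|Z_{i}|$, choose a witness $x_{i}\in Z_{i}\cap S_{i-1}$: there are at most $|S_{i-1}|\le k_{0}+\ell_{1}+\cdots+\ell_{i-1}$ available witnesses, and the hypothesis $\zeta=1$ gives $\sum_{Z\ni x_{i},\,|Z|=\ell_{i}}\|a_{Z}\|\le 1$ for each witness. Summing first over $Z_{i}$ at fixed witnesses and then over witnesses gives
\begin{align*}
\|\ad_{A}^{s}(O_{X})\|\;\le\; 2^{s}\sum_{\ell_{1},\ldots,\ell_{s}\ge 1}\Bigl(\prod_{i=1}^{s}g_{\ell_{i}}\Bigr)\prod_{i=1}^{s}\Bigl(k_{0}+\sum_{j<i}\ell_{j}\Bigr).
\end{align*}
I would then dominate the combinatorial factor by $(k_{0}+L)^{s}$ with $L=\sum_{j}\ell_{j}$, split via $(k_{0}+L)^{s}\le 2^{s}(k_{0}^{s}+L^{s})$, and bound $L^{s}\le s^{s}\sum_{j}\ell_{j}^{s}$, so that the whole expression decouples into products of the one-dimensional moments $M_{r}:=\sum_{\ell}\ell^{r}g_{\ell}$. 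These moments are estimated from $g_{\ell}\le a_{1}e^{-a_{2}\ell^{\tau}}$ via Fact \ref{fact:integrals}, giving $M_{0}\le (2a_{1}/\tau)(a_{2}\tau)^{-1/\tau}$ and $M_{s}\le (2a_{1}/\tau)((s+1)/(a_{2}\tau))^{(s+1)/\tau}$. Assembling the contributions (with the $L^{s}$ term factorising to $s\cdot M_{0}^{s-1}M_{s}$ by symmetry) and collapsing the exponents $(a_{2}\tau)^{-(s-1)/\tau}\cdot (a_{2}\tau)^{-(s+1)/\tau}=(a_{2}\tau)^{-2s/\tau}$ produces the claimed factorisation, with the factor $s^{s\tau_{1}}$ coming from pairing the combinatorial $s^{s}$ with the dominant $(s+1)^{(s+1)/\tau}$ from $M_{s}$.

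The hard part will be producing the combinatorial bound tight enough to match the promised exponent $s^{s\tau_{1}}$ with $\tau_{1}=2/\tau-1$ exactly. The crude estimate $\prod_{i}m_{i}\le (k_{0}+L)^{s}$ is adequate when $\tau$ is small but degrades as $\tau\uparrow 1$, and will likely need to be refined by keeping the telescoping structure of the $m_{i}$'s — e.g.\ via $m_{i}\le k_{0}\prod_{j<i}(1+\ell_{j}/k_{0})$ followed by a factor-by-factor expansion of $\sum_{\ell}g_{\ell}(1+\ell/k_{0})^{s-i}$ — so that the $k_{0}$-dependence aligns with the advertised $(2k_{0})^{s}$ prefactor. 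The delicate balance between the size-dependent counting factor and the sub-exponential decay of $g_{\ell}$ is the analytic heart of the argument, and is precisely what distinguishes this moment-based estimate from the moment-generating-function approach of \cite{Arad_connecting_global_local_dist}, which fails to converge in the quasi-local regime $\tau<1$ for $D\ge 2$.
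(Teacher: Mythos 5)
The proposal follows the paper through the multi-linear expansion of $\ad_A^s(O_X)$ and the connectivity/witness bound
\[
\|\ad_A^s(O_X)\|\le 2^s\sum_{\ell_1,\ldots,\ell_s\ge 1}\Bigl(\prod_i g_{\ell_i}\Bigr)\prod_{i=1}^s\Bigl(k_0+\sum_{j<i}\ell_j\Bigr),
\]
which is exactly the paper's invocation of Lemma~3 of \cite{AnnFlouqet}. From this point onward the two arguments diverge, and your route does not reach the claimed bound.

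You bound $\prod_i (k_0+\sum_{j<i}\ell_j)\le(k_0+L)^s$ with $L=\sum_j\ell_j$, split via $(k_0+L)^s\le 2^s(k_0^s+L^s)$ and $L^s\le s^s\sum_j\ell_j^s$, and reduce to the moments $M_r=\sum_\ell\ell^r g_\ell$. The dominant term is $s^{s+1}M_0^{s-1}M_s$. Using $M_s\lesssim\bigl(\tfrac{s+1}{a_2\tau}\bigr)^{(s+1)/\tau}$, this contributes a factor of order $s^{s}(s+1)^{(s+1)/\tau}\approx s^{s(1+1/\tau)}$ in $s$, whereas the claimed bound is $s^{s\tau_1}=s^{s(2/\tau-1)}$. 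The two exponents coincide only at $\tau=1/2$; for $\tau>1/2$ your bound is strictly weaker (and at $\tau=1$ you get $s^{2s}$ versus the claimed $s^s$). You flag this yourself, but your assertion that the ``claimed factorisation'' emerges from pairing $s^s$ with $(s+1)^{(s+1)/\tau}$ is simply a miscount: $s+(s+1)/\tau\neq s(2/\tau-1)$ in general. So the main argument, as presented, does not establish the lemma in the range $\tau\in(1/2,1]$, which includes the 1D case.

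The proposed repair fares worse, not better. Writing $\prod_i m_i\le k_0^s\prod_{j=1}^{s-1}(1+\ell_j/k_0)^{s-j}$ does factorise the sum, but now $\ell_j$ appears with exponent $s-j$, and the aggregate degree $\sum_{j}(s-j)=\Theta(s^2)$ forces moments $M_1M_2\cdots M_{s-1}$, whose logarithm grows like $\sum_{r<s}(r/\tau)\log r=\Theta(s^2\log s)$. That gives an $s^{\Theta(s^2)}$ bound, not $s^{\Theta(s)}$, regardless of how carefully one expands $\sum_\ell g_\ell(1+\ell/k_0)^r$. The telescoping structure is therefore the wrong thing to preserve here.

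The step the paper takes that you do not is the crucial one: it collapses the product $\prod_j g_{k_j}$ into a \emph{single} sub-exponential factor. Setting $K=\sum_j k_j$ and using that $x\mapsto x^\tau$ is subadditive for $\tau\le 1$ (so $\sum_j k_j^\tau\ge K^\tau$) yields $\prod_j g_{k_j}\le a_1^{s-1}g_K$. One then bounds $\prod_i m_i\le k_0(k_0+K)^{s-1}$ (polynomial of degree $s-1$ in $K$, not a product over $j$), counts compositions of $K$ into $s$ parts via $\binom{K-1}{s-1}\le e^sK^s/s^s$, and is left with a \emph{single} one-dimensional moment $\sum_K K^{2s-1}e^{-a_2K^\tau}\lesssim\frac{2}{\tau}\bigl(\tfrac{2s}{a_2\tau}\bigr)^{2s/\tau}$, which delivers $s^{s\tau_1}$ exactly. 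In short, the subadditivity step trades a product of exponentials for a single one and simultaneously linearises the $K$-dependence of the combinatorial factor; that trade is what makes the moment computation close at degree $2s-1$ rather than blowing up. Without it, neither your crude $L^s\le s^s\sum_j\ell_j^s$ nor the telescoped product recovers the stated exponent.
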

\noindent
By applying the inequality~\eqref{multicommutator_norm_quasi_local_ineq1} to \eqref{upp_tilde_A_^m O_X _P^A_I}, we obtain 
\begin{align*}
\| \tilde{A}^m O_X  P^A_I \| &\le \sum_{s=0}^m \binom{m}{s} a_1^{m-s} \frac{(2a_1)^s(2k_0)^se^s}{\tau}\cdot\br{\frac{2}{a_2\tau}}^{\frac{2s}{\tau}}\cdot \br{s^{\tau_1}}^s \\
&\le \sum_{s=0}^m \binom{m}{s} (2a_1)^m \frac{(2ek_0)^m}{\tau}\cdot\br{\frac{2}{a_2\tau}}^{\frac{2m}{\tau}}\cdot \br{m^{\tau_1}}^m\\
&=  (4a_1)^m \frac{(2ek_0)^m}{\tau}\cdot\br{\frac{2}{a_2\tau}}^{\frac{2m}{\tau}}\cdot \br{m^{\tau_1}}^m=\frac{1}{\tau}\left [ 8e a_1k_0 [2/(a_2\tau)]^{2/\tau}m^{\tau_1}\right]^m.
\end{align*}
Therefore, setting $z=\omega+\theta$ in the inequality~\eqref{Markov_ineq_norm}, we obtain
\begin{align}
\|  P^A_{\ge \omega + \theta} O_X  P^A_I \| \le \frac{\|  \tilde{A}^m O_X  P^A_I \|}{\theta^m} 
&\le  \frac{1}{\tau}\left [ 8e a_1k_0 [2/(a_2\tau)]^{2/\tau}\frac{m^{\tau_1}}{\theta}\right]^m \\
&\le \frac{1}{\tau}\left( \frac{c_6 k_0 m^{\tau_1}}{\theta} \right)^m,
\end{align}
where $c_6:= 8ea_1 [2/(a_2\tau)]^{2/\tau}$. Let us choose $m=\tilde{m}$ with $\tilde{m}$ the minimum integer such that
\begin{align}
 \frac{c_6 k_0 \tilde{m}^{\tau_1}}{\theta} \le 1/e.
\end{align}
The above condition is satisfied by $\tilde{m}^{\tau_1} \le \theta/(ec_6 k_0)$, which implies
\begin{align}
\tilde{m} =\left\lfloor [\theta/(ec_6 k_0)]^{1/\tau_1} \right \rfloor,
\end{align}
where $\lfloor\cdot  \rfloor$ is the floor function.
From this choice, the claim concludes.

\subsection{Proof of Claim~\ref{cl:multicommutator_norm_quasi_local}}  \label{proof_multi_commutator_quasi_local}
Recall that we need to show, for an arbitrary operator $O_X$ which is supported on $k_0$ sites, the norm of the multi-commutator $\ad^s_A(O_X)$ is bounded  by
$$
\| \ad_{A}^s (O_X) \| \le  \frac{(2a_1)^s(2k_0)^se^s}{\tau}\cdot\br{\frac{2}{a_2\tau}}^{\frac{2s}{\tau}}\cdot \br{s^{\tau_1}}^s \quad {\rm for} \quad  s\le m.
$$
We start from the following expansion:
\begin{align}
\ad_{A}^s (O_X) = \sum_{k_1,k_2,\ldots,k_s} {g}_{k_1} {g}_{k_2}\cdots  {g}_{k_s}  [\bar{A}_{k_s},[\bar{A}_{k_{s-1}}, \cdots [ \bar{A}_{k_1},O_{X}]\cdots] . \notag 
  \end{align}
By using Lemma 3 in Ref.~\cite{AnnFlouqet} and setting $\zeta=1$ (see Definition \ref{quasi_locality of A}) we obtain
\begin{align}
&\|[[\bar{A}_{k_s},[\bar{A}_{k_{s-1}}, \cdots [ \bar{A}_{k_1},O_{X}]\cdots]  \| \le  2^{s} k_0 (k_0+k_1) (k_0+k_1+k_2) \cdots  (k_0+k_1+k_2+\cdots + k_{s-1}).
  \end{align}
Recall that we set $\|O_X\|=1$ and $|X|=k_0$.
The norm of $\ad_{A}^s (O_X) $ is bounded from above by
\begin{align}
&\| \ad_{A}^s (O_X) \|  \notag \\
\le&  \sum_{k_1,k_2,\ldots,k_s=1}^\infty 2^{s}  {g}_{k_1} {g}_{k_2}\cdots  {g}_{k_s} k_0 (k_0+k_1) (k_0+k_1+k_2) \cdots  (k_0+k_1+k_2+\cdots + k_{s-1})   \notag \\
=&\sum_{K\ge s}\ \sum_{\substack{k_1+k_2+\ldots+k_s=K\\k_1\ge1,k_2\ge1,\ldots,k_s\ge1}}2^{s}  {g}_{k_1} {g}_{k_2}\cdots  {g}_{k_s} k_0 (k_0+k_1) (k_0+k_1+k_2) \cdots  (k_0+k_1+k_2+\cdots + k_{s-1}) , 
\label{Inequality_quasi_ad_norm1}
  \end{align}
where the summation over $K$ starts from $s$ because each of $\{k_j\}_{j=1}^{s}$ is larger than $1$. Now, using the expression $\log[ {g}_{k}/a_1]=- a_2 k^{\tau}$ for $\tau\le1$, we have 
$\sum_{j=1}^{s} \log( {g}_{k_j}/a_1) \le \log( {g}_{k_1+k_2+\cdots +k_s}/a_1)$. This follows from $\sum_{j=1}^s k_j^{\tau} \geq \br{k_1+k_2+\cdots +k_s}^{\tau}$. Thus,  using $k_1+k_2+\cdots +k_s=K$, the summand in the inequality \eqref{Inequality_quasi_ad_norm1} is upper-bounded by
\begin{align}
& {g}_{k_1} {g}_{k_2}\cdots  {g}_{k_s} k_0 (k_0+k_1) (k_0+k_1+k_2) \cdots  (k_0+k_1+k_2+\cdots + k_{s-1}) \le a_1^s ( {g}_{K}/a_1)  k_0 (k_0+K)^{s-1} ,
\label{Inequality_quasi_ad_norm2}
  \end{align}
where we use the inequality $k_1+k_2+\cdots + k_j  \le  K$ for $j=1,2,\ldots,s-1$.
By combining the two inequalities~\eqref{Inequality_quasi_ad_norm1} and \eqref{Inequality_quasi_ad_norm2}, we obtain
\begin{align*}
\| \ad_{A}^s (O_X) \| 
\le&\sum_{K\ge s}\ \sum_{\substack{k_1+k_2+\ldots+k_s=K\\k_1\ge1,k_2\ge1,\ldots,k_s\ge1}}(2 a_1)^s  ( {g}_{K}/a_1)  k_0 (k_0+K)^{s-1} \\
\overset{(1)}\le & \sum_{K\ge s}\multiset{s}{K-s}(2 a_1)^s ( {g}_{K}/a_1)  k_0 (k_0+K)^{s-1} \\ 
=& \sum_{K\ge s}\binom{K-1}{s-1}(2 a_1)^s( {g}_{K}/a_1)  k_0 (k_0+K)^{s-1} \\ 
\overset{(2)}\le & (2a_1)^s(2k_0)^s\sum_{K\ge s}  \frac{e^sK^s}{s^s} ( {g}_{K}/a_1) (K)^{s-1} \\
\overset{(3)}=& \frac{(2a_1)^s(2k_0)^se^s}{s^s} \sum_{K\ge s} K^{2s-1} e^{-a_2 K^{\tau}}\leq  \frac{(2a_1)^s(2k_0)^se^s}{s^s} \sum_{K\ge 0} K^{2s-1} e^{-a_2 K^{\tau}}\\
\overset{(4)}\leq& \frac{(2a_1)^s(2k_0)^se^s}{s^s\tau}\cdot\br{\frac{2s}{a_2\tau}}^{\frac{2s}{\tau}}= \frac{(2a_1)^s(2k_0)^se^s}{\tau}\cdot\br{\frac{2}{a_2\tau}}^{\frac{2s}{\tau}}\cdot \br{s^{\frac{2}{\tau}-1}}^s.
\end{align*}
where in $(1)$, $\multiset{}{}$ denotes the multi-combination, namely $\multiset{n}{m}=\binom{n+m-1}{n-1}$, in $2$ we upper bound $\binom{K-1}{s-1}\leq \frac{e^sK^s}{s^s}$, $k_0+K\leq 2k_0 K$, in $(3)$ we use the sub-exponential form of $ {g}_{K}$ in Eq.~\eqref{quasi_locality of A} and in $(4)$ we use Fact \ref{fact:integrals}. Since $\tau_1=\frac{2}{\tau}-1$, this proves the statement.

\section{Quasi-locality of $\pmb{\widetilde{W}}$}
\label{append:Wquasi}
We here aim to obtain $(\tau, a_1, a_2, \zeta)$-quasi-locality of the operator $\quW$, where $\{\tau, a_1, a_2, \zeta\}$ defined in Definition~\ref{def:Quasi-local operators}. In particular, we will show that
$$
\big(\tau, a_1, a_2, \zeta\big)=\br{1/D, \orderof{1}, \orderof{1/\beta}, \orderof{\beta^{2D+1}}  \br{\max_{j\in \Lambda}v_j}}
$$
suffices to prove the quasi-locality of $\quW$. Recall the definition of $\quW$:
\begin{align*}
\quW= \int_{-\infty}^\infty f_\beta(t)\ e^{-iHt}\ W\ e^{iHt} dt,
\end{align*}
where
\begin{align*}
f_\beta(t)= \frac{2}{\beta\pi}\log \frac{e^{\pi |t|/\beta}+1}{e^{\pi |t|/\beta}-1}
\end{align*}
and
$$W=\sum_{i\in \Lambda} v_i E_i.$$ We write
$$\quW=\sum_i v_i \int_{-\infty}^\infty f_\beta(t)\ e^{-iHt}\ E_i\ e^{iHt} dt.$$
Abbreviate 
$$
\tilde{E}_i(t):= e^{-iHt}\ E_i\ e^{iHt}
$$
and recall that $\tilde{E}_i= \int_{\infty}^{\infty} f_{\b}(t) \tilde{E}_i(t)$. Moreover, (with some abuse of notation) let $B(r,i)$ denote the ball of radius $r$ such that: the centre of $B(r,i)$ coincides with the the center of the smallest ball containing $E_i$. We assume that $r$ ranges in the set $\{m_i,m_i+1,\ldots ,n_i\}$, where $m_i$ is the radius of the smallest ball containing $E_i$ and $n_i$ is the number such that $B(n_i,i)=\Lambda$. Define
$$\tilde{E}^r_i(t):= \Tr_{B(r,i)^c}[\tilde{E}_i(t)]\otimes \frac{\iden_{B(r,i)^c}}{\Tr[\iden_{B(r,i)^c}]}, \quad \tilde{E}^0_i(t)=0,
$$
i.e., $\tilde{W}^r_i(t)$ traces out all the qudits in $\tilde{E}_i(t)$ that are at outside the $B(r,i)$-ball around $\tilde{E}^r_i$. From \cite{BravyiHV06}, we have
$$
\|\tilde{E}_i(t)-\tilde{E}^r_i(t)\|
\leq \|E_i\| \min \Big\{1,  c_3 r^{D-1} e^{-c_4(r-\vL|t|)} \Big\}
$$
which in particular implies
$$
\|\tilde{E}_i^r(t)-\tilde{E}^{r-1}_i(t)\|
\leq 2\min \Big\{1,  c_3 r^{D-1}e^{-c_4(r-\vL|t|)} \Big\}, 
$$
where we use $\|E_i\|=1$, $\vL$ is the Lieb-Robinson velocity (as defined in Fact~\ref{fact:LRB}) and $c_3, c_4$ are constants. We note that the $2\min\{1,\cdot\}$ is derived from the trivial upper bound
$\|\tilde{E}_i^r(t)-\tilde{E}^{r+1}_i(t)\|\le~2$.
This allows us to write the following quasi-local expression:
$$\tilde{E}_i(t) = \sum_{r=m_i}^{n_i}\br{\tilde{E}^{r}_i(t)-\tilde{E}^{r-1}_i(t)}.$$
Using this, we can now write the quasi-local representation of $\tilde{E}_i$ as follows.
\begin{align*}
  \int_{-\infty}^\infty f_\beta(t)\tilde{E}_i(t)dt&= \int_{-\infty}^\infty f_\beta(t)\sum_{r=m_i}^{n_i}\br{\tilde{E}^{r}_i(t)-\tilde{E}^{r-1}_i(t)}.
\end{align*}
To see that it is quasi-local, observe that the term with radius $r$ has norm
\begin{align*}
&\int_{-\infty}^\infty f_\beta(t)\Big\|\tilde{E}^{r}_i(t)-\tilde{E}^{r-1}_i(t)\Big\| \notag \\
&\leq c_3 r^{D-1}  e^{-c_4 r}\cdot \int_{-r/\vL}^{r/\vL}e^{c_4\vL|t|-\pi |t|/\beta }dt + \int_{r/\vL}^\infty e^{-\pi |t|/\beta} dt 
+ \int_{-\infty}^{-r/\vL} e^{-\pi |t|/\beta} dt \notag \\
&\le 2c_3 r^{D-1} e^{-c_4r} \frac{e^{|c_4 \vL-\pi/\beta|r/\vL}-1}{|c_4 \vL-\pi/\beta|} + 2 \frac{e^{-\pi r/(\beta \vL)}}{\pi/\beta} \notag \\
&\le 2c_3 r^{D-1} (r/\vL)e^{-\min(\pi r/(\beta \vL),c_4r)}  + 2(\beta/\pi)e^{-\pi r/(\beta \vL)},
\end{align*}
where we use $(e^{xy}-1)/x \le ye^{xy}$ for $x\ge0$ and $y\ge0$. Define 
$$a_{B(r,i)}:=e^{\pi r/(2\beta \vL)}\int_{-\infty}^\infty f_\beta(t)\br{\tilde{E}^{r}_i(t)-\tilde{E}^{r-1}_i(t)}.$$
Here, the operator $a_{B(r,i)}$ is supported on the subset $B(r,i)$.
Then, from $|B(r,i)|=\orderof{r^D}$, the quasi-local representation of $\quW$ is given as
\begin{align*}
    \quW= \sum_{i\in \Lambda} v_i\sum_{r=m_i}^{n_i}e^{-\pi r/(2\beta \vL)}a_{B(r,i)} = \sum_{i\in \Lambda} \sum_{r=m_i}^{n_i} e^{-\orderof{|B(r,i)|^{\frac{1}{D}}}}v_i a_{B(r,i)}, 
\end{align*}
with $e^{-\orderof{|B(r,i)|^{\frac{1}{D}}}}$ decaying sub-exponentially with rate $\tau=1/D$, for all $i\in \Lambda$. We also obtain the parameter $\zeta$ in Eq.~\eqref{quasi_locality of A} by
\begin{align}
\sum_{r,j:B(r,j) \ni i}v_j\|a_{B(r,j)}\|
\leq \sum_{r} c_5 r^D\sum_{j: B(r,j) \ni i} v_je^{-\pi r/(2\beta \vL)}
&\leq \br{\max_{j\in \Lambda}v_j}\sum_{r} c_5 c_B r^{2D} e^{-\pi r/(2\beta \vL)}\notag \\
&\leq 2c_Bc_5\br{\frac{2D+1}{\pi/(2\beta \vL )}}^{2D+1} \br{\max_{j\in \Lambda}v_j}, \notag 
\end{align}
where we define $c_B$ such that $|B(r,j)| \le c_B r^D$ and we used Fact \ref{fact:integrals} (2) with $p=1$, $b=2D$ and $c=\pi/(2\beta \vL)$. This completes the representation and shows that $\quW$ is a $\br{1/D, \orderof{1}, \orderof{1/\beta}, \orderof{\beta^{2D+1}}  \br{\max_{j\in \Lambda}v_j}}$-quasi-local.

\section{Proof of Lemma \ref{lem:Wprimelowb}}
\label{append:lowerboundwprimei}
Recall that the goal in this section is to prove that for $\quW$ defined in Lemma~\ref{lem:variance lower bound on Hessian} we have
$$
\max_{i\in \Lambda} \Tr[(\quW_\locci)^2\eta] =   \frac{\Omega(1)}{\br{\beta \log(\beta)+1}^{2D+2}} \br{\max_{i\in \Lambda}v_i^2},
$$
where $\eta$ is the maximally mixed state. In this direction, we will now prove that
\begin{align} 
\max_{i\in \Lambda} \| \quW_\locci \sqrt{\eta} \|_F \ge   \frac{c_7}{\br{\beta \log(\beta)+1}^{D+1}}    \max_{i\in \Lambda}(|v_i|)  ,\label{main_uppe_bound_Wi'}
\end{align}
for a constant $c_7=\mathcal{O}(1)$. For convenience, let us define ${\rm argmax}_{i\in \Lambda} |v_i|=i_+$, or equivalently $|v_{i_+}|= \max_{i\in \Lambda} |v_i|$.
We denote the ball region $B(r,i_+)$ by $B_{r}$ for the simplicity, where $r$ is fixed later. Let us consider $\quW[B_r]$ which is defined as follows:
\begin{align} 
\quW[B_r]:= \int_{-\infty}^\infty f_\beta(t)e^{-iHt} W[B_r] e^{iHt} dt ,\quad W[B_r] :=\sum_{i\in B_r} v_i E_i.
\label{W'_explicit_form_B_r_i0}
\end{align}
Since $\quW[B_r]$ is obtained from $W[B_r]$ in an equivalent manner as $\quW$ is obtained from $W$, the following claim follows along the same lines as Theorem \ref{claim_W_norm_W'_norm}. We skip the very similar proof.
\begin{claim} \label{claim_Wr_norm_W'r_norm}
It holds that
\begin{align*} 
\|\quW[B_r]\|_F^2 \ge \frac{\mathcal{D}_\Lambda}{c_5[\beta \log(r)+1]^2}\sum_{i\in B_r} v_i^2  ,
\end{align*} 
where $c_5$ is a constant of $\orderof{1}$.
\end{claim}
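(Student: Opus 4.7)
The plan is to mirror the proof of Theorem \ref{claim_W_norm_W'_norm} almost verbatim, with the sole structural change that the global operator $W=\sum_i v_i E_i$ is replaced by its restriction $W[B_r]=\sum_{i\in B_r}v_iE_i$ to the ball $B_r$, and correspondingly the role of the number of terms $m$ is taken over by $|B_r|=\orderof{r^D}$. Since the operators $\{E_i\}_{i\in B_r}$ still form part of the same orthonormal basis, we retain the identity $\|W[B_r]\|_F^2=\mathcal{D}_\Lambda\sum_{i\in B_r}v_i^2$, and the quantity we need to lower bound is $\|\quW[B_r]\|_F^2$.

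First, I would introduce the same energy-window projectors $P_s^H$ of the full Hamiltonian $H$ and, exactly as in \eqref{eq:frobeniusonW'}, write
\begin{align*}
\|\quW[B_r]\|_F^2 = \sum_{s,s'\in \mathbb{Z}} \left\|\sum_{\substack{j:\Ene_j\in(s,s+1]\\k:\Ene_k\in(s',s'+1]}}(W[B_r])_{j,k}\,\tilde{f}_\beta(|\Ene_j-\Ene_k|)\,P_{s'}^H\ket{j}\bra{k}P_s^H\right\|_F^2.
\end{align*}
Applying the lower bound $\tilde{f}_\beta(\omega)\ge 1/(\tfrac{\beta}{2}|\omega|+1)$ term by term and relabeling with $s_0=s+s'$, $s_1=s-s'$ as in \eqref{W'upp/bound_1} yields
\begin{align*}
\|\quW[B_r]\|_F^2 \ge \sum_{s_0,s_1}\frac{\|P^H_{(s_0+s_1)/2}W[B_r]P^H_{(s_0-s_1)/2}\|_F^2}{\br{\tfrac{\beta}{2}(|s_1|+1)+1}^2}.
\end{align*}

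Next, following \eqref{W'upp/bound_2}--\eqref{W'upp/bound_3}, I would introduce a cutoff $\bar{s}$ and split the sum so that the terms with $|s_1|\le \bar{s}$ reconstruct $\|W[B_r]\|_F^2$ up to the prefactor $1/(\tfrac{\beta}{2}(\bar{s}+1)+1)^2$, and the tail is the quantity to bound. For the tail, the crucial input is that $W[B_r]$ is a sum of $|B_r|$ terms each of $\k$-local operator norm at most $1$; plugging this into Lemma~\ref{lem:AKL16} exactly as in \eqref{eq:usingAKL1} gives
\begin{align*}
\|P^H_{(s_0+s_1)/2}W[B_r]P^H_{(s_0-s_1)/2}\| \le C\,|B_r|\,e^{-\lambda(|s_1|-1-\kappa)}\max_{i\in B_r}|v_i|,
\end{align*}
and summing over $s_0$ and over $|s_1|>\bar{s}$ as in \eqref{W'upp/bound_4} yields a tail contribution bounded by $\orderof{1}\cdot\mathcal{D}_\Lambda |B_r|^2 e^{-2\lambda \bar{s}}\sum_{i\in B_r}v_i^2$.

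Finally, I would choose $\bar{s}=\orderof{\log|B_r|}=\orderof{D\log r}$ so that the tail is at most half of $\|W[B_r]\|_F^2$, at which point the surviving contribution is
\begin{align*}
\|\quW[B_r]\|_F^2 \ge \frac{\mathcal{D}_\Lambda}{c_5[\beta\log(r)+1]^2}\sum_{i\in B_r}v_i^2
\end{align*}
for an appropriate universal constant $c_5$, which is the claimed bound. I do not anticipate a real obstacle here: everything is a bookkeeping exercise within the framework already set up for Theorem \ref{claim_W_norm_W'_norm}. The only point to watch is that the exponential concentration bound \eqref{eq:usingAKL1} is applied with the correct ``number of terms'' $|B_r|$ rather than $m$, which is exactly what turns the logarithmic denominator from $\log m$ into $\log r$ (since $|B_r|=\orderof{r^D}$ and $D$ is an absorbed constant).
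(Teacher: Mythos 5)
Your proposal is correct and is precisely the ``very similar'' adaptation of Theorem~\ref{claim_W_norm_W'_norm} that the paper alludes to but omits: you replace $W$ by $W[B_r]$, apply Lemma~\ref{lem:AKL16} term by term to the at most $\orderof{|B_r|}=\orderof{r^D}$ operators in the sum (so $m$ becomes $|B_r|$ in the tail bound), and choose the cutoff $\bar{s}=\orderof{\log|B_r|}=\orderof{\log r}$, which turns the $\log m$ in the denominator into $\log r$. No gap.
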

Since the new operator $\quW[B_r]$ well approximates the property of $\quW$ around the site $i_+$, as long as $r$ is sufficiently large, we expect that $\quW[B_r]_{\locciplus}$ and $\quW_{\locciplus}$ are close to each other. The claim below makes this intuition rigorous: 
\begin{claim} \label{claim_W_i'_B_r_i0_W_i}
It holds that
\begin{align} 
\| \quW_{\locciplus} - \quW[B_r]_{\locciplus}  \| \le c_1 |v_{i_+}|\beta^D e^{-c_2 r/\beta},
\end{align}
where $c_1,c_2$ are constants of $\orderof{1}$.
\end{claim}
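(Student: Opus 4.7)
The plan is to reduce the claim to a site-by-site Lieb-Robinson estimate, trade the time integral against the tail of $f_\beta$, and finally resum over the lattice.

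First, by linearity of the map $O\mapsto O_\locciplus$ and the definitions of $\quW$ and $\quW[B_r]$,
\[
\quW_\locciplus-\quW[B_r]_\locciplus
=\sum_{i\notin B_r}v_i\int_{-\infty}^\infty f_\beta(t)\,\bigl(e^{-iHt}E_i e^{iHt}\bigr)_\locciplus\,dt.
\]
For any operator $O$, $O_\locciplus=O-\tr_{i_+}(O)\otimes\iden_{i_+}/d$ vanishes whenever $O$ acts trivially on site $i_+$. Consequently $\|O_\locciplus\|\le 2\,\|O-\tr_{i_+}(O)\otimes\iden_{i_+}/d\|$, and the latter quantity is controlled by Lieb--Robinson: writing the partial trace as a Haar average $\int d\mu(U_{i_+})\,U_{i_+}^\dagger O U_{i_+}$ gives
\[
\bigl\|\bigl(e^{-iHt}E_ie^{iHt}\bigr)_\locciplus\bigr\|
\le 2\max_{\|O_{i_+}\|\le 1}\bigl\|[e^{-iHt}E_i e^{iHt},\,O_{i_+}]\bigr\|
\le C\min\bigl\{1,\; e^{-c_4(d_i-\vL|t|)}\bigr\},
\]
where $d_i:=\dist(i,i_+)$ and I have applied Fact \ref{fact:LRB} (the polynomial prefactor $|\partial X|$ is absorbed into $C$).

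Next I integrate against $f_\beta$ by splitting at $|t|=d_i/(2\vL)$. Inside the light cone, the Lieb--Robinson factor is bounded by $e^{-c_4 d_i/2}$, so this piece contributes at most $C\,e^{-c_4 d_i/2}\,\|f_\beta\|_1=\mathcal{O}(e^{-c_4 d_i/2})$. Outside the light cone, I use the trivial bound $\|\bigl(\cdot\bigr)_\locciplus\|\le 2$ together with the exponential tail $|f_\beta(t)|\le \tfrac{4}{\beta\pi}e^{-\pi|t|/\beta}$ recorded in Lemma \ref{lem:variance lower bound on Hessian}, producing a contribution of order $e^{-\pi d_i/(2\vL\beta)}$. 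For $\beta=\Omega(1)$ the second term dominates, so
\[
\Bigl\|\int f_\beta(t)\,(\tilde E_i(t))_\locciplus\,dt\Bigr\|\le C'\,e^{-c_2' d_i/\beta}
\]
for constants $C',c_2'$ depending only on $\vL,c_4$.

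Finally, using $|v_i|\le |v_{i_+}|$ and the lattice count $|\{i:d_i=d\}|\le c_B d^{D-1}$, the triangle inequality over $i\notin B_r$ gives
\[
\|\quW_\locciplus-\quW[B_r]_\locciplus\|
\le C'|v_{i_+}|\sum_{d>r} c_B d^{D-1}\,e^{-c_2' d/\beta}
\le c_1|v_{i_+}|\beta^D e^{-c_2 r/\beta},
\]
where the tail sum is evaluated via Fact \ref{fact:integrals}(3) with the change of variable $u=d/\beta$, producing the advertised $\beta^D$ prefactor. The main subtlety is the balancing of the Lieb--Robinson exponential in $d-\vL|t|$ against the $f_\beta$ tail in $|t|/\beta$: done clumsily one recovers a decay rate proportional to $1/(\beta\vL+\pi/c_4)$, which still yields $e^{-c_2 r/\beta}$ as required but only after choosing the split point $|t|\sim d/(2\vL)$ so that the two regimes are balanced. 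A secondary bookkeeping point is extracting the $\beta^D$ prefactor cleanly from the discrete sum, which is why the statement is phrased with this polynomial overhead rather than as a pure exponential bound.
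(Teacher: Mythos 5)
Your proof is correct and follows essentially the same route as the paper: both express the difference as $\quW[B_r^\co]$ hit with the local-reduction map $(\cdot)_\locciplus$, pass to a commutator norm via the Haar-average representation, invoke the Lieb--Robinson bound (Fact~\ref{fact:LRB}) per site, and then integrate the exponential tail of $f_\beta$ against the Lieb--Robinson cone before resumming over the lattice to extract the $\beta^D$ prefactor. The only substantive difference is that you explicitly balance the two exponential regimes by splitting the time integral at $|t|\sim d_i/(2\vL)$, whereas the paper leaves this step implicit; your added care here is appropriate, since the decay rate $e^{-c_2 r/\beta}$ (rather than $e^{-\Omega(r)}$) indeed comes from the $f_\beta$ tail dominating, which as you note holds in the regime $\beta=\Omega(1)$ that the surrounding argument works in.
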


This claim implies that the contribution of all the terms in $\quW_{\locciplus}$ which are not included in the $B_r$ ball around $i_+$ decays exponentially with $r$. Hence, 
 \begin{align} 
\| \quW_{\locciplus}\|_F =  \| \quW_{\locciplus}  - \quW[B_r]_{\locciplus} + \quW[B_r]_{\locciplus} \|_F 
&\ge \| \quW[B_r]_{\locciplus} \|_F -  \| \quW_{\locciplus}  - \quW[B_r]_{\locciplus}\|_F  \notag \\
&\ge \| \quW[B_r]_{\locciplus} \|_F -  \sqrt{\mathcal{D}_\Lambda} \| \quW_{\locciplus}  - \quW[B_r]_{\locciplus}\| \notag  \\
&\ge \| \quW[B_r]_{\locciplus} \|_F -  \sqrt{\mathcal{D}_\Lambda} c_1 |v_{i_+}| \beta^D e^{-c_2 r/\beta}, 
\label{upper_bound_W_i_+_F_2}
\end{align}
where we use $\|\quW_{\locciplus}  - \quW[B_r]_{\locciplus}\|_F \le \sqrt{\mathcal{D}_\Lambda} \|\quW_{\locciplus}  - \quW[B_r]_{\locciplus}\|$ in the second inequality. Second, we consider the approximation of $\quW[B_r]$ by $\quW[B_r,B_{r'}]$ which are supported on $B_{r'}$:
 \begin{align} 
\quW[B_r,B_{r'}]: =\tr_{B_{r'}^\co} (\quW[B_r]) \otimes \frac{\iden_{B_{r'}^\co}}{d^{|B_{r'}^\co|}}.
\end{align}
Because of the quasi-locality of $\quW$, we expect $\quW[B_r,B_{r'}]\approx \quW[B_r]$ for $r' \gg r$. This is shown in the following lemma:
 \begin{claim} \label{claim_diff_W'B_r - W'B_r,B_r'}
The norm difference between $\quW[B_r,B_{r'}]$ and $ \quW[B_r] $ is upper-bounded as
\begin{align} 
\| \quW[B_r] - \quW[B_r,B_{r'}]\| \le c_3 |v_{i_+}| r^D \beta e^{-c_4  |r'-r|/\beta}
\label{main_claim_diff_W'B_r - W'B_r,B_r'}
\end{align}
and
\begin{align} 
\| \quW[B_r]_{\locciplus} - \quW[B_r,B_{r'}]_{\locciplus}\| \le 2c_3 |v_{i_+}| r^D \beta e^{-c_4  |r'-r|/\beta},
\label{main_claim_diff_W'B_r - W'B_r,B_r'iplus}
\end{align}
where $c_3,c_4$ are constants of $\orderof{1}$.
\end{claim}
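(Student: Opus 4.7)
}
The plan is to use the time-integral representation of $\quW[B_r]$ in Eq.~\eqref{W'_explicit_form_B_r_i0} and combine the quasi-locality in space supplied by the Lieb--Robinson bound (Fact \ref{fact:LRB}) with the temporal decay of $f_\beta(t)$. Writing the difference explicitly,
\begin{align*}
\quW[B_r] - \quW[B_r,B_{r'}] \;=\; \int_{-\infty}^{\infty} f_\beta(t)\,\Delta_{r,r'}(t)\,dt,
\end{align*}
where $\Delta_{r,r'}(t):=\tilde W_r(t)-\mathcal T_{B_{r'}}\!\bigl(\tilde W_r(t)\bigr)$, with $\tilde W_r(t):=e^{-iHt}W[B_r]e^{iHt}$ and $\mathcal T_{B_{r'}}(O):=\tr_{B_{r'}^{\co}}[O]\otimes \iden_{B_{r'}^{\co}}/d^{|B_{r'}^{\co}|}$. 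The goal is to bound $\|\Delta_{r,r'}(t)\|$ and then integrate against $|f_\beta(t)|$.

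For the spatial control, observe that $W[B_r]=\sum_{i\in B_r}v_iE_i$ is supported on $B_r$, and by the standard Bravyi--Hastings--Verstraete partial-trace estimate derived from Fact \ref{fact:LRB}, one has
\begin{align*}
\bigl\|\tilde W_r(t)-\mathcal T_{B_{r'}}\bigl(\tilde W_r(t)\bigr)\bigr\|
\;\le\; c\,\|W[B_r]\|\cdot|\partial B_r|\cdot\min\!\bigl\{1,\;e^{-c'(|r'-r|-\vL|t|)}\bigr\},
\end{align*}
for constants $c,c'$ depending only on $H$ and the lattice. Using $\|W[B_r]\|\le|v_{i_+}|\cdot|B_r|=\orderof{|v_{i_+}|\,r^D}$ and bounding $|\partial B_r|$ into the overall $r^D$ polynomial prefactor (any surplus polynomial factors will be absorbed into the eventual constants), I would split the integral at the light-cone time $T_\ast=|r'-r|/(2\vL)$ and estimate the two regimes separately.

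For $|t|\le T_\ast$ the Lieb--Robinson piece dominates: the integrand is at most $|f_\beta(t)|\cdot c\,|v_{i_+}|\,r^D\,e^{-c'|r'-r|/2}$, and since $\int|f_\beta(t)|dt\le 1$, this part contributes $\orderof{|v_{i_+}|\,r^D\,e^{-c'|r'-r|/2}}$. For $|t|>T_\ast$ we bound $\|\Delta_{r,r'}(t)\|\le 2\|W[B_r]\|=\orderof{|v_{i_+}|\,r^D}$ trivially and use the asymptotic decay $|f_\beta(t)|\le\frac{4}{\beta\pi}e^{-\pi|t|/\beta}$; the tail integral evaluates (via Fact \ref{fact:integrals}) to $\orderof{e^{-\pi|r'-r|/(2\vL\beta)}}$, with no $\beta$-dependent prefactor actually needed (the claimed factor of $\beta$ is a loose absorption that can be kept for uniformity with the other bounds). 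Combining the two pieces gives the first inequality \eqref{main_claim_diff_W'B_r - W'B_r,B_r'} with $c_4$ set by $\min\{c'/2,\pi/(2\vL)\}$ and $c_3$ determined by the implicit constants.

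The second inequality \eqref{main_claim_diff_W'B_r - W'B_r,B_r'iplus} follows by noting that the map $O\mapsto O_{\locciplus}=O-\tr_{i_+}[O]\otimes\iden_{i_+}/d$ is linear with operator norm at most $2$, so applying it to both terms inside the difference contributes only a factor of $2$. The main obstacle I anticipate is ensuring that the Lieb--Robinson estimate is applied with the correct boundary/prefactor bookkeeping so that the polynomial factor really is (up to the absorption into $c_3$) of the form $r^D$; a term-by-term application to each $E_i\subset B_r$, with Fact \ref{fact:LRB} used with $Y=B_{r'}^{\co}$, is the cleanest way to avoid an unwanted $|\partial B_r|$ blow-up. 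The rest of the calculation is routine.
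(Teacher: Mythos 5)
Your proposal is correct and follows essentially the same route as the paper: the paper also represents the partial trace $\quW[B_r,B_{r'}]$ as a Haar average over unitaries on $B_{r'}^{\co}$ (which is exactly the mechanism behind the "Bravyi--Hastings--Verstraete" estimate you invoke), applies the Lieb--Robinson bound term by term over the $E_j\subset B_r$ to avoid the spurious $|\partial B_r|$ factor you worried about, integrates against $f_\beta(t)$ using its exponential tail, and derives the second inequality from the first by the linearity and $\|\cdot\|\to 2\|\cdot\|$ boundedness of $O\mapsto O_{\locciplus}$. Minor points of style rather than substance: the paper does not explicitly split at a light-cone time $T_\ast$ but reaches the same exponent, and both you and the paper treat the precise power of $\beta$ in the prefactor loosely, which is harmless.
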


The claim reduces the inequality~\eqref{upper_bound_W_i_+_F_2} to 
 \begin{align} 
\| \quW_{\locciplus}\|_F &\ge \| \quW[B_r]_{\locciplus} \|_F -  \sqrt{\mathcal{D}_\Lambda} c_1 |v_{i_+}| \beta^D e^{-c_2 r/\beta} \notag \\
&\ge \| \quW[B_r,B_{r'}]_{\locciplus} \|_F -  \sqrt{\mathcal{D}_\Lambda}\| \quW[B_r]_{\locciplus} - \quW[B_r,B_{r'}]_{\locciplus}\| -  \sqrt{\mathcal{D}_\Lambda} c_1 |v_{i_+}| \beta^D e^{-c_2 r/\beta}  \notag \\
&\ge \| \quW[B_r,B_{r'}]_{\locciplus} \|_F -   \sqrt{\mathcal{D}_\Lambda} c_1 |v_{i_+}|\beta^D e^{-c_2 r/\beta}  -  2\sqrt{\mathcal{D}_\Lambda} c_3 |v_{i_+}|r^D \beta e^{-c_4  (r'-r)/\beta}.
\label{upper_bound_W_i_+_F_3}
\end{align}
Next, we relate the norm of $\quW[B_r,B_{r'}]_{\locciplus}$ to that of $\quW[B_r,B_{r'}]$ using Claim \ref{claim_global_norm_local_norm}. By recalling that $\quW[B_r,B_{r'}]_{\locciplus} $ is supported on $B_{r'}$, this gives  
 \begin{align} 
\| \quW_{\locciplus}[B_r,B_{r'}] \|_F \ge \frac{1}{|B_{r'}|}\| \quW[B_r,B_{r'}] \|_F,
\end{align}
which reduces the inequality~\eqref{upper_bound_W_i_+_F_3} to 
 \begin{align} 
\begin{aligned}
\| \quW_{\locciplus}\|_F \ge& \frac{1}{|B_{r'}|}\| \quW[B_r,B_{r'}] \|_F  -   \sqrt{\mathcal{D}_\Lambda} c_1 |v_{i_+}| \beta^D e^{-c_2 r/\beta}  -  2\sqrt{\mathcal{D}_\Lambda} c_3 |v_{i_+}| r^D  \beta e^{-c_4  (r'-r)/\beta}
\\
\ge& \frac{1}{|B_{r'}|}\| \quW[B_r] \|_F  -   \sqrt{\mathcal{D}_\Lambda} c_1 |v_{i_+}|\beta^D e^{-c_2 r/\beta}  -  (2+1/|B_{r'}|) \sqrt{\mathcal{D}_\Lambda} c_3 |v_{i_+}|r^D  \beta e^{-c_4  (r'-r)/\beta},
\label{upper_bound_W_i_+_F_4}
\end{aligned}
\end{align}
where in the second inequality we apply Claim~\ref{claim_diff_W'B_r - W'B_r,B_r'} to $\| \quW[B_r,B_{r'}] \|_F$. 
Finally, we use the lower bound given in Claim \ref{claim_Wr_norm_W'r_norm} and the inequality $\sum_{i\in B_r} v_i^2 \ge v_{i_+}^2$ (since $i_+ \in B_r $) to obtain 
\begin{align*} 
\|\quW[B_r]\|_F^2 \ge \frac{\mathcal{D}_\Lambda}{c_5[\beta \log(r)+1]^2} v_{i_+}^2 .
\end{align*} 
This reduces the inequality~\eqref{upper_bound_W_i_+_F_4} to the following: 
\begin{align*} 
\frac{\|\quW_{\locciplus}\|_F }{\sqrt{\mathcal{D}_\Lambda}}
&\ge  \frac{|v_{i_+}|}{c_8\br{r'}^D\sqrt{c_5}[\beta \log(r)+1]}  -    c_1 |v_{i_+}|\beta^D e^{-c_2 r/\beta}  -  3c_3 |v_{i_+}| r^D  \beta e^{-c_4 (r'-r)/\beta},
\end{align*}
where we used $|B_{r'}|\leq c_8 \br{r'}^D$, for some constant $c_8$. By choosing $r'=2r$ and $r=\orderof{1}\cdot D\beta \log(\beta)+1$, we have 
 \begin{align} 
\frac{\|\quW_{\locciplus}\|_F }{\sqrt{\mathcal{D}_\Lambda}} = \| \quW_{\locciplus} \sqrt{\eta} \|_F \ge  \frac{c_7|v_{i_+}|}{\br{\beta \log(\beta)+1}^{D+1}},
\end{align}
for some constant $c_7$. By using the inequality $\max_{i\in \Lambda}\| \quW_\locci\|_F \ge\| \quW_{\locciplus}\|_F $, we obtain the main statement.
This completes the proof. $\square$

\subsection{Proof of Claims~\ref{claim_W_i'_B_r_i0_W_i},~\ref{claim_diff_W'B_r - W'B_r,B_r'}}

\begin{proof}[Proof of Claim \ref{claim_W_i'_B_r_i0_W_i}] 
Recall that the goal is to prove 
$$
\| \quW_{\locciplus} - \quW[B_r]_{\locciplus}  \| \le c_1 |v_{i_+}|\beta^D e^{-c_2 r/\beta}
$$
 for constants $c_1,c_2$. We start from the integral representation of $\quW_{\locciplus}$:
\begin{align} 
\quW_{\locciplus}= \quW- \int \mu (U_{i_+}) U_{i_+}^\dagger \quW  U_{i_+},
\end{align}
where $\mu(U_{i_+})$ is the Haar measure for unitary operator~$U_{i_+}$ which acts on the $i_+$th site.
This yields  
 \begin{align} 
\quW_{\locciplus} - \quW[B_r]_{\locciplus}=  \quW[B^\co_r]- \int \mu (U_{i_+}) U_{i_+}^\dagger \quW[B^\co_r]  U_{i_+} .
\end{align}
We thus obtain 
 \begin{align} 
\| \quW_{\locciplus} - \quW[B_r]_{\locciplus}\| 
&\le  \sup_{U_{i_+}} \| [U_{i_+},  \quW[B^\co_r]] \| \notag \\
&\le  \int_{-\infty}^\infty f_\beta(t)  \sum_{j\in B_r^\co} |v_j|  \sup_{U_\locci}\| [U_{i_+},  e^{-iHt} E_j e^{iHt}]  \|  dt  \notag \\
&\le  |v_{i_+}|\sum_{j\in B_r^\co}  \int_{-\infty}^\infty f_\beta(t) \min( e^{-c(\dist(i_+,j) - \vL t)} ,1) dt  ,
\end{align}
where we use $|v_j |\le |v_{i_+}|$ and the Lieb-Robinson bound (Fact \ref{fact:LRB}) for the last inequality. Because the function $ f_\beta(t)$ decays as $e^{-\orderof{t/\beta}}$ and $\dist(i_+,j) \ge r$ for $j\in B_r^\co$, we have 
 \begin{align} 
 |v_{i_+}| \sum_{j\in B_r^\co}\int_{-\infty}^\infty f_\beta(t) \min( e^{-c(\dist(i_+,j) - \vL t)} ,1) dt  \le c_1 |v_{i_+}|\beta^D e^{-c_2 r/\beta}.
\end{align}
This completes the proof.
\end{proof}

\begin{proof}[Proof of Claim~\ref{claim_diff_W'B_r - W'B_r,B_r'}] Recall that we wanted to show
$$
\| \quW[B_r] - \quW[B_r,B_{r'}]\| \le c_3 |v_{i_+}| r^D \beta e^{-c_4  |r'-r|/\beta}.
$$
In order to prove this, we also utilize the integral representation of $\quW[B_r,B_{r'}]$:
\begin{align} 
\quW[B_r,B_{r'}]:=  \int \mu (U_{B_{r'}^\co}) U_{B_{r'}^\co}^\dagger \quW[B_r]   U_{B_{r'}^\co},
\end{align}
which yields an upper bound of $\| \quW[B_r] - \quW[B_r,B_{r'}]\|$ as 
 \begin{align} 
\| \quW[B_r] - \quW[B_r,B_{r'}]\| \le   \int \mu (U_{B_{r'}^\co})  \| [\quW[B_r] ,  U_{B_{r'}^\co}] \|.
\end{align}
From the definition~\eqref{W'_explicit_form_B_r_i0} of $\quW[B_r]$ and the Lieb-Robinson bound (Fact \ref{fact:LRB}), we obtain
 \begin{align} 
\int \mu (U_{B_{r'}^\co})  \| [\quW[B_r] ,  U_{B_{r'}^\co}] \|
&\le \int \mu (U_{B_{r'}^\co})   \int_{-\infty}^\infty f_\beta(t)\sum_{j\in B_r} |v_j|\cdot  \| [ e^{-iHt} E_j  e^{iHt} , U_{B_{r'}^\co}] \| \notag \\
&\le   |v_{i_+}| \int_{-\infty}^\infty f_\beta(t)  \sum_{j\in B_r}  \min(  e^{-c(r'-r- \vL t)}  ,1)  dt \notag \\
&\le  c'_3 |v_{i_+}| \cdot  |B_r| \cdot \beta e^{-c_4 r/\beta},
\end{align}
where $\partial B_{r'}^\co$ is the surface region of $B_{r'}^\co$. Since $ |B_r| \propto r^D$, we obtain the main inequality~\eqref{main_claim_diff_W'B_r - W'B_r,B_r'}. Now, since
$$\quW[B_r]_{\locciplus}= \quW[B_r]- \int \mu (U_{i_+}) U_{i_+}^\dagger \quW[B_r]  U_{i_+}$$
and
$$\quW[B_r, B_{r'}]_{\locciplus}= \quW[B_r, B_{r'}]- \int \mu (U_{i_+}) U_{i_+}^\dagger \quW[B_r, B_{r'}]  U_{i_+},$$
we obtain the second inequality~\eqref{main_claim_diff_W'B_r - W'B_r,B_r'iplus} due to
$$\|\quW[B_r]_{\locciplus}-\quW[B_r, B_{r'}]_{\locciplus}\|\leq 2\| \quW[B_r] - \quW[B_r,B_{r'}]\|.$$
This completes the proof. 
\end{proof}

\section{Proof of Theorem \ref{thm:lower_bound}}
\label{learning_lower_bound}
For convenience of the reader, we restate the theorem here.

\begin{thm}[Restatement of Theorem~\ref{thm:lower_bound}]
The number of copies $N$ of the Gibbs state needed to solve the $\HLP$ and outputs a $\hat{\mu}$ satisfying $\|\hat{\mu}-\mu\|_2\leq \varepsilon$ with probability $1-\delta$ is lower bounded by
$$
N\geq \Omega\Big(\frac{\sqrt{m}+\log(1-\delta)}{\beta\varepsilon} \Big).
$$
\end{thm}
\begin{proof}
In order to prove the lower bound, we consider learning the parameters $\mu \in \bbR^m$ of the following class of one-local Hamiltonians on $m$ qubits: 
$$
H(\mu)=\sum_{i=1}^m \mu_i \ketbra{1}{1}_i.
$$ 
Let $T_m:\{\mu\in \mathbb{R}^m_+: \sum_i \mu^2_i\leq 100\varepsilon^2\}$ be an orthant of the hypersphere of radius $\theta$ in $\mathbb{R}_+^m$. We have the following claim. 
\begin{claim}
There exists a collection of $2^m$ points in $T_m$, such that the $\ell_2$ distance between each pair is $\geq \varepsilon$.
\end{claim}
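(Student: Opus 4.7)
The plan is to establish the claim by a standard volume-comparison packing argument. I would construct $S \subset T_m$ greedily: start with an arbitrary point of $T_m$, and iteratively add any point of $T_m$ whose $\ell_2$-distance to all previously selected points is at least $\varepsilon$, terminating when no such point remains. By construction $S$ is $\varepsilon$-separated, and the termination condition ensures that the closed $\varepsilon$-balls centered at points of $S$ cover $T_m$ (otherwise any uncovered $y\in T_m$ could be added). Hence
\[
\mathrm{Vol}(T_m)\le |S|\cdot V_m\varepsilon^m,
\]
where $V_m$ is the volume of the unit Euclidean ball in $\mathbb{R}^m$.

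Next I would compute $\mathrm{Vol}(T_m)$ directly. Since $T_m$ is the intersection of the positive orthant $\mathbb{R}_+^m$ with the Euclidean ball of radius $10\varepsilon$, and the positive orthant occupies exactly a $2^{-m}$ fraction of any centered ball, we have $\mathrm{Vol}(T_m)=2^{-m}(10\varepsilon)^m V_m$. Plugging into the covering inequality gives $|S|\ge (10/2)^m=5^m\ge 2^m$. Picking any $2^m$ points of $S$ yields the required collection.

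There is no substantive obstacle here: the greedy process terminates in finitely many steps because each added point consumes a disjoint $\varepsilon/2$-ball of positive volume within the bounded region $T_m + B_{\varepsilon/2}$, and the only role of the constant $100$ in the definition of $T_m$ is to provide the factor of $10$ in the radius, which after absorbing the $2^{-m}$ loss from the orthant restriction leaves the comfortable $5^m$ margin over $2^m$. If a more constructive object were desired, one could alternatively instantiate $S$ via the Gilbert--Varshamov bound applied to a suitably scaled binary code inside $T_m$, but the volume argument above suffices for the existence statement.
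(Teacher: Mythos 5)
Your argument is correct but reaches the conclusion by a genuinely different route than the paper. You take the deterministic packing-covering approach: greedily build a maximal $\varepsilon$-separated subset $S$ of $T_m$, observe that maximality forces the $\varepsilon$-balls around $S$ to cover $T_m$, and then compare volumes, $\mathrm{Vol}(T_m)=2^{-m}(10\varepsilon)^m V_m \le |S| V_m \varepsilon^m$, to get $|S|\ge 5^m\ge 2^m$. The paper instead runs a probabilistic existence argument: sample $2^m$ points i.i.d.\ uniformly from $T_m$, bound the collision probability of a fixed pair by $\mathrm{Vol}(B(0,\varepsilon))/\mathrm{Vol}(T_m)=5^{-m}$, and apply a union bound over $(2^m)^2$ pairs to get a failure probability $(4/5)^m<1$. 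Both hinge on the same volume computation and yield the same $5^m$ margin; the trade-off is that your greedy construction is deterministic and in fact produces a separated set of size $\ge 5^m$ (stronger than needed), while the paper's version is shorter to state since it dispenses with the covering/termination bookkeeping that you handle in your closing paragraph. Either is a valid and standard instantiation of a sphere-packing lower bound, and your version is a complete proof of the claim as stated.
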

\begin{proof}
Pick $2^m$ points uniformly at random in $T_m$. By union bound, the probability that at least one pair is at a distance of at most $\varepsilon$ is at most $\br{2^m}^2$ times the probability that a fixed pair of points is at a distance of at most $\varepsilon$. But the latter probability is upper bounded by the ratio between the volume of a hypersphere of radius $\varepsilon$ and the volume of $T_m$, which is $\frac{\varepsilon^m}{\br{10\varepsilon}^m/2^m}= \frac{1}{5^m}$. Since $\br{2^m}^2\frac{1}{5^m}<1$, the claim concludes.
\end{proof}
Let these set of $2^m$ points be $S$. For some temperature $\beta>0$ and unknown $\mu\in S$, suppose $\mathcal{A}$ is an algorithm that is given $N$ copies of $\rho_\b(\mu)$ and, with probability $1-\delta$, outputs $\mu'$ satisfying 
$\|\mu'-\mu\|_2\leq \varepsilon$. We now use $\mathcal{A}$ to assign the estimated $\hat{\m}$ to exactly one of the parameters $\mu$. Once the learning algorithm obtains an output $\mu'$, we can find the closest point in $S$ (in $\ell_2$ distance) as our estimate of $\mu$, breaking ties arbitrarily. With probability $1-\delta$, the closest $\mu\in S$ to $\mu'$ is the correct $\mu$ since by the construction of $S$, $\|\mu'-\mu\|_2\leq \e$. Thus, the algorithm $\mathcal{A}$ can be used to solve the problem of estimating the parameters $\mu$ themselves (not only approximating it). We furthermore show that the number of samples required to estimate $\mu \in S$ is large using lower bounds in the quantum state discrimination. We will directly use the lower bound from \cite{Hayashi_Kawachi_Kobayashi06} (as given in \cite{Harrow_Winter_12}). Before we plug in their formula, we need to bound the maximum norm of $\r_{\b}(\m)$ for $\m \in S$. That is,
\begin{align*}
    \max_{\mu\in S} \{2^m\|\rho_\b(\mu)\|\}&= \max_{\mu\in S} 2^m\br{\bigotimes_{i=1}^m\left\|\frac{1}{1+e^{-\beta\mu_i}}\ketbra{0}{0}+\frac{e^{-\beta\mu_i}}{1+e^{-\beta\mu_i}}\ketbra{1}{1}\right\|}\\
    &= \max_{\mu\in S} \br{\bigotimes_{i=1}^m\left|\frac{2}{1+e^{-\beta\mu_i}}\right|}\\
    &=\max_{\mu\in S} \br{\bigotimes_{i=1}^m\left|\frac{2e^{\beta\mu_i}}{e^{\beta\mu_i}+1}\right|}\\
    &\leq\max_{\mu\in S} \br{\bigotimes_{i=1}^m\left|\frac{2e^{\beta\mu_i}}{2}\right|}\\
    &=\max_{\mu\in S} \br{e^{\beta \sum_{i=1}^m\mu_i}} \leq e^{\beta\sqrt{m}\sqrt{\sum_{i=1}^m\mu_i^2}}\leq e^{\beta\sqrt{m}\cdot 10\varepsilon},
\end{align*}
since $\sum_i \mu^2_i\leq 100\varepsilon^2$ for all $i\in S$. Thus, the lower bound for state identification of $\{H(\mu):\mu\in S\}$ in \cite[Equation~2]{Harrow_Winter_12} (cf. \cite{Hayashi_Kawachi_Kobayashi06} for the original statement) implies that 
$$
N\geq \frac{\log|S| + \log(1-\delta)}{\log\br{\max_{\mu\in S} \{2^m\|\rho(\mu)_\beta\|\}}}= \frac{m\log 2+\log(1-\delta)}{10\sqrt{m}\beta\varepsilon}=\mathcal{O}\Big(\frac{\sqrt{m}+\log(1-\delta)}{\varepsilon\beta}\Big).
$$
This establishes the lower bound.
\end{proof}

\end{document}